    \setlist[itemize]{noitemsep, topsep=0pt}
    \setlist[enumerate]{noitemsep, topsep=0pt}
\newtheorem{thm}{Theorem}
\newcommand{\bra}[1]{\langle#1|}
\newcommand{\ket}[1]{|#1\rangle}
\newcommand{\be}{\begin{equation}}
\newcommand{\ee}{\end{equation}}
\newcommand{\<}{\langle}
\renewcommand{\>}{\rangle}
\newcommand{\Tr}{{\rm Tr\,}}
\renewcommand{\vec}[1]{{\bf #1}}
\newtheorem{prop}{Proposition}
\newenvironment{proof}{\textbf{Proof}}{\hfill$\square$}
\newtheorem{lem}{Lemma}
\newtheorem{rem}{Remark}
\newtheorem{clm}{Claim}
\newcommand{\almaden}{IBM Quantum, IBM Research -- Almaden, San Jose CA, 95120, USA}
\newcommand{\cambridge}{IBM Quantum, MIT-IBM Watson AI lab,  Cambridge MA, 02142, USA}
\newcommand{\mitaff}{Department of Physics, Massachusetts Institute of Technology, Cambridge, MA, 02139, USA}
 \newcommand{\iasaff}{School of Natural Sciences, Institute for Advanced Study, Princeton, NJ, 08540, USA}
\newcommand{\google}{Google Quantum AI, Venice Beach, CA, 90291, USA}
\newcommand{\ourtitle}{Advantage of Quantum Neural Networks as Quantum Information Decoders}
\begin{document}

\title{Advantage of Quantum Neural Networks as Quantum Information Decoders}

\author{Weishun Zhong}
\email{wszhong@ias.edu}
\affiliation{Department of Physics, Massachusetts Institute of Technology, Cambridge, MA, 02139, USA}
\affiliation{IBM Quantum, MIT-IBM Watson AI lab, Cambridge MA, 02142, USA}
 \affiliation{School of Natural Sciences, Institute for Advanced Study, Princeton, NJ, 08540, USA}
\color{black}

\author{Oles Shtanko}
\email{oles.shtanko@ibm.com}
\affiliation{IBM Quantum, IBM Research, Almaden, San Jose CA, 95120, USA}

\author{Ramis Movassagh}
\email{movassagh@google.com}
\affiliation{IBM Quantum, MIT-IBM Watson AI lab, Cambridge MA, 02142, USA}\affiliation{Google Quantum AI, Venice, CA, 90291, USA}\

\begin{abstract}
A promising strategy to protect quantum information from noise-induced errors is to encode it into the low-energy states of a topological quantum memory device. 
However, readout errors from such memory under realistic settings is less understood. We study the problem of decoding quantum information encoded in the groundspaces of topological stabilizer Hamiltonians in the presence of generic perturbations, such as quenched disorder. 
We first prove that the standard stabilizer-based error correction and decoding schemes work adequately well in such perturbed quantum codes by showing that the decoding error diminishes exponentially in the distance of the underlying unperturbed code. We then prove that  Quantum Neural Network (QNN) decoders provide an almost quadratic improvement on the readout error.
Thus, we demonstrate provable advantage of using QNNs for decoding realistic quantum error-correcting codes, and our result enables the exploration of a wider range of non-stabilizer codes in the near-term laboratory settings.
\end{abstract}
\maketitle

Many physical quantum systems, including superconducting qubits \cite{bravyi2022future}, trapped ions \cite{bruzewicz2019trapped}, and cold atoms \cite{gross2017quantum}, are widely regarded as promising platforms for quantum computers. However, they are susceptible to noise, which prevents them from reliably storing and processing quantum information \cite{preskill2018quantum,bharti2021noisy}. One promising strategy to overcome these is to perform Quantum Error Correction (QEC) \cite{bravyi2022future}, which involves encoding information using a set of noise-insensitive states. Physically, quantum information can be encoded in the low-energy states of systems described by topological Hamiltonians \cite{kitaev2003fault,bravyi1998quantum,fowler2012surface,dennis2002topological,raussendorf2007fault,satzinger2021realizing,semeghini2021probing}. One of the most common examples of topological Hamiltonians are the stabilizer Hamiltonians, which are the sum of the parity check of a stabilizer code \cite{shor1995scheme, calderbank1996good, steane1996error, calderbank1998quantum, gottesman1997stabilizer}. 
However, stabilizer Hamiltonians are finely tuned. Thus, here we embark on an exploration of a more general case of Hamiltonians that are not an ideal stabilizer Hamiltonians, but are subject to local perturbations. We examine the limitations of quantum error correction under such conditions. We then propose an alternative, Hamiltonian-agnostic approach to decoding information for such codes based on quantum machine learning.

In the last decade, we have witnessed the spectacular success of machine learning with noisy classical data in a variety of tasks, such as classification, regression, and generative modeling \cite{vincent2010stacked,krizhevsky2017imagenet,kingma2013auto,goodfellow2016deep,goodfellow2020generative}. Recently, there has been a considerable amount of research devoted to the application of machine learning to the study of quantum many-body systems \cite{carleo2017solving,biamonte2017quantum,van2017learning,carrasquilla2017machine,carleo2019machine,wang2016discovering,zhang2017quantum,biamonte2017quantum,cong2019quantum, han2018unsupervised,gomez2022reconstructing,zhong2022many}, where the complexity of many-body interactions makes analytical or numerical treatments formidable. In particular, it has been demonstrated that parameterized quantum circuits \cite{benedetti2019parameterized,bharti2021noisy,abbas2021power}, also called Quantum Neural Networks (QNNs), can recognize quantum states in different phases and perform QEC \cite{cong2019quantum,locher2023quantum,cao2022quantum}. Despite its success in small-scale tasks, training quantum neural networks is known to suffer from the barren plateau problem, where gradients become exponentially small in the system size, and optimization resembles searching for a needle in a haystack \cite{mcclean2018barren}. Nevertheless, it has been recently demonstrated that the barren plateau problem can be circumvented by utilizing logarithmic-depth convolutional networks with local output observables \cite{cerezo2021cost}. Here, we propose to employ QNNs for decoding imperfect stabilizer codes that are low-energy states of topological Hamiltonians, and show their advantage over standard QEC.

In this work we introduce and study stabilizer Hamiltonians \cite{movassagh2020constructing} perturbed by a generic local quenched disorder. We establish rigorous results on decoding errors using standard QEC methods. In particular, we show that simple measurements of logical operators are not sufficient to achieve fault tolerance, and that a sequence of stabilizer measurements followed by error corrections are required to decode even noiseless input states. 
We show theoretically that QNNs can outperform standard QEC methods. We also study numerically how to construct such QNNs for minimal circuit architectures. \\
\begin{figure*}[ht]
    \centering
        \includegraphics[width=1.01\textwidth]{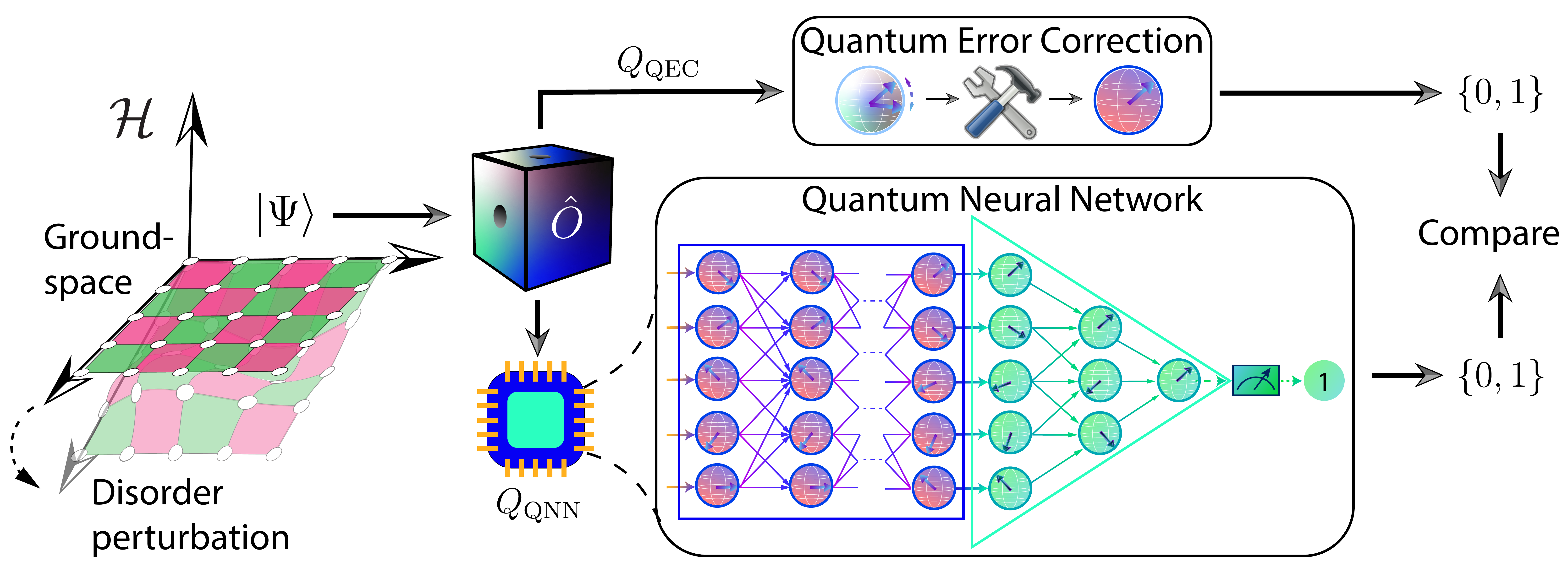}
        \caption{\textbf{Schematics of decoding realistic quantum codes.} From left to right: Stabilizer codes realized as ground states of physical Hamiltonians get perturbed and yield imperfect codewords. A state $\ket{\Psi}$ prepared by the imperfect codewords is presented to the magic box decoder $\hat{O}$ for subsequent decoding. (i) Top row: $\hat{O}=Q_{\text{QEC}}$. $\ket{\Psi}$ is decoded through standard QEC procedure. (ii) Bottom row: $\hat{O}=Q_{\text{QNN}}$. $\ket{\Psi}$ is decoded through a quantum neural network. After decoding, bit strings are sampled from the resulting probability distributions and the outcomes of the two decoding procedures are compared. }
\label{fig:schematics}
\end{figure*}

\textbf{Problem setup.}
We study a model of a quantum code spanning the groundspace of a generic topological Hamiltonian. We start our analysis with a general stabilizer quantum code $[[n,1,d]]$ of distance $d$, which encodes a single logical qubit using $n$ physical qubits. We define $\{S_a\}$ as a set of $n-1$ stabilizer operators with the condition that $[S_a, S_b] = 0$. Furthermore, we consider $X_L$ and $Z_L$ to be the logical Pauli-X and Pauli-Z operators of this stabilizer code, respectively. Next, we introduce $V$ as a generic $k$-local perturbation ($k<d$), which is a sum of spatially local terms acting on at most $k$ qubits. Then, we define the Hamiltonian as
\begin{align}
\label{eqn:hamiltonian}
    H =  H_0 + \lambda V, \qquad H_0 = -\sum_{a=1}^{n-1} S_a,
\end{align}
where $H_0$ is the non-perturbed stabilizer Hamiltonian, $V$ is a $k$-local perturbation defined above, and $\lambda$ is a small real parameter that sets the perturbation strength. 

When $\lambda=0$ , the groundspace of $H$ is doubly degenerate, spanned by the codewords of the original stabilizer code. 
If $0<\lambda< \Delta$, where $\Delta$ is the gap of the Hamiltonian $H_0$, the two near-degenerate ground states remain an approximate quantum code.
This means we can encode information using the two lowest eigenstates $\ket{\psi_0}$ and $\ket{\psi_1}$ of the perturbed Hamiltonian, $H \ket{\psi_i} = E_i \ket{\psi_i}$.
The energy splitting $\delta = |E_1-E_0|$ is exponentially small in the stabilizer code's distance $d$ \cite{bravyi2010topological}. Specifically, we encode the quantum information for a single logical qubit using quantum states of the form
\begin{align}
\label{eqn:psi}
    \ket{\Psi} = c_0 \ket{\Omega_0} + c_1 \ket{\Omega_1}   ,
\end{align}
where $|\Omega_i\> = \sum_j u_{ij}|\psi_j\>$ are the codewords, $u$ is a Hamiltonian-dependent $2\times 2$ unitary matrix, 
$c_0=\cos\theta$ and $c_1=e^{i\phi}\sin\theta$ represent the amplitudes of the encoded logical state on the Bloch sphere, with $\theta\in[0,\pi]$ and $\phi \in [0,2\pi]$ being the coordinates. There are two distinct sources of error that affect the precision of decoding: the perturbation $V$ that makes the codewords imperfect, and the potential noise at the input. We model such noise by stochastically applying $p$-local Pauli operators $P_\alpha$ to input states in Eq.~\eqref{eqn:psi}. The integer parameter $p>0$, which describes the locality of the noise error, serves here as a measure of the noise strength. 

\begin{figure*}[t!]
    \centering
        \includegraphics[width=1.0\textwidth]{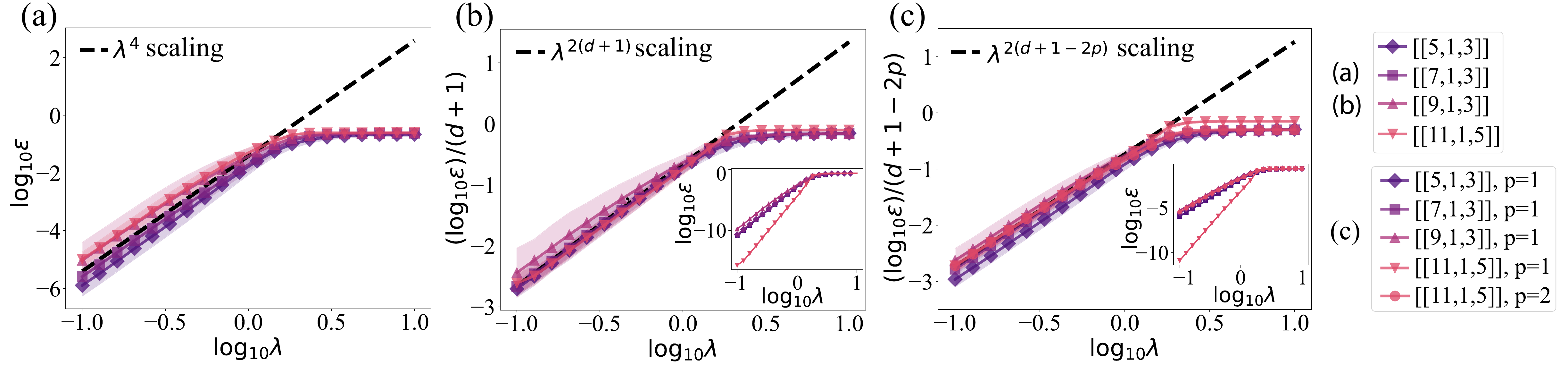}
        \caption{\textbf{Universality of standard decoding.} Performance of standard decoding strategies on perturbed stabilizer codes. On $y$-axis we abbreviate the generalization error in $X$ basis $\varepsilon_X$ by $\varepsilon$. The universal scaling laws  $\varepsilon\sim\lambda^\alpha$ are shown in dashed lines. Panel (a) show the error of measuring logical operator $X_L$ as a function of perturbation strength $\lambda$. Dashed line has scaling exponent $\alpha=4$, as predicted by Theorem~\ref{thm1}. Panel (b) shows the rescaled logarithm of the error for measuring  logical operator $X_L$ \textit{after} measuring syndromes and applying error correction. Dashed line has scaling exponent $\alpha = 2(d+1)$, consistent with Theorem~\ref{thm2}. Panel (c) is the same as panel (b) but for noisy input states. Dashed line has scaling exponent $\alpha = 2(d+1-2p)$, as predicted by Theorem~\ref{thm2}. Note that all the curves from different codes nearly collapse into a single, universal shape in (a)-(c) after dividing by the scaling exponent of the dashed line. Inset in (b)-(c) shows the raw unrescaled data which have different slopes depending on the code distance.}
\label{fig:QEC_panel}
\end{figure*}

Our goal is to construct a logical state \textit{decoder} that takes $|\Psi\>$ as input and returns either $0$ or $1$ after it has measured the qubit on logical basis $Q \in \{X,Y,Z\}$. 
We illustrate this decoding process in Fig.~\ref{fig:schematics}. Specifically, for the $Z$ decoder, the expectation of the qubit output is $f_Z(\theta,\phi) :=c_0^2-|c_1|^2 =  \cos2\theta$. Similarly, the $X$ and $Y$ decoders should return the expectations $f_X(\theta,\phi) = \sin2\theta\cos\phi$ and $f_Y(\theta,\phi) = \sin2\theta\sin\phi$, respectively. 

From a mathematical point of view, the decoding is a general measurement of a certain (non-local) operator $\hat{O}$. To evaluate the \textit{generalization error} of the decoding $\ket{\Psi}$, we use the mean square error as a metric to measure the deviation of the prediction from the ground truth,
\begin{align}
\label{eqn:L2_logical}
    \varepsilon_Q (\hat{O})  =  \mathbb E_\alpha \int d\theta d\phi \,\mu(\theta,\phi)  \Bigl(\<\Psi|P_\alpha \hat{O} P_\alpha|\Psi\>  -  f_Q(\theta,\phi)\Bigl)^2,
\end{align}
where $\mu(\phi,\theta)$ is a measure that depends on the input distribution, $\mathbb E_\alpha$ is expectation over input Pauli errors $P_\alpha$, and $f_Q(\theta,\phi)$ is the function that describes the desired output expectation value for the measurement in the basis $Q$.

\textbf{Limitations of standard decoding}. The conventional method of constructing a decoder depends on the information available about the system. First, we assume that we only know the logical operators $Q_L\in\{X_L,Y_L,Z_L\}$ for the unperturbed stabilizer code. Given this information, the only accessible action is to measure $Q_L$ directly. This approach could be used, for instance, in an experimental setting \cite{semeghini2021probing} where it is known that the logical operators must be a string of Pauli operators, but no efficient decoder is available. The error associated with this approach is given by the following theorem.

\begin{thm}
\label{thm1}(naive decoding, informal)
 For noiseless input and generic
 $V$, the generalization error of measuring the logical operator is $\varepsilon_Q(Q_L) = \Theta(\lambda^{4})$.
\end{thm}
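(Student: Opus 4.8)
The plan is to reduce the error to a question in degenerate perturbation theory for the gapped family $H=H_0+\lambda V$ and to show that the deviation of the measured expectation from its target is $O(\lambda^2)$, so that its square integrates to $\Theta(\lambda^4)$. Since the input is noiseless, $P_\alpha=I$ and the $\mathbb E_\alpha$ in \eqref{eqn:L2_logical} is trivial, so it suffices to control $\langle\Psi|Q_L|\Psi\rangle-f_Q(\theta,\phi)=\sum_{i,i'}c_i^\ast c_{i'}\big(\langle\Omega_i|Q_L|\Omega_{i'}\rangle-(Q_L)_{i i'}\big)$, where $(Q_L)_{i i'}$ are the matrix elements of the ideal logical operator in the ideal codeword basis $\{\ket{\bar 0},\ket{\bar 1}\}$. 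Because the gap $\Delta$ stays open for $\lambda<\Delta$, the two-dimensional ground-space projector $\Pi$ is analytic in $\lambda$, and Kato's intertwining unitary $W=e^{S}$ with $W P_0 W^\dagger=\Pi$ has an anti-Hermitian generator $S=\lambda S_1+O(\lambda^2)$ that is block-off-diagonal between the code space $P_0$ and its complement $Q_\perp=I-P_0$. I would fix the Hamiltonian-dependent unitary $u$ by $\ket{\Omega_i}=W\ket{\bar i}$, so that $\langle\Omega_i|Q_L|\Omega_{i'}\rangle=\langle\bar i|W^\dagger Q_L W|\bar{i'}\rangle$ and the target $f_Q$ is recovered at $\lambda=0$.

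The key structural input is that $V$ is $k$-local with $k<d$. Expanding each local term in Paulis of weight at most $k<d$, any such Pauli either anticommutes with some stabilizer, whence its compression $P_0(\cdot)P_0$ vanishes, or commutes with all of them and hence lies in the stabilizer group (it cannot be a nontrivial logical operator, whose weight is at least $d$), so it acts as $\pm P_0$. Summing gives $P_0 V P_0=c\,P_0$, a scalar: the first-order effective Hamiltonian does not split the logical degeneracy, which is what forces $S_1$ to be purely off-diagonal. Using the Hadamard expansion $W^\dagger Q_L W=Q_L+[Q_L,S]+\tfrac12[[Q_L,S],S]+\cdots$ together with $[Q_L,P_0]=0$ (logical operators preserve the code space) and the off-diagonal form of $S$, I would show $\langle\bar i|[Q_L,S_1]|\bar{i'}\rangle=0$: every term routes $Q_L|\bar{i'}\rangle$ back into the code space while $S_1$ sends one factor into $Q_\perp$, producing a vanishing overlap. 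Hence $\langle\Psi|Q_L|\Psi\rangle-f_Q=O(\lambda^2)$ for all $V$, and the upper bound $\varepsilon_Q(Q_L)=O(\lambda^4)$ follows from \eqref{eqn:L2_logical}.

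For the matching lower bound I would compute the $O(\lambda^2)$ coefficient explicitly from $\tfrac12\langle\bar i|[[Q_L,S_1],S_1]|\bar{i'}\rangle$ (the $[Q_L,S_2]$ piece again vanishing on the code space), which is generically nonzero because $S_1^2$ maps the code space to itself, so its compression $P_0 S_1^2 P_0$ survives. This defines a Hermitian $2\times2$ matrix $M^{(2)}$ on the logical space, and the deviation becomes $\langle\Psi|Q_L|\Psi\rangle-f_Q=\lambda^2\langle\tilde\Psi|M^{(2)}|\tilde\Psi\rangle+O(\lambda^3)$ with $\ket{\tilde\Psi}=c_0\ket{\bar 0}+c_1\ket{\bar 1}$. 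The main obstacle is genericity: I must argue that for a generic $V$ the matrix $M^{(2)}$ is nonzero, so that $\langle\tilde\Psi|M^{(2)}|\tilde\Psi\rangle$ is a nontrivial function of $(\theta,\phi)$ bounded away from zero on a positive-$\mu$-measure set, ruling out accidental cancellations that would depress the scaling below $\lambda^4$. Since $M^{(2)}$ depends polynomially on the matrix elements of $V$ between the code space and the excited band, its vanishing is a nongeneric (measure-zero) condition, and establishing this -- rather than the clean upper bound -- is the crux. Combining the two bounds yields $\varepsilon_Q(Q_L)=\Theta(\lambda^4)$.
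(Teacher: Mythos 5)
Your proposal follows the same two-step strategy as the paper's proof --- show that the deviation $\langle\Psi|Q_L|\Psi\rangle - f_Q$ is $O(\lambda^2)$ because the first-order term vanishes, then argue the second-order coefficient is generically nonzero so that squaring gives $\Theta(\lambda^4)$ --- but implements it with genuinely different machinery. The paper works with the Brillouin--Wigner series and needs Lemma~\ref{lem1} to choose a basis of the degenerate groundspace in which the series truncates with controlled error, plus explicit normalization bookkeeping; the first order dies there because the reduced resolvent $\tilde G_0$ annihilates the codespace. You instead use the Kato/Schrieffer--Wolff intertwiner $W=e^{S}$, which buys exact unitarity (no normalization corrections) and replaces Lemma~\ref{lem1} by analyticity of the ground-space projector; first order dies because $S_1$ is block-off-diagonal while $Q_L$ is block-diagonal. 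Two remarks on the details: (i) your appeal to $P_0VP_0\propto P_0$ (via $k<d$ and the stabilizer structure) is not actually needed for that step, since the SW generator is block-off-diagonal by construction regardless of the diagonal block of $V$; (ii) on genericity, the paper extracts more than you do from the explicit formula --- its second-order matrix is diagonal with entries $M_k=2\sum_{i>0}|\langle\Omega_k^{(0)}|V|\psi_{i^-}^{(0)}\rangle|^2/(E_0-E_i)^2$, a sum of squares, so non-vanishing is manifest and ``generic'' becomes the concrete condition $\|[V,S_a]\|>0$ of the formal Theorem~\ref{supp_thm1} (with Remark~\ref{rm1} identifying the degenerate cases), whereas your ``polynomial, hence measure-zero'' argument is softer, though defensible.

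The one genuine gap is in the lower bound: you prove it only for your preferred encoding. The codewords in Eq.~\eqref{eqn:psi} are defined up to a Hamiltonian-dependent $2\times2$ unitary $u$, and the paper requires the $\Omega(\lambda^4)$ bound to hold for an \emph{arbitrary} choice of that unitary (``the choice of $\alpha$ can be arbitrary''); you instead fix $|\Omega_i\rangle = W|\bar\imath\rangle$ and compute $M^{(2)}$ in that frame. This matters because a $\lambda$-dependent re-encoding such as $u=e^{i\lambda^2K}$ shifts the $2\times2$ deviation matrix by $i\lambda^2[K,Z]+O(\lambda^4)$ and could in principle cancel your second-order term, making the error $o(\lambda^4)$ for a smarter encoding. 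The repair is available inside your own computation: $i[K,Z]$ has vanishing diagonal in the logical $Z$ basis, while the diagonal entries of $M^{(2)}=\tfrac12 P_0[[Q_L,S_1],S_1]P_0$ are sign-definite sums of squares (leakage out of the codespace always pulls $\langle Q_L\rangle$ toward zero, so $(M^{(2)})_{00}\le 0\le (M^{(2)})_{11}$, generically strictly); since the functions $1$, $\cos2\theta$, $\sin2\theta\cos\phi$, $\sin2\theta\sin\phi$ are linearly independent in $L^2(\mu)$ for generic $\mu$, no re-encoding can cancel the surviving diagonal component. Without this step your argument establishes $\Theta(\lambda^4)$ only for one encoding, which is strictly weaker than the theorem as the paper interprets it.
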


Here we use the $\Theta$ (big-Theta) asymptotic notation to indicate the exact scaling of the error
with perturbation strength $\lambda$, while holding everything else fixed (number of qubits, distance, etc.). The formal version of the theorem along with the proof can be found in Supplementary Information (SI) Section~\ref{app:proof_thm1}.

We illustrate the result of Theorem 1 by numerically simulating the result of measurement for three different stabilizer codes of distance $d=3$: the five-qubit code \cite{laflamme1996perfect}, the Steane code \cite{steane1996error}, and Shor's code \cite{shor1995scheme}. We also consider the smallest distance-5 code, $[[11,1,5]]$ \cite{Grassl:codetables}. As for perturbation, in all cases $V = \sum_{i=1}^n V_i$, where $V_i$ are single-qubit matrices sampled from the Gaussian unitary ensemble and normalized so that its spectral norm is unity, $||V_i|| = 1$. The results are shown in Fig.~\ref{fig:QEC_panel}(a). 

\begin{figure*}[t!]
    \centering
        \includegraphics[width=1.01\textwidth]{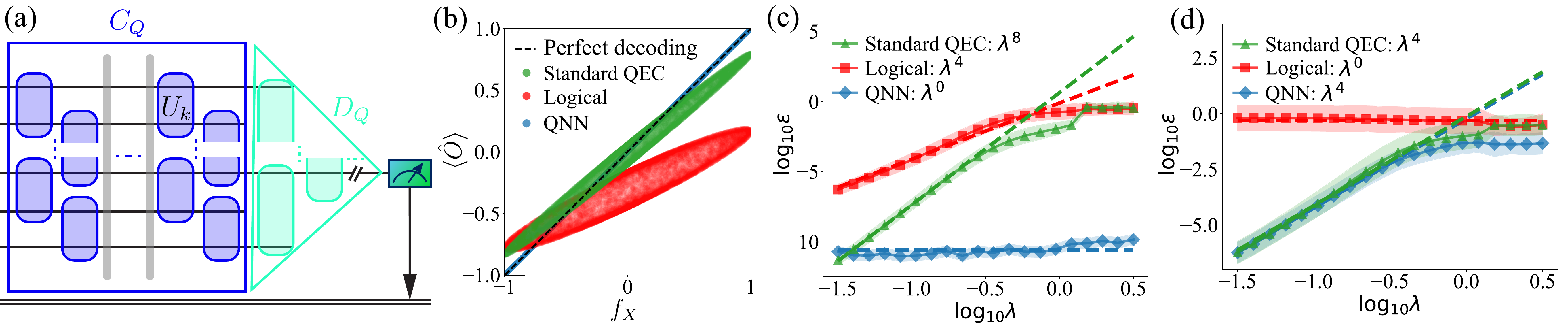}
        \caption{\textbf{QNN performance for 5-qubit code.} (a) Our QNN architecture consists of an error-correction circuit $C_Q$ and a decoding circuit $D_Q$. (b) Decoding noiseless states for perturbed 5-qubit code  at $\lambda=1$. (c)-(d) Example decoding of imperfect 5-qubit code using $X_L$ (red), $X_{\text{QEC}}$ (green) and $X_{\text{QNN}}$ (blue). (c) Decoding noiseless states. (d) Decoding noisy states corrupted by a randomly chosen single-qubit Pauli operator $P_\alpha \in \{I, X, Y, Z\}$ acting on a randomly chosen qubit. Panels (b)-(c) demonstrate that QNN significantly outperforms the standard approaches across nearly two orders of magnitude in $\lambda$. Panel (d) confirms that the scaling of standard QEC and QNN agree with our analytical predictions in Theorems \ref{thm2}-\ref{thm3}.}
\label{fig:QNN_panel}
\end{figure*}

Theorem~\ref{thm1} shows that the error vanishes as $\lambda^4$, independent of the distance $d$. Therefore, measuring logical operators for the unperturbed code does not help if the goal is to achieve fault tolerance by increasing the code distance with more physical qubits.
Moreover, simply measuring logical operators for noisy inputs ($P_\alpha \neq I$) will necessarily lead to a non-vanishing error for any $\lambda$, including $\lambda=0$. 

The decoding quality can be significantly improved if in additional to the logical operators $Q_L$, one also knows the stabilizer operators $S_a$ from the unperturbed Hamiltonian (Eq.~\eqref{eqn:hamiltonian}). In this scenario, one could measure the syndromes of the input state and then apply error correction based on the measured syndromes. This procedure with high probability transforms states from the codespace of the perturbed Hamiltonian into the codespace of original stabilizer code. The logical operators can then be measured to obtain the result. The combination of these two procedures is equivalent to measuring the operator
\begin{align}
    \label{eqn:logical_tilde}
    Q_{\text{QEC}}:= \mathcal{E}_{\rm QEC}^{\dagger} (Q_L),
\end{align}
where $\mathcal{E}_{\rm QEC}(\cdot)$ is the quantum error-correction map (see SI Section \ref{app:proof_thm2}), and $\mathcal A^\dag$ denotes the adjacent map for a superoperator $\mathcal A$, i.e., $\Tr(O_1\mathcal A(O_2)) = \Tr(\mathcal A^\dag(O_1) O_2)$. The associated error is given by the following theorem (the formal version, as well as the proof can be found in SI Section~\ref{app:proof_thm2}).
\begin{thm}(decoding with error correction, informal)
\label{thm2}
For noiseless input, the generalization error for standard QEC is $\varepsilon_Q(Q_{\rm QEC}) = O(\lambda^{2\lceil d/k\rceil})$. For noisy input ($p>1$) and generic $V$, the generalization error is $\varepsilon_Q(Q_{\rm QEC}) = \Theta(\lambda^{2\lceil(d+1-2p)/k\rceil})$. 
\end{thm}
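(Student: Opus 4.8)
The plan is to track how the $k$-local perturbation dresses the ideal codewords and to identify the order in $\lambda$ at which this dressing escapes the correcting power of the code. First I would set up Rayleigh--Schr\"odinger (or Schrieffer--Wolff) perturbation theory for the two near-degenerate ground states, writing $\ket{\psi_i}=U\ket{\psi_i^{(0)}}$ for a quasi-local dressing $U=I+\sum_{m\ge1}\lambda^m U_m$, where $\ket{\psi_i^{(0)}}$ are the unperturbed codewords. Since each factor of $V$ is $k$-local, the order-$m$ term $U_m$ is a sum of Pauli strings of weight at most $mk$, suppressed by a power of $\lambda/\del$. The central bookkeeping step is a \emph{syndrome-resolved} amplitude bound: if $\Pi_s$ projects onto the syndrome sector whose minimum-weight error representative has weight $w_s$, then $\norm{\Pi_s\ket{\psi_i}}=O(\lambda^{\lceil w_s/k\rceil})$, because at least $\lceil w_s/k\rceil$ applications of $V$ are needed to reach that sector. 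I would also record that the QEC observable factorizes over syndromes, $Q_{\rm QEC}=\mathcal{E}_{\rm QEC}^\dagger(Q_L)=\sum_s \Pi_s\,C_s^\dagger Q_L C_s\,\Pi_s$, where $C_s$ is the minimum-weight correction; as $C_s$ is Pauli, $C_s^\dagger Q_L C_s=\pm Q_L$, so each sector simply decodes to the correct or the flipped logical value.

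For noiseless input I would expand $\<\Psi|Q_{\rm QEC}|\Psi\>$ over syndrome sectors using this block-diagonal form. Nontrivial sectors with $w_s$ below the half-distance are decoded correctly, so their leading contribution reproduces $f_Q$ and cancels against it; this is exactly the cancellation that removes the $\lambda^2$ deviation responsible for Theorem~\ref{thm1}'s $\lambda^4$ error once syndromes are measured. The first genuine deviation comes from the component QEC cannot even detect: a \emph{logical} admixture in the code-space sector ($s=0$), which first appears in $U$ at order $\lambda^{\lceil d/k\rceil}$, since assembling a weight-$d$ logical string from $k$-local factors requires $\lceil d/k\rceil$ of them. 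Interfering linearly with the dominant identity component, it shifts $\<\Psi|Q_{\rm QEC}|\Psi\>$ by $O(\lambda^{\lceil d/k\rceil})$; a competing mis-correction of half-distance sectors ($w_s\simeq d/2$) enters only quadratically, at order $\lambda^{2\lceil d/(2k)\rceil}\ge \lambda^{\lceil d/k\rceil}$. Squaring and integrating against $\mu(\theta,\phi)$ then gives $\varepsilon_Q(Q_{\rm QEC})=O(\lambda^{2\lceil d/k\rceil})$; this is $O$ rather than $\Theta$ because the logical-admixture coefficient can vanish for fine-tuned $V$.

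For noisy input the effective observable becomes $P_\alpha Q_{\rm QEC}P_\alpha$, averaged over $\alpha$. A weight-$p$ Pauli $P_\alpha$ erodes the correcting margin: minimum-weight decoding now returns the wrong logical coset as soon as the combined pattern $P_\alpha E$ (noise times a dressing string $E$) exceeds the half-distance, i.e. once $\text{weight}(E)$ surpasses roughly $(d-2p)/2$. Such a mis-corrected term lives in a nontrivial syndrome sector distinct from the one produced by $P_\alpha$ alone, so it cannot interfere with any $O(1)$ amplitude and enters the expectation only through its squared norm. Because that squared norm is manifestly nonnegative, it cannot be cancelled by other terms or by the average over $\alpha$, so for generic $V$ it supplies both the upper and the matching lower bound. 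Tracking the dressing order $\lceil\,\cdot\,/k\rceil$ at which $\text{weight}(E)$ first crosses the noise-reduced failure threshold, and squaring, yields $\varepsilon_Q(Q_{\rm QEC})=\Theta(\lambda^{2\lceil (d+1-2p)/k\rceil})$; the precise numerator $d+1-2p$ comes from the strict-majority failure boundary of minimum-weight decoding for a distance-$d$ logical offset by the $p$ corrupted sites.

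The hard part will be making the perturbative bookkeeping fully rigorous: I must show the Rayleigh--Schr\"odinger/Schrieffer--Wolff series converges for $0<\lambda<\del$ and that the dressing stays quasi-local with the stated weight-versus-order accounting, so the syndrome-resolved amplitude bound holds uniformly in system size. Equally delicate is pinning the exponent through the exact half-distance failure threshold of minimum-weight decoding --- the combinatorics turning ``weight exceeding $d/2$, offset by $p$ noisy qubits'' into the clean numerator $d+1-2p$ --- and, for the $\Theta$ claim, proving the corresponding admixture amplitude is bounded below for generic $V$ so that no accidental cancellation survives the average over $P_\alpha$ and the integral against $\mu(\theta,\phi)$.
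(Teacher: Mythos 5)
Your overall strategy is genuinely the same as the paper's: a perturbative dressing of the codewords with Pauli-weight-versus-order bookkeeping (the paper's Eq.~\eqref{eqs:psi_k}), the QEC map written as syndrome projection followed by a minimum-weight Pauli correction (Eq.~\eqref{eqs:err_corr_mapx}), a classification of contributions by whether the accumulated string acts on the codespace as identity or as a logical, and the bound $2\mathcal S(C_\beta)\le d-1$ combined with the noise weight $p$ to produce the numerator $d+1-2p$ (the paper's ``Scenario 3''). Your noiseless argument (logical admixture interfering linearly with the $O(1)$ identity component at order $\lambda^{\lceil d/k\rceil}$, mis-correction entering only quadratically) matches the paper's Scenario 2 versus Scenario 3 accounting.

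However, your justification of the $\Theta$ (lower) bound has a genuine gap. You claim a mis-corrected component ``enters the expectation only through its squared norm,'' which is ``manifestly nonnegative'' and hence ``cannot be cancelled.'' This is false as stated. Within a fixed nontrivial syndrome sector, \emph{several} distinct dressing strings of the same leading order $\lceil(d+1-2p)/(2k)\rceil$ coexist, and their cross terms enter the sector expectation $\<\chi_s|Q_L|\chi_s\>$ at exactly the same order $\lambda^{\lceil(d+1-2p)/k\rceil}$ as the diagonal squared norms, with arbitrary signs and phases. The per-sector deviation is governed by the quadratic form $Q_L-f_Q(\theta,\phi)$, which is sign-indefinite unless $f_Q=\pm1$; that positivity trick is precisely what the paper reserves for Proposition~\ref{prop1}, under exactly that extra hypothesis on the labels, and is not available in Theorem~\ref{thm2}. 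Moreover, even a nonzero deviation can integrate to zero against $\mu(\theta,\phi)$ after squaring, for fine-tuned inputs. The paper resolves this not by a positivity argument but by isolating the leading coefficient matrix $W^{(d-2p+1)}$ and \emph{assuming} the genericity condition $\<W^{(d-2p+1)}\>_{\mu(\theta,\phi)}>0$ (Eq.~\eqref{eq:additional_assumption}, with the $\min_{\theta_0,\phi_0}$ in the definition of $\<\cdot\>_\mu$ also absorbing the freedom in the codeword-pairing unitary $u$, a freedom your write-up never addresses). A secondary issue, which you flag yourself, is the perturbation series: naive Rayleigh--Schr\"odinger/Brillouin--Wigner expansions contain the exponentially small ground-space splitting in denominators; the paper's Lemma~\ref{lem1} is exactly the device that rotates the degenerate basis so only the gapped propagator $\tilde G_0$ appears, with a controlled $O(\lambda^d)$ truncation error, and your proof would need an equivalent of it (e.g.\ a full Schrieffer--Wolff construction) to make the weight-versus-order bound legitimate.
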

Here we use $O$ notation to indicate asymptotic upper bound as $\lambda \to 0$ and $\lceil x\rceil$ is a ceiling function that rounds up any non-integer $x$. As a corollary, for states unaffected by noise ($p=0$), turning on error correction restores dependence on the code distance in the error scaling. In particular, for 1-local perturbations, the $\Theta(\lambda^4)$ constant scaling of Theorem~\ref{thm1} becomes $ O(\lambda^{2d})$.

We illustrate this result in Fig.~\ref{fig:QEC_panel}(b) and (c), where we present numerical simulations of the codes and the perturbation $V$ considered earlier. For the noiseless case  (Fig.~\ref{fig:QEC_panel}(b)), we find that for the codes we study, all curves have the same $\lambda^{2(d+1)}$ scaling, consistent with the $O(\lambda^{2d})$ upper bound in Theorem~\ref{thm2}. To illustrate this feature, we show the collapsed version of the original data (shown in the inset), $\log(\varepsilon_Q (Q_{\text{QEC}}))/(d+1)$ versus $\lambda$. The case where the state $\ket{\Psi}$ is affected by $p$ one-qubit Pauli errors is shown in Fig.~\ref{fig:QEC_panel}(c). Similar to the noiseless case, we show that the values of $\log(\varepsilon_Q (Q_{\text{QEC}}))/(d+1-2p)$ collapse to the same slope, agrees with our tight bound in Theorem~\ref{thm2}.

The standard decoding approaches considered in Theorem \ref{thm1}-\ref{thm2} cannot be applied if one has no knowledge about the unperturbed code. In such scenario, variational methods such as the QNN can be utilized. In the following, we do {\it not} assume any knowledge of the perturbed Hamiltonian $H$ beyond the access to the two spanning codewords of its ground space. Nevertheless, as we show below, QNNs can outperform the standard QEC.

\textbf{Decoding using QNN}. Motivated by recent  machine learning  enabled studies of quantum many-body systems~\cite{cong2019quantum, huang2021provably,huang2022quantum}, we propose to use QNNs for decoding non-stabilizer codes. Essentially, QNN is a unitary circuit that takes the encoded state as input and returns the result by measuring the output qubit. If the QNN transformation is represented by the unitary operator $U_Q$ and $Z_0$ is the Pauli-$Z$ operator of the output qubit, the whole procedure is equivalent to measuring the operator
\begin{equation}
\label{eqn:Xbar}
     Q_{\rm QNN} :=  U_Q^{\dagger}Z_0U_Q.
\end{equation}
In the absence of noise in the input, QNN decoding, unlike standard QEC, can achieve arbitrary precision at sufficiently large depths. This is because, for arbitrary pair of codewords, there exists a unitary \(U_Q\) that transforms them into a product state with different states of the target qubit.
 In the presence of noise, QNN provides an almost quadratic error improvement over the standard QEC result in Theorem~2, as detailed in the following theorem.
\begin{thm} \label{thm:qnn_dec}
\label{thm3} (QNN decoding, informal)
For noisy inputs, one could always find a QNN such that its generalization error for decoding is $\varepsilon_Q(Q_{\rm QNN}) = O(\lambda^{4\lceil(d-2p)/k\rceil})$.
\end{thm}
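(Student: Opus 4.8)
The plan is to build the decoder in two conceptual layers and to show that, for the optimal choice, the residual decoding error is \emph{quadratically} small in the relevant Knill--Laflamme violation, the same quantity that governs the standard QEC error of Theorem~\ref{thm2}. First I would record the structure of the corrupted logical subspaces. Writing $|\Omega_i\rangle = W|\bar\Omega_i\rangle$, where $|\bar\Omega_i\rangle$ are the unperturbed stabilizer codewords and $W=e^{iS}$ is the quasi-adiabatic continuation (Schrieffer--Wolff) unitary with quasi-local generator $S=O(\lambda)$, the central estimate is the off-diagonal overlap
\begin{align}
\label{eqn:KLviol}
\langle\Omega_i| P_\alpha^\dagger P_\beta |\Omega_j\rangle = \langle\bar\Omega_i| W^\dagger P_\alpha^\dagger P_\beta W |\bar\Omega_j\rangle , \quad i\neq j .
\end{align}
A weight-counting argument then shows this vanishes to high order: $P_\alpha^\dagger P_\beta$ has weight at most $2p$, and conjugation by $W$ dresses it, enlarging its support by $O(k)$ at each order in $\lambda$ (controlled rigorously by a Lieb--Robinson bound on the quasi-local $S$). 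A logical, i.e.\ weight-$\geq d$, component, which is what connects the $i=0$ and $j=1$ sectors, can therefore appear only at order $\lambda^{\lceil(d-2p)/k\rceil}$. Hence the perturbed code violates the Knill--Laflamme conditions only at order $\gamma := \lambda^{\lceil(d-2p)/k\rceil}$.

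Second, I would construct the QNN. For noiseless inputs a depth-growing circuit implements $W^\dagger$ to arbitrary accuracy and then decodes the resulting near-stabilizer state, so the only irreducible obstruction is the mutual non-orthogonality of the corrupted subspaces $\mathcal V_i = \mathrm{span}\{P_\alpha|\Omega_i\rangle\}$ over correctable $\alpha$. Their cross-Gram block has entries of size \eqref{eqn:KLviol}, so the principal angles between $\mathcal V_0$ and $\mathcal V_1$ are $O(\gamma)$ (the code and the number of correctable $P_\alpha$ are held fixed). I would then take $U_Q$ to be the unitary that simultaneously rotates $\mathcal V_0 \to |0\rangle\otimes\mathcal H$ and $\mathcal V_1 \to |1\rangle\otimes\mathcal H$ as nearly as these angles permit, choosing the alignment so that, for each correctable $\alpha$, the images $U_Q P_\alpha|\Omega_0\rangle$ and $U_Q P_\alpha|\Omega_1\rangle$ share a common ancilla state up to $O(\gamma^2)$; this matched-ancilla condition is what the $X$ and $Y$ decoders require, while for $Z$ only the leakage onto the wrong logical value matters.

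Third, and this is where the quadratic gain over Theorem~\ref{thm2} appears, I would expand the decoded expectation value. Because $Z_0$ (resp.\ $X_0$) is evaluated on a state whose leakage onto the wrong logical value is only $O(\gamma)$ in amplitude, the expectation deviates from its ideal value at \emph{second} order in that amplitude: schematically $\langle Z_0\rangle = \cos 2\omega_\alpha = 1 - O(\omega_\alpha^2)$ with $\omega_\alpha = O(\gamma)$, so that
\begin{align}
&\langle\Psi|P_\alpha U_Q^\dagger Z_0 U_Q P_\alpha|\Psi\rangle - f_Q(\theta,\phi) \notag\\
&\qquad = O(\gamma^2) = O\!\left(\lambda^{2\lceil(d-2p)/k\rceil}\right),
\end{align}
uniformly in $\alpha$ and in $(\theta,\phi)$. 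Substituting into the mean-square error \eqref{eqn:L2_logical} and squaring gives $\varepsilon_Q(Q_{\rm QNN}) = O(\lambda^{4\lceil(d-2p)/k\rceil})$. The contrast with QEC is exactly that a fixed stabilizer decoder miscorrects at first order in its confusion amplitude, whereas the optimal $U_Q$ cancels that linear term and is penalized only at second order.

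The hard part will be Step three's uniform quadratic bound together with the matched-ancilla construction of Step two: one must show that a single $U_Q$ brings \emph{every} corrupted pair into alignment with only $O(\gamma^2)$ residual, rather than incurring the $O(\gamma)$ errors that would merely reproduce the QEC exponent. I expect this to follow from a cosine--sine (CS) decomposition of the cross-Gram block together with the fact that the diagonal blocks $\langle\Omega_i|P_\alpha^\dagger P_\beta|\Omega_i\rangle$ stay full rank and $O(1)$-conditioned for small $\lambda$; but making the $O(\gamma^2)$ uniform over all correctable $\alpha$ and over the Bloch sphere, while also controlling the $W^\dagger$-truncation so that it remains below $\lambda^{4\lceil(d-2p)/k\rceil}$, is the delicate bookkeeping the formal proof must carry out.
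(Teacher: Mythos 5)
Your overall architecture matches the paper's: first establish that the perturbed code violates the Knill--Laflamme conditions only at order $\lambda^{q}$ with $q=\lceil(d-2p)/k\rceil$ (your Step 1 plays the role of the paper's Lemma~\ref{KLC_lemma}, with Schrieffer--Wolff plus weight counting replacing the Brillouin--Wigner expansion), then build a unitary that maps the two corrupted error subspaces into opposite logical sectors so accurately that every decoded expectation value is off by $O(\lambda^{2q})$, whence the mean-square error is $O(\lambda^{4q})$. But the step that actually carries the theorem has a gap. Your claimed mechanism for the quadratic gain, $\langle Z_0\rangle=\cos2\omega_\alpha=1-O(\omega_\alpha^2)$, is valid only for the diagonal matrix elements, i.e.\ for inputs whose target value is $\pm1$. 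For a general Bloch-sphere input $\cos\theta\,P_\alpha|\Omega_0\rangle+e^{i\phi}\sin\theta\,P_\alpha|\Omega_1\rangle$, the decoded expectation contains the cross term $2\,\mathrm{Re}\bigl[e^{i\phi}\cos\theta\sin\theta\,\langle\Omega_0|P_\alpha U_Q^\dagger Z_0U_QP_\alpha|\Omega_1\rangle\bigr]$, and this matrix element is \emph{linear} in the leakage amplitudes: it equals (sector-0 component of the first state)$\times$(sector-0 component of the second) minus the sector-1 analogue, i.e.\ $O(1)\cdot O(\gamma)-O(\gamma)\cdot O(1)$. A generic alignment therefore leaves an $O(\gamma)$ pointwise error near the equator, which only reproduces the QEC exponent; the uniformity in $(\theta,\phi)$ you assert is exactly what is not proven by the $\cos2\omega$ argument.

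What closes this gap in the paper is not the CS decomposition per se but an explicit first-order cancellation exploiting three structural facts from Lemma~\ref{KLC_lemma}: the leading KL matrix $\epsilon_{\alpha\beta}$ is independent of the logical sector (this, not mere full-rankness and conditioning of the diagonal Gram blocks, is what makes your matched-ancilla pairing possible at all); the correction $h$ is Hermitian, $h_{\alpha k,\beta k'}=h^*_{\beta k',\alpha k}$; and its same-error diagonal vanishes, $h_{\alpha k,\alpha k'}=0$. The paper's eigenbasis is $|v_{\alpha k}\rangle=e_\alpha^{-1/2}\bigl(B_\alpha|\Omega_k\rangle+\lambda^{q}|w_{\alpha k}\rangle\bigr)$ with the explicit correction $c_{\alpha k,\beta k'}=-\tfrac12\sum_\gamma h_{\alpha k,\gamma k'}(\epsilon^{-1})_{\gamma\beta}$; the factor $-\tfrac12$ splits the misalignment symmetrically between the two sectors so that the two $O(\gamma)$ contributions to every cross term cancel identically, for all error pairs and all $(\theta,\phi)$. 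Your L\"owdin/CS-type symmetric orthogonalization of principal-vector pairs could plausibly be engineered to produce the same factor-$\tfrac12$ cancellation, but as written you have deferred precisely this point (``the delicate bookkeeping''), and without it the proposal establishes only $\varepsilon_Q=O(\lambda^{2q})$, not the claimed $O(\lambda^{4q})$.
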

The formal version of the theorem and its proof can be found in SI Section~\ref{app:proof_thm3}.
It is also possible to show that the error scaling with $\lambda$ in Theorem~\ref{thm3} cannot be improved in general. Indeed we show that this bound is saturated in the following proposition.
\begin{prop}\label{eq:prop_impr}
\label{prop1} (informal)
Consider noisy inputs with distribution $\mu(\theta,\phi) = \mu(\theta,\phi+\pi)$ and label values $f_Q(\theta,\phi) = \pm1$. Then for generic $V$, all QNN unitaries $U_Q$ must satisfy $\varepsilon_Q(Q_{\rm QNN}) = \Omega(\lambda^{4\lceil(d-2p)/k\rceil})$.
\end{prop}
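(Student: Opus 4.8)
\textbf{Proof proposal for Proposition~\ref{prop1}.}
The plan is to exhibit a single pair of oppositely-labeled noisy inputs whose overlap is large enough to obstruct \emph{any} fixed two-outcome measurement, and then to convert that obstruction into the claimed scaling of the mean-square error. The starting observation is that $Q_{\rm QNN} = U_Q^{\dagger} Z_0 U_Q$ has eigenvalues $\pm 1$, so $M := (I + Q_{\rm QNN})/2$ is an orthogonal projector for every choice of $U_Q$; this is precisely what lets the lower bound hold uniformly over all QNNs. First I would prove an elementary discrimination lemma: if $\langle a|b\rangle = \eta$ and $M$ is any projector, then writing $p_-^{(a)} = \langle a|(I-M)|a\rangle$ and $p_+^{(b)} = \langle b|M|b\rangle$, the decomposition $\langle a|b\rangle = \langle a|M|b\rangle + \langle a|(I-M)|b\rangle$ together with Cauchy--Schwarz gives $\sqrt{p_-^{(a)}} + \sqrt{p_+^{(b)}} \geq |\eta|$, hence $\max(p_-^{(a)}, p_+^{(b)}) \geq |\eta|^2/4$.

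Next I would lower-bound the relevant overlap by degenerate perturbation theory. Taking $Q = X$ for concreteness, let $\ket{\Psi_+}$ be a perturbed logical $X$-eigenstate and $\ket{\Psi_-}$ its image under $\phi\to\phi+\pi$, which to leading order in $\lambda$ is $Z_L\ket{\Psi_+}$; by the hypothesis $\mu(\theta,\phi)=\mu(\theta,\phi+\pi)$ these carry the opposite extremal labels $f_Q = \pm 1$, both with positive measure. Fixing a minimum-weight representative of $Z_L$ (weight $d$) and choosing weight-$p$ noise operators $P_\alpha, P_\beta$ supported on $2p$ of its qubits, the product $P_\alpha^{\dagger} P_\beta Z_L$ restricts to a Pauli of weight $\ge d - 2p$ on the complementary qubits. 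Expanding the codewords order by order in $\lambda$, the overlap $\eta = \langle\Psi_+|P_\alpha^{\dagger} P_\beta|\Psi_-\rangle$ vanishes until $V$, applied $m$ times, supplies this missing weight-$(d-2p)$ string; since $V$ is $k$-local this first happens at $m = \lceil(d-2p)/k\rceil$, giving $|\eta| = \Theta(\lambda^{\lceil(d-2p)/k\rceil})$, where genericity of $V$ guarantees the leading coefficient does not accidentally cancel.

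Finally I would assemble the two ingredients. Because every term in $\varepsilon_Q(Q_{\rm QNN})$ is nonnegative, I may retain only the contributions of $(\ket{\Psi_+},P_\alpha)$ and $(\ket{\Psi_-},P_\beta)$; with $\ket a = P_\alpha\ket{\Psi_+}$, $\ket b = P_\beta\ket{\Psi_-}$ and $f_Q = \pm 1$ one has $(\langle a|Q_{\rm QNN}|a\rangle - 1)^2 = 4(p_-^{(a)})^2$ and $(\langle b|Q_{\rm QNN}|b\rangle + 1)^2 = 4(p_+^{(b)})^2$, so that $\varepsilon_Q(Q_{\rm QNN}) \geq 4c\,[(p_-^{(a)})^2 + (p_+^{(b)})^2]$ for a $\lambda$-independent constant $c>0$ fixed by the positive noise probabilities and measure weights. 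The discrimination lemma then yields $\varepsilon_Q(Q_{\rm QNN}) \geq c\,|\eta|^4/4 = \Omega(\lambda^{4\lceil(d-2p)/k\rceil})$, matching the upper bound of Theorem~\ref{thm3}. The exponent $4$ decomposes transparently into two squarings: one from the Helstrom-type scaling $p_{\rm wrong}\sim|\eta|^2$ of the optimal discrimination error, and one from the mean-square form of the generalization metric.

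The step I expect to be hardest is the perturbative lower bound on $\eta$: one must carry out a Schrieffer--Wolff reduction on the two-dimensional ground space, verify that no process of order below $\lceil(d-2p)/k\rceil$ can connect the $P_\alpha^{\dagger} P_\beta$-dressed logical sectors, and---crucially---show that the order-$\lceil(d-2p)/k\rceil$ amplitude is nonzero for generic $V$ rather than merely bounded above. This non-vanishing genericity statement is the delicate new ingredient; by contrast the discrimination lemma and the bookkeeping of the squared-error metric are routine, and the perturbative machinery can largely be reused from the proofs of Theorems~\ref{thm2}--\ref{thm3}.
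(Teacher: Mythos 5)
Your proposal is correct and takes essentially the same route as the paper's proof: both pair each extremal-label input with its orthogonal, oppositely-labeled partner via the $\phi\to\phi+\pi$ symmetry, both convert the QNN's binary measurement into a state-discrimination obstruction (your Cauchy--Schwarz projector lemma is an elementary equivalent of the paper's trace-distance bound $D\leq 1-\tfrac12|\langle\Psi_1|\Psi_2\rangle|^2$), and both reduce the claim to showing that the noisy overlap $\langle\Psi|P_\alpha P_\beta|\Psi_\perp\rangle$ is $\Theta(\lambda^{\lceil(d-2p)/k\rceil})$. The non-cancellation step you flag as delicate is exactly what the paper isolates: its formal version assumes positivity of the averaged overlap (Eq.~\eqref{eqs:theorem3_assumption}), and Lemma~\ref{lem2} shows this assumption is non-vacuous via precisely the construction you sketch, completing a weight-$d$ uncorrectable Pauli with non-overlapping $k$-local $V$-strings, where same-sign energy denominators rule out accidental cancellation.
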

Here we use $\Omega$ notation to indicate asymptotic lower bound as $\lambda \to 0$. For a formal version of Proposition~\ref{eq:prop_impr} with accompanying proof, see SI Section~\ref{app:optimal_theorem3}. This rigorous result, as well as Theorem~\ref{thm:qnn_dec}, were proved without any restrictions on the unitary
$U_Q$. In practice, achieving this optimal QNN may involve numerous gates and a complex optimization process. In what follows we numerically illustrate our results in the context of an exampled QNN.

Our network, trainable to sample in logical $Q$-basis, consists of two blocks shown in Fig.~\ref{fig:QNN_panel}(a): correction circuit $C_Q$ (blue) and decoding circuit $D_Q$ (green). The correction circuit is composed of local two-qubit gates forming the brickwork pattern of depth $d_C$.
In turn, the decoding circuit is a convolutional network consisting of $O(\log n)$ layers of two-qubit unitaries, discarding half the qubits after each layer, similar to previous proposals \cite{cong2019quantum,cerezo2021cost}. If the combined network has an overall logarithmic depth, we expect that it does not suffer from the barren plateau problem \cite{mcclean2018barren} and can be efficiently optimized.
We parametrize all the 2-qubit unitaries $U_{k}$ composing the circuit as 
\begin{equation}
\label{eqn:log-network}
     U_{k} = \exp \left\{ i \sum_{\alpha,\beta=1}^4 \varphi^k_{\alpha \beta} \sigma_{\alpha}\otimes \sigma_{\beta} \right\},
\end{equation}
where $\varphi^k_{\alpha \beta}\in[0,2\pi]$ are trainable angles and $\sigma_\alpha$ are single-qubit Pauli matrices.

We assume access to a limited set of training and validation states in code space $\{|\Psi_\mu\>\}$, with equal probabilities affected by $p$-local Pauli errors, and the corresponding labels $(\theta_\mu,\phi_\mu)$. In practice, one can assume such access, as the groundspace can be prepared adiabatically in experimental settings. We optimize the QNN with objective function $\varepsilon_Q(Q_{\rm QNN})$ (Eq.~\eqref{eqn:L2_logical}) on the training data. After training, we estimate the QNN decoding error from the validation data withheld from training. 

We provide numerical simulation results in Fig.~\ref{fig:QNN_panel}, where we study a uniform 1-local perturbation $V_i = \sum_{i=1}^n(X_i+Z_i)$. Fig.~\ref{fig:QNN_panel}(b)-(c) suggest that our QNN architecture ($C=4$) can decode noiseless inputs and achieves superior performance compared to standard decoding schemes discussed in Theorems \ref{thm1}-\ref{thm2}, regardless of the perturbation strength. The QNN error can be further reduced by considering deeper architectures at the cost of longer training times. 

The noisy scenario is shown in Fig.~\ref{fig:QNN_panel}(d). Incidentally, for the 5-qubit code ($d=3,p=1$) simulation, the lower bound in Theorem \ref{thm3} predicts the same $\lambda^4$ scaling as in Theorem \ref{thm2}, which the present QNN saturates. However, with QNN  one can potentially achieve better scaling by considering larger code distances such as the [[11,1,5]] code. Intriguingly, while QNN performance is comparable to that of standard QEC for small $\lambda$, and becomes superior for large $\lambda$. Meanwhile, this approach does not require any knowledge of the Hamiltonian, as needed in standard QEC methods. Also, this result illustrates that sequential error correction and measurement of the logical operator at the end of the circuit can be implemented by applying a QNN unitary and single-qubit measurement without using any ancilla. \\

\textbf{Discussion}. In this work, we have developed a theoretical framework for decoding stabilizer codes perturbed by disorder. We proved general performance bounds regarding the decoding errors of standard QEC versus QNN decoding techniques for such codes. Our results reveal universal scaling behaviors of the decoding error for arbitrary code distances, and suggests a provable advantage of using QNNs over standard QEC methods for decoding realistic codes with large code distances. Our analytical work was confirmed and corroborated  by numerical simulations of codes with various distances ranging from 5 to 11 qubits. 
The numerical methodology may find use-cases elsewhere. 

An interesting future direction is to find whether it is possible to construct logarithmic-depth QNNs such as convolutional NNs that saturate the bound in Theorem~\ref{thm3}. Also, our study reveals a minor gap between the $ O(\lambda^{2 d})$ upper bound scaling in Theorem~\ref{thm2} and the empirical $ O(\lambda^{2(d+1)})$ performance for standard quantum error correction with noiseless input states for $k=1$ local perturbation. Whether the  bound  in Theorem~\ref{thm2} can be improved remains an open question.

Our theoretical framework and numerical protocol represent a first step toward decoding non-stabilizer imperfect codes, which may be of practical relevance in the near-term quantum devices. Our work paves the way for novel applications of machine learning in quantum error correction. While we focus on imperfect stabilizer codes, an immediate question is whether our formalism can be extended beyond near-stabilizer codes. In addition, how does the structure and depth of QNNs affect practical decoding performance remains an important open problem. \\

\acknowledgments{
\textbf{Acknowledgments}. W.Z. acknowledges support from the IBM Quantum Summer Internship Program, where this work initiated. W.Z. also acknowledges support from the Starr Foundation at the Institute for Advanced Study, where the final phase of the project was completed. O.S. and R.M. acknowledge funding from the MIT-IBM Watson AI Lab under the project Machine Learning in Hilbert space. The research was partly supported by the IBM Research Frontiers Institute.}

\bibliography{ref.bib}

\clearpage

\pagebreak

\setcounter{page}{1}
\setcounter{equation}{0}
\setcounter{figure}{0}
\renewcommand{\theequation}{S.\arabic{equation}}
\renewcommand{\thefigure}{S\arabic{figure}}
\renewcommand*{\thepage}{S\arabic{page}}

\onecolumngrid

\begin{center}
{\large \textbf{Supplementary Information for \\``\ourtitle"}}\\
\vspace{0.25cm}

Weishun Zhong,$^{1,2,3}$ Oles Shtanko$^4$, and Ramis Movassagh$^{2,5}$

\vspace{0.2cm}

\textit{
$^1$\mitaff\\
$^2$\cambridge\\
$^3$\iasaff\\
$^4$\almaden\\
$^5$\google}
\end{center}


\vspace{1cm}

In Section~\ref{intro} we present details of the system defined in Eq.~\eqref{eqn:hamiltonian} and the Brillouin-Wigner perturbation series. Our central result in this section is Lemma~\ref{lem1}, which derives our main results. Section~\ref{app:proof_thm1} contains the proof of Theorem~\ref{thm1}, Section~\ref{app:proof_thm2} contains the proof of Theorem~\ref{thm2}, Section~\ref{app:proof_thm3} contains the proof of Theorem~\ref{thm3}, Section~\ref{app:optimal_theorem3} contains the proof of Proposition~\ref{prop1}. In Section~\ref{app:QNN} we present details of the numerical simulations performed in the main text. In particular, we include details on how to construct our model stabilizer codes as well as our Quantum Neural Networks (QNNs).

\section{Perturbed stabilizer codes}
\label{intro}
In this section, we first describe the properties of the ideal stabilizer code, then set up the perturbed problem as well as the corresponding Brillouin-Wigner (BW) series used throughout the proofs. 

Let $S_a$ be a set of independent stabilizers with $a\in\{1,\dots,n-1\}$, then we define stabilizer Hamiltonain $H_0$ as
\be
H_0 = -\sum_{a=1}^{n-1} S_a.
\ee
The eigenstates and eigenenergies of this unperturbed Hamiltonian are
\be \label{app:unperturbed_states}
H_0|\psi_k^{(0)}\> = E_k^{(0)}|\psi_k^{(0)}\>.
\ee
In particular, the groundspace of is doubly degenerate, spanned on a two-dimensional basis of codewords. We define the unperturbed codewords as $\ket{\Omega^{(0)}_i}$, $i=0,1$, satisfying $S_a \ket{\Omega^{(0)}_i} = +\ket{\Omega^{(0)}_i}$ for all $a$. 
 The stabilizer code $\{\ket{\Omega^{(0)}_i}\}$ is said to have code distance $d=2k+1$ if it can be corrected up to $k$-qubit Pauli errors. 
This condition implies that for any $p$-local Pauli operator $P_\alpha$ such that $2p+1\leq d$, the stabilizer code satisfies the Knill-Laflamme (KL) condition,
    \be \label{appeqn:KL condition}
        \bra{\Omega_k^{(0)}} P_\alpha P_\beta \ket{\Omega_{k'}^{(0)}} =  \epsilon^0_{\alpha\beta} \delta_{kk'},
    \ee
where $\epsilon^0_{\alpha\beta}$ is a Hermitian matrix that saisfies $\epsilon^0_{\alpha\alpha} = 1$. Conversely, when $2p+1>d$, we assume that there exist at least one Pauli operator $P_{\rm err}$ of weight $d$ such that

\be\label{eq:logical_error}
\|\delta O\|_1>0, \qquad \delta O := O-\frac 12 \Tr(O) O, \qquad O_{kk'} = \bra{\Omega_k^{(0)}} P_{\rm err} \ket{\Omega_{k'}^{(0)}},
\ee
where $\|\cdot\|_1$ denotes $1$-norm. Eq.~\eqref{eq:logical_error} states that above code distance, there exist at least one error operator that will make the codewords non-orthogonal.

The unperturbed stabilizer code $\ket{\Omega^{(0)}_i}$ also has the following property. 
\begin{clm}
\label{prop:excite}
For any eigenstate $|\psi^{(0)}_k\>$ of the Hamiltonian $H_0$ and any Pauli operator $P$, the state $P|\psi_k^{(0)}\>$ is also an eigenstate of $H_0$.
\end{clm}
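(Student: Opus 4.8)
The plan is to exploit the two defining algebraic facts of the Pauli group: that the commuting stabilizers $\{S_a\}$ admit a simultaneous eigenbasis, and that any two Pauli operators either commute or anticommute. First I would observe that since $[S_a,S_b]=0$ for all $a,b$, the eigenbasis $\{|\psi_k^{(0)}\>\}$ of $H_0=-\sum_a S_a$ can be chosen to diagonalize every stabilizer simultaneously, so that
\be
S_a|\psi_k^{(0)}\> = s_a^{(k)}|\psi_k^{(0)}\>, \qquad s_a^{(k)}\in\{+1,-1\},
\ee
where the eigenvalues are $\pm1$ because each $S_a$ is Hermitian and squares to the identity. This is exactly the stabilizer eigenbasis in which the codewords $\ket{\Omega^{(0)}_i}$ are defined.

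Next, for the given Pauli operator $P$ and each stabilizer $S_a$ (itself a Pauli operator), I would invoke the fact that $P S_a = \eta_a S_a P$ with $\eta_a\in\{+1,-1\}$, the sign recording whether $P$ commutes or anticommutes with $S_a$. Acting with $S_a$ on $P|\psi_k^{(0)}\>$ and commuting $S_a$ through $P$ then gives
\be
S_a\bigl(P|\psi_k^{(0)}\>\bigr) = \eta_a\, P S_a |\psi_k^{(0)}\> = \eta_a s_a^{(k)}\bigl(P|\psi_k^{(0)}\>\bigr),
\ee
so $P|\psi_k^{(0)}\>$ is again a simultaneous eigenstate of every $S_a$, with the flipped eigenvalues $\eta_a s_a^{(k)}\in\{+1,-1\}$. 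Summing over $a$ immediately yields $H_0\bigl(P|\psi_k^{(0)}\>\bigr) = -\bigl(\sum_a \eta_a s_a^{(k)}\bigr)P|\psi_k^{(0)}\>$, establishing the claim with shifted energy $-\sum_a \eta_a s_a^{(k)}$.

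The argument is elementary and I do not expect a genuine obstacle; the only point requiring care is the reading of the first step. The statement is clean precisely when ``eigenstate of $H_0$'' is taken to mean an element of the simultaneous stabilizer eigenbasis, as the $|\psi_k^{(0)}\>$ are defined. One should note that $H_0$ can be degenerate (distinct syndrome patterns $\{s_a^{(k)}\}$ may share the same value of $\sum_a s_a^{(k)}$), and a superposition of syndrome eigenstates drawn from different sectors of such a degenerate level need not be mapped by $P$ to a single eigenvalue, since $P$ can shift those sectors to levels of differing energy. Working in the natural stabilizer eigenbasis, where each $|\psi_k^{(0)}\>$ carries a definite syndrome, removes this ambiguity and makes the closure property exact.
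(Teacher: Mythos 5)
Your proof is correct and is essentially the same argument as the paper's: both commute $P$ through each stabilizer $S_a$, picking up a sign $\pm 1$ according to whether $P$ commutes or anticommutes with $S_a$, and conclude that $P|\psi_k^{(0)}\>$ is an eigenstate of $H_0$ with the shifted energy $-\sum_a \eta_a s_a^{(k)}$. Your closing remark about degenerate levels — that the claim requires $|\psi_k^{(0)}\>$ to carry a definite syndrome rather than being an arbitrary superposition within a degenerate energy level — is a precision the paper leaves implicit (its proof likewise assumes $S_a|\psi_k^{(0)}\> = s_a|\psi_k^{(0)}\>$), and it does not change the substance of the argument.
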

\begin{proof}
The eigenstate $\ket{\psi_k^{(0)}}$ has the energy $E^{(0)}_k = -\sum_a s_a$, where $s_a$ are values of the individual stabilizers, $S_a \ket{\psi_k^{(0)}} = s_a \ket{\psi_k^{(0)}}$. We consider the action of the unperturbed Hamiltonian $H_0$ on $P\ket{\psi_k^{(0)}}$, 
\begin{align}
    H_0 P\ket{\psi_k^{(0)}} &= -\sum_a S_a P \ket{\psi_k^{(0)}} = -\sum_a (-1)^{r_{a}} P S_a \ket{\psi_k^{(0)}} \nonumber \\
    &= -  \sum_a (-1)^{r_{a}} s_a P\ket{\psi_k^{(0)}} := \tilde E^{(0)}_{k} P \ket{\psi_k^{(0)}},
\end{align}
where $\tilde E^{(0)}_{k} := -\sum_a (-1)^{r_a+1} s_a$,  $r_a = 0$ if $[P, S_a]=0$ and $r_a = 1$ if $\{P,S_a\}=0$. Therefore, the resulting state $P\ket{\psi_k^{(0)}}$ is still an eigenstate of $H_0$ (possibly the same as $\ket{\psi_k^{(0)}}$).
\end{proof}

Next, we consider a perturbation of $H_0$ by adding a generic $k$-local term $V$, as shown in Eq.~\eqref{eqn:hamiltonian} of the manuscript. The resulting Hamiltonian is
\be\label{eqs:real_ham}
H = H_0+\lambda V,
\ee
where $\lambda$ is a small real parameter. 
We focus on the spectral problem
\be
H|\psi_k\> = E_k|\psi_k\>,
\ee
where $|\psi_k\>$ and $E_k$ are the eigenvectors and eigenenergies of the Hamiltonian $H$. To describe the structure of the code, we need an analytical expression for the two lowest energy eigenstates $|\psi_0\>$ and $|\psi_1\>$.
To find one, we use the Brillouin-Wigner (BW) perturbative expansion \cite{hubavc2010brillouin} of the form
\be \label{BW_vanila}
|\psi_k\>= \frac{1}{\mathcal{N}_k}\sum_{j=0}^\infty \lambda^j|\psi_k^{(j)}\>,
\ee
where $\mathcal{N}_k$ is a normalization coefficient and the correction terms are
\be\label{BW_and_propagator}
|\psi_k^{(j)}\> = (G_k V)^{j}|\psi_k^{(0)}\>, \qquad G_{k} = \sum_{m\neq k}\frac{|\psi_m^{(0)}\>\<\psi_m^{(0)}|}{E_k-E^{(0)}_m}.
\ee
Note that $|\psi_k^{(j)}\>$ for $j>0$ are not normalized and depend on $\lambda$ through $E_k$ on the RHS. The associated energies of the states are
\begin{align}
\label{eqn:En_series}
    E_k &= E_k^{(0)} + \sum_{j=1} \lambda^j \bra{\psi_k^{(0)}} V (G_k V)^{j-1} \ket{\psi_k^{(0)}}.
\end{align}
The advantage of Eq.~\eqref{BW_vanila}-\eqref{BW_and_propagator} 
is that one does not need to consider our case when a degeneracy is present in the unperturbed energy spectrum using a separate formula, 
as required by the standard Rayleigh-Schrödinger (RS) perturbation theory \cite{rayleigh1896theory,schrodinger1926quantisierung}. 

However, with arbitrary choice of the degenerate eigenstates, the perturbation series in Eq.~\eqref{BW_vanila}-\eqref{BW_and_propagator} cannot be treated as a Taylor series in the following sense. If we take into consideration only the first $k$ terms in the series, the contribution from the remaining terms will not have the usual $O(\lambda^{k+1})$ scaling that a converging Taylor series has. This is because the denominator of the propagator $G_k$ can be polynomially small in $\lambda$, thus potentially reducing the power of or even cancelling the $\lambda$-prefactor. 
Use the following Lemma, we show that this problem can be resolved by choosing a proper basis for the degenrate groundspace.
\begin{lem}
\label{lem1} 
There exists a basis $|\phi_{k}^{(0)}\>=\sum_{k'}\beta_{kk'}|\psi_{k'}^{(0)}\>$, where $\beta$ is a $2\times2$ unitary transformation, such that 
\be
\label{app:BW_state1}
|\psi_k\> = \frac{1}{\mathcal N}\sum_{j=0}^{\infty}(\lambda\tilde G_0 V)^{j}|\phi_k^{(0)}\>+O(\lambda^{d}),
\ee
where $\mathcal N>0$ and $\tilde G_0$ takes the form
\be\label{app:BW_state2}
\tilde G_0 = \sum_{m\neq 0,1}\frac{|\psi_m^{(0)}\>\<\psi_m^{(0)}|}{E_0-E^{(0)}_m}.
\ee
\end{lem}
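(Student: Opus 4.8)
The plan is to recast the Brillouin--Wigner series as a Feshbach/projection problem on the two-dimensional groundspace, which automatically produces the propagator $\tilde G_0$ and removes the dangerous small-denominator contribution flagged in the text. Let $P=\ket{\psi_0^{(0)}}\bra{\psi_0^{(0)}}+\ket{\psi_1^{(0)}}\bra{\psi_1^{(0)}}$ project onto the degenerate groundspace and $Q=1-P$. Writing the exact eigenvector as $\ket{\psi_k}=\ket{a_k}+\ket{b_k}$ with $\ket{a_k}=P\ket{\psi_k}$ and $\ket{b_k}=Q\ket{\psi_k}$, I would project $H\ket{\psi_k}=E_k\ket{\psi_k}$ onto $Q$ to get $(E_k-H_0)\ket{b_k}=\lambda QV\ket{\psi_k}$, and solve for $\ket{b_k}$ by the geometric (Neumann) series $\ket{b_k}=\sum_{j\ge1}(\lambda \tilde G_k V)^{j}\ket{a_k}$, where $\tilde G_k=Q(E_k-H_0)^{-1}Q=\sum_{m\neq0,1}\ket{\psi_m^{(0)}}\bra{\psi_m^{(0)}}/(E_k-E^{(0)}_m)$ now excludes \emph{both} ground states by construction. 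This already gives $\ket{\psi_k}=\sum_{j\ge0}(\lambda\tilde G_k V)^{j}\ket{a_k}$ up to normalization, with denominators bounded below by the gap $\del-O(\lambda)$, so the resummed object is a genuine power series in $\lambda$.

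Next I would fix the groundspace component $\ket{a_k}$ by projecting the eigenvalue equation onto $P$, obtaining the $2\times2$ effective (Feshbach) Hamiltonian $H_{\mathrm{eff}}(E_k)=E^{(0)}_0 P+\lambda PVP+\sum_{j\ge1}\lambda^{j+1}PV(\tilde G_k V)^{j}P$ whose eigenvectors are exactly the required $\ket{\phi_k^{(0)}}=\ket{a_k}$, with the diagonalizing rotation being the claimed $2\times2$ unitary $\beta$. The key structural input is the Knill--Laflamme condition, Eq.~\eqref{appeqn:KL condition}: every summand of $PV(\tilde G_k V)^{j}P$ is a product of $j+1$ operators each of locality at most $k$, hence of total weight at most $(j+1)k$; as long as $(j+1)k<d$, KL forces its groundspace matrix to be proportional to $P$, i.e.\ diagonal in the logical index with a codeword-independent coefficient. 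Consequently $H_{\mathrm{eff}}$ is scalar on the groundspace up to order $\lambda^{\lceil d/k\rceil}$, so both the splitting $E_1-E_0$ and the traceless (logical-mixing) part of $H_{\mathrm{eff}}$ first appear at that order.

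I would then complete the proof by replacing $\tilde G_k\to\tilde G_0$ and $E_k\to E_0$ uniformly for $k=0,1$. The difference of denominators $1/(E_k-E^{(0)}_m)-1/(E_0-E^{(0)}_m)$ is proportional to the splitting $E_0-E_k$, which is controlled by the previous step, and the correction to $\ket{a_k}$ from freezing the reference energy is governed by the same small off-diagonal element; collecting these yields the stated $O(\lambda^{d})$ remainder (coinciding with $\lambda^{\lceil d/k\rceil}$ for the local $k=1$ perturbations of interest). A common normalization $\mathcal N>0$ can be chosen since the two eigenvectors differ only at this high order.

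The main obstacle I anticipate is not the algebraic reorganization but the quantitative control of the remainder: one must show that excluding the degenerate partner from the propagator and freezing the reference energy at $E_0$ are both harmless to high order, and this hinges entirely on proving that the $2\times2$ effective Hamiltonian is scalar up to order $\lambda^{\lceil d/k\rceil}$. That in turn requires a careful weight-counting argument for arbitrarily high powers $(\tilde G_k V)^{j}$ --- tracking that intermediate excited states, though off-codespace, never let a product of fewer than $\lceil d/k\rceil$ local factors act as a nontrivial logical operator --- together with a uniform norm bound on $\tilde G_0$ to guarantee convergence of the geometric series for all $\lambda$ below the gap.
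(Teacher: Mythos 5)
Your proposal is correct in substance, but it takes a genuinely different route from the paper. The paper stays inside the Brillouin--Wigner formalism it has already set up: it writes $|\psi_k\> = \mathcal N_k^{-1}(1-\lambda G_k V)^{-1}|\psi_k^{(0)}\>$ with the propagator $G_k$ excluding only level $k$, splits $G_k = \tilde G_k + \Pi_k$ into a gapped part and the dangerous rank-one piece on the degenerate partner, and uses an operator resummation identity to factor out that rank-one piece; the unitary $\beta$ then emerges from an explicit closed-form mixing ratio $f_0(\lambda)$ (a quotient of two comparably small quantities), together with the asserted relations $f_1 = -f_0^* + O(\lambda^d)$ and $\tilde M_0 = \tilde M_1 + O(\lambda^d)$. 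You instead project onto the full two-dimensional groundspace from the outset (Feshbach reduction), so the small denominators never appear: the gapped propagator $\tilde G_k = Q(E_k - H_0)^{-1}Q$ is built in by construction, and $\beta$ arises as the diagonalizer of the self-consistent $2\times 2$ effective Hamiltonian. What your route buys is that the key structural input --- that $PV(\tilde G_k V)^j P$ is scalar on the codespace whenever $(j+1)k < d$, via Claim~\ref{prop:excite} and the Knill--Laflamme condition, hence the splitting and the logical-mixing part first appear at order $\lambda^{\lceil d/k\rceil}$ --- is stated explicitly and does the work of controlling both the replacement $E_k \to E_0$, $\tilde G_k \to \tilde G_0$ and the near-orthogonality of the two groundspace components; the paper uses exactly these facts but only implicitly, asserting them without proof inside the $f_1$ and $\tilde M_k$ relations. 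What the paper's route buys is an explicit formula for the rotation and continuity with the BW series quoted in Eqs.~\eqref{BW_vanila}--\eqref{BW_and_propagator}. One point worth noting: your analysis honestly yields a remainder $O(\lambda^{\lceil d/k\rceil})$, which equals the stated $O(\lambda^d)$ only for $k=1$; this is not a defect of your argument but a precision issue inherited from the lemma as written, since the paper's own proof controls the error terms by the splitting and therefore achieves no better.
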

\label{proof_lem1}

The proof is given below at the end of this section. It is important to note that now, in the case of twofold groundspace degeneracy, all denominators in Eq.~\eqref{app:BW_state2} have $O(1)$ scaling due to the presence of a gap in the unperturbed Hamiltonian, see Eq.~\eqref{eqn:En_series}. This allows us to truncate the series with a controlled error. We will use the expression from this Lemma in the following sections.

Another important result is the Knill-Laflamme condition for the resulting perturbed quantum code. With the help of perturbation theory, it can be shown that the generalization of Eq.~\eqref{appeqn:KL condition} to the groundspace codes can be expressed as the following Lemma:
\begin{lem}\label{KLC_lemma}
Let $|\Omega_k\> = \sum_{k'}u_{kk'}|\psi_{k'}\>$ be the superposition of the two lowest eigenstates of the Hamiltonian $H$ in Eq.~\eqref{eqs:real_ham} for some $2 \times 2$ unitary matrix $u$, and let $P_\alpha$ be $p$-local Pauli errors. Then
\begin{equation}\label{eqs:gen_KLC}
\langle \Omega_k | P_\alpha P_\beta | \Omega_{k'} \rangle = \epsilon_{\alpha\beta} \delta_{kk'} +\lambda^{q} h_{\alpha k,\beta k'},
\end{equation}
where $q = \lceil(d-2p)/k\rceil$, $\epsilon_{\alpha\beta}$ and \(h_{\alpha a,\beta b}\) are $O(1)$ matrices satisfying $\epsilon_{\alpha\alpha} = 1$, \(h_{\alpha k,\alpha k'} = 0\) and \(h_{\alpha k, \beta k'} = h^*_{\beta k',\alpha k}\). 
\end{lem}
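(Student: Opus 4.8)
The plan is to substitute the Brillouin--Wigner expansion of Lemma~\ref{lem1} into the matrix element $\langle\Omega_k|P_\alpha P_\beta|\Omega_{k'}\rangle$ and track the effective Pauli weight order by order in $\lambda$, using the unperturbed stabilizer structure to kill every correction below order $\lambda^q$. First I would absorb the two unitaries into one: writing $|\Omega_k\rangle=\sum_{k'}u_{kk'}|\psi_{k'}\rangle$, inserting~\eqref{app:BW_state1}, and using that the propagator $\tilde G_0 V$ carries no logical index, I get
\begin{align}
|\Omega_k\rangle=\frac{1}{\mathcal N}\sum_{j=0}^{\infty}(\lambda\tilde G_0 V)^{j}|\Phi_k^{(0)}\rangle+O(\lambda^{d}),\qquad |\Phi_k^{(0)}\rangle:=\sum_{k'}(u\beta)_{kk'}|\psi_{k'}^{(0)}\rangle,
\end{align}
where $w:=u\beta$ is again $2\times2$ unitary, so the $|\Phi_k^{(0)}\rangle$ form an orthonormal basis of the unperturbed groundspace. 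A unitary rotation inside the codespace preserves the Knill--Laflamme property~\eqref{appeqn:KL condition} (via the collapse $\sum_a\overline{w_{ka}}w_{k'a}=\delta_{kk'}$), so the $|\Phi_k^{(0)}\rangle$ obey the same condition as the original codewords. With $\tilde G_0^{\dagger}=\tilde G_0$ and $V^{\dagger}=V$ the element becomes the double series
\begin{align}
\langle\Omega_k|P_\alpha P_\beta|\Omega_{k'}\rangle=\frac{1}{\mathcal N^{2}}\sum_{i,j\geq0}\lambda^{i+j}\,\langle\Phi_k^{(0)}|(V\tilde G_0)^{i}P_\alpha P_\beta(\tilde G_0 V)^{j}|\Phi_{k'}^{(0)}\rangle+O(\lambda^{d}),
\end{align}
which I would organize by the total power $m=i+j$.

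Next I would perform the weight bookkeeping. Expanding $V$ into its $k$-local Pauli components and invoking Claim~\ref{prop:excite} — that any Pauli sends an unperturbed eigenstate to an eigenstate — each resolvent $\tilde G_0$ acts as a single scalar $1/(E_0-E^{(0)}_m)$ with denominator bounded below by the gap, hence $O(1)$ by Lemma~\ref{lem1}. Consequently $(\tilde G_0 V)^{j}|\Phi_{k'}^{(0)}\rangle$ is an $O(1)$-coefficient superposition of states $Q|\Phi_{k'}^{(0)}\rangle$ with $Q$ a Pauli of weight at most $jk$, and similarly for the bra. Thus every order-$\lambda^{m}$ term reduces to a finite combination of $\langle\Phi_k^{(0)}|R|\Phi_{k'}^{(0)}\rangle$ with $R$ a Pauli of weight at most $mk+2p$. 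When $m\leq q-1$ one has $mk+2p<d$, so $R$ sits strictly below the code distance: it is either a stabilizer (a scalar on the code) or it anticommutes with a stabilizer (mapping the code to an orthogonal syndrome sector), so in either case $\langle\Phi_k^{(0)}|R|\Phi_{k'}^{(0)}\rangle\propto\delta_{kk'}$, generalizing~\eqref{appeqn:KL condition}. Every contribution below order $\lambda^{q}$ therefore collapses onto the identity in the logical index, and I collect them into $\epsilon_{\alpha\beta}\delta_{kk'}$. The earliest order that can produce a logical (traceless-in-$kk'$) piece is $m=q$, where $qk+2p\geq d$; this defines $\lambda^{q}h_{\alpha k,\beta k'}$, and the Lemma~\ref{lem1} remainder $O(\lambda^{d})=O(\lambda^{q})$ (as $q\leq d$) is absorbed there without spoiling $h=O(1)$.

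Finally I would read off the listed properties: $\alpha=\beta$ gives $P_\alpha P_\alpha=I$, hence the exact value $\langle\Omega_k|\Omega_{k'}\rangle=\delta_{kk'}$ forcing $\epsilon_{\alpha\alpha}=1$ and $h_{\alpha k,\alpha k'}=0$; Hermiticity under $(P_\alpha P_\beta)^{\dagger}=P_\beta P_\alpha$ gives $h_{\alpha k,\beta k'}=h^{*}_{\beta k',\alpha k}$; and the gap-bounded denominators keep $\epsilon$ and $h$ uniformly $O(1)$. I expect the main obstacle to be exactly this weight bookkeeping through the propagators: one must check that inserting the projected resolvents between the local factors of $V$ never inflates the support beyond $jk$ qubits and that the accumulated scalar prefactors stay $O(1)$ uniformly as $\lambda\to0$ — precisely where Claim~\ref{prop:excite} and the $O(1)$ denominators from Lemma~\ref{lem1} are essential. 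A secondary subtlety is the clean split of the all-orders $\delta_{kk'}$ part into $\epsilon_{\alpha\beta}$ versus the genuinely logical $O(\lambda^{q})$ remainder, which rests on the $2\times2$ trace/traceless decomposition in logical space.
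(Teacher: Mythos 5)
Your proposal is correct and follows essentially the same route as the paper's proof: expand the codewords in the Brillouin--Wigner series, decompose $V$ into $k$-local Pauli terms so that Claim~\ref{prop:excite} reduces each resolvent $\tilde G_0$ to an $O(1)$ gap-bounded scalar, use the Knill--Laflamme condition to force all contributions of total order $m+m'<q$ to be proportional to $\delta_{kk'}$, and obtain $\epsilon_{\alpha\alpha}=1$, $h_{\alpha k,\alpha k'}=0$, and $h_{\alpha k,\beta k'}=h^*_{\beta k',\alpha k}$ from $P_\alpha^2=I$ and Hermitian conjugation of the matrix element. Your explicit absorption of $\beta$ into the single unitary $w=u\beta$ and your careful treatment of the $O(\lambda^{d})$ remainder from Lemma~\ref{lem1} are minor refinements of steps the paper handles implicitly, not a different argument.
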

The proof can be found below in this section.

Finally, we show the following technical result that we will use later. 
\begin{lem}
\label{lem2}
    Upon condition in Eq.~\eqref{eq:logical_error}, there exists $k$-local $V$ and $p$-local Pauli operators $P_\alpha$, $P_\beta$ such that
    \be
    \|T_{\alpha\beta}\|_1>0
    \ee
    where 
    \be\label{eqs?t_tilde}
    T_{\alpha\beta} = t_{\alpha\beta} - \frac 12 \Tr(t_{\alpha\beta})I, \qquad (t_{\alpha\beta})_{kl}: = \sum_{m=0}^{d-2p}\<\Omega_{k}^{(0)}|(V\tilde G_0)^m P_\alpha  P_\beta (\tilde G_0 V)^{d-m-2p}|\Omega_l^{(0)}\>.
    \ee
    Here $I$ is $2\times 2$ identity matrix and $\tilde G_0$ is defined in Eq.~\eqref{app:BW_state2}.
\end{lem}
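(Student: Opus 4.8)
The plan is to use the hypothesis as directly as possible: condition \eqref{eq:logical_error} supplies a weight-$d$ Pauli $P_{\rm err}$ whose codeword matrix $O$ has $\|\delta O\|_1>0$, which means $P_{\rm err}$ commutes with every stabilizer and acts as a \emph{nontrivial} logical operator of minimal weight $d$. I would engineer $V$, $P_\alpha$ and $P_\beta$ so that the leading part of $t_{\alpha\beta}$ is proportional to $O$, and then read off $T_{\alpha\beta}\propto\delta O\neq0$. Concretely, I would write $P_{\rm err}=W\,P_\alpha P_\beta$, with $P_\alpha,P_\beta$ the restrictions of $P_{\rm err}$ to two disjoint sets of $p$ qubits (so $P_\alpha P_\beta$ has weight $2p$) and $W=\prod_j W_j$ its restriction to the remaining $d-2p$ qubits, each $W_j$ a single-qubit Pauli, and then take $V=\sum_j W_j$, which is $1$-local, hence $k$-local for any $k\ge1$. (The degenerate case $d=2p$ is immediate: the sum has only the $m=0$ term, $V$ does not appear, and choosing $P_\alpha P_\beta=P_{\rm err}$ gives $t_{\alpha\beta}=O$ directly.)

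Next I would expand $t_{\alpha\beta}$ in this $V$. Each factor of $V$ selects one $W_j$, so every term in the sum over $m$ and over these choices is a matrix element of a definite ordered Pauli string sandwiched between $\ket{\Omega_l^{(0)}}$ and $\bra{\Omega_k^{(0)}}$ with interspersed $\tilde G_0$. By Claim~\ref{prop:excite} every intermediate state is an $H_0$-eigenstate, so each $\tilde G_0$ from \eqref{app:BW_state2} acts as a scalar $1/(E_0-E)$ (or annihilates the state if it lands back in the groundspace). Each term therefore factorizes as a product of $d-2p$ energy denominators times $\bra{\Omega_k^{(0)}}R\ket{\Omega_l^{(0)}}$, where $R$ is the total Pauli string. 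I would then split the terms into two classes: those in which all $d-2p$ factors of $V$ are distinct, for which $R=P_\alpha P_\beta W=P_{\rm err}$ has weight $d$ and $\bra{\Omega_k^{(0)}}R\ket{\Omega_l^{(0)}}=O_{kl}$; and those in which some $W_j$ is repeated, for which $R$ has weight $<d$.

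For the second class, a Pauli of weight $<d$ either anticommutes with some stabilizer (codeword matrix $0$) or lies in the stabilizer group (codeword matrix $\propto\delta_{kl}$), by the Knill--Laflamme property \eqref{appeqn:KL condition}; moreover the intermediate energies depend only on the syndrome, not on the logical labels $k,l$. Hence every weight-$<d$ term contributes a multiple of the identity and drops out of $T_{\alpha\beta}=t_{\alpha\beta}-\tfrac12\Tr(t_{\alpha\beta})I$. Collecting the first class gives $t_{\alpha\beta}=C\,O+c\,I$ with scalars $C,c$, so $T_{\alpha\beta}=C\,\delta O$ and $\|T_{\alpha\beta}\|_1=|C|\,\|\delta O\|_1$.

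The crux, which I expect to be the main obstacle, is to show $C\neq0$, i.e.\ that the perturbative sum does not accidentally cancel, and I would settle it in two moves. First, \emph{no path collapses}: if some partial product $\Pi_s$ of the single-qubit factors were a stabilizer, then $P_{\rm err}\Pi_s$ would be a logical operator equivalent to $P_{\rm err}$ of weight $d-s<d$, contradicting the code distance; hence every $\Pi_s$ anticommutes with some stabilizer, every intermediate state is genuinely excited, and no $\tilde G_0$ kills a first-class term. Second, \emph{the surviving terms cannot cancel}: each retained term carries exactly $d-2p$ denominators $1/(E_0-E)<0$ and the same overall Pauli phase (the single-qubit factors act on disjoint qubits, so reordering is phase-free and always rebuilds $P_{\rm err}$), so every term has the identical sign $(-1)^{d-2p}$ and $C=(-1)^{d-2p}\sum(\text{positive})\neq0$. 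Combining, $\|T_{\alpha\beta}\|_1=|C|\,\|\delta O\|_1>0$, proving the lemma.
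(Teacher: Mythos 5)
Your proof is correct and follows essentially the same route as the paper's: split $P_{\rm err}=W\,P_\alpha P_\beta$, take $V$ to be the sum of the complementary single-qubit (more generally $k$-local) pieces, expand $t_{\alpha\beta}$ into Pauli strings with scalar energy denominators, and conclude $T_{\alpha\beta}\propto \delta O$ with a nonzero constant because all surviving terms carry denominators $1/(E_0-E)$ of identical sign. If anything, your write-up is tighter than the paper's on two points it glosses over: you prove via the distance argument that no intermediate state falls back into the codespace (the paper simply asserts $E>E_0$), and you explicitly show that the repeated-factor terms are multiples of the identity and hence cancel in $T_{\alpha\beta}=t_{\alpha\beta}-\tfrac12\Tr(t_{\alpha\beta})I$.
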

This Lemma for $p=0$ shows that there exists a perturbation $V$ such that the two lowest eigenstates become non-orthogonal at the $d$-th order of perturbation theory. In the presence of noise affecting $p>0$ qubits, the overlap becomes nonvanishing at the $d-2p$ order.\\

\begin{proof} \textbf{of Lemma~\ref{lem1}.} 
We start from the Brillouin-Wigner series, writing it as
\be\label{eqs:new_bw_series}
\forall k\in\{0,1\}: \quad |\psi_k\> = \frac{1}{\mathcal N_k}M_k|\psi_k^{(0)}\>,\qquad M_k := \sum_{j=0}^\infty (\lambda G_k V)^{j} = \frac{1}{1-\lambda G_kV},
\ee
where $\mathcal N_k$ are normalization coefficients and $G_k$ is the propagator defined in Eq.~\eqref{BW_and_propagator}. Here and below we treat $1$ as the identity operator. Next, for each $k\in\{0,1\}$ we consider a decomposition
\be
G_k = \tilde G_k + \Pi_k, \qquad \tilde G_k = \sum_{m\neq 0,1}\frac{|\psi_m^{(0)}\>\<\psi_m^{(0)}|}{E_k-E^{(0)}_m},\qquad \Pi_k = \frac{|\psi_{1-k}^{(0)}\>\< \psi_{1-k}^{(0)}|}{E_k-E^{(0)}_{1-k}},
\ee
where the operator $\tilde G_k$ does not act on the unperturbed codespace and $\Pi_k$ is a projector to an orthogonal state in the codespace. Using this decomposition, we rewrite
\be\label{eqs:M_exp}
\begin{split}
M_k = \frac{1}{1-\lambda (\tilde G_k+\Pi_k)V} = \tilde M_k\frac{1}{1-\lambda \Pi_k V\tilde M_k},
\end{split}
\ee
where we use the notation
$
\tilde M_k := (1-\lambda \tilde G_kV)^{-1}. 
$
Then, by applying the Taylor series expansion, we can rewrite the second term in the above operator product as
\be
\begin{split}
\frac{1}{1-\lambda \Pi_k V\tilde M_k}  &= \sum_{j=0}^\infty (\lambda \Pi_k V \tilde M_k)^{j} = 1 + \sum_{j=1}^\infty(\lambda \Pi_k V\tilde M_k)^j
\\
&= 1 + \lambda \Biggl(\sum_{j=0}^\infty(\lambda \Pi_k V\tilde M_k)^j\Biggl) \Pi_k V\tilde M_k
= 1 + \lambda\frac{1}{1-\lambda \Pi_k V\tilde M_k} \Pi_k V\tilde M_k
\end{split}
\ee
Here, in the last step, we recombine the series back into the original expression. This gives us an expanded version of Eq.~\eqref{eqs:M_exp} as
\be
M_k = \tilde M_k \Biggl(1+ \lambda \frac{1}{1-\lambda \Pi_k V\tilde M_k} \Pi_k V\tilde M_k\Biggl) =  \tilde M_k \Biggl(1+  \frac{\lambda|\psi_{1-k}^{(0)}\>\<\psi_{1-k}^{(0)}| V\tilde M_k}{E_k-E_{1-k}^{(0)}-\lambda \<\psi_{1-k}^{(0)}| V\tilde M_k|\psi_{1-k}^{(0)}\>} \Biggl).
\ee
The usefulness of this expression becomes apparent when we insert it into Eq.~\eqref{eqs:new_bw_series} and obtain
\be\label{eqs:psi_expansion1}
\forall k\in\{0,1\}: \quad|\psi_k\>  = \frac{1}{\mathcal N_k}\tilde M_k\Bigl(|\psi_k^{(0)}\>+f_k(\lambda)|\psi_{1-k}^{(0)}\>\Bigl),
\ee
where we introduced
\be
f_k(\lambda) = \frac{\lambda\<\psi_{1-k}^{(0)}|V\tilde M_k|\psi_{k}^{(0)}\>}{E_k-E^{(0)}_{1-k}-\lambda \<\psi_{1-k}^{(0)}| V\tilde M_k|\psi_{1-k}^{(0)}\>}, \qquad f_1(\lambda) = -f_0^*(\lambda)+O(\lambda^d)
\ee
Let us rewrite Eq~\eqref{eqs:psi_expansion1} using a new normalization $\tilde{\mathcal{N}}_k$ as
\be
|\psi_k\> = \frac{1}{\tilde{\mathcal N_k}}\sum_{k'}\beta_{kk'}\tilde M_k|\psi_{k'}^{(0)}\>+O(\lambda^d),
\ee
where $\beta$ is a unitary matrix
\be
\beta = \frac1{w}
\begin{pmatrix}
1 & f_0\\
-f^*_0 & 1
\end{pmatrix},
\ee
using $w = \sqrt{1+|f_0|^2}$ and $\tilde N_k = \mathcal N_k/w$.
From the normalization condition we have
\be
\<\psi_k|\psi_l\> = \frac{1}{\tilde{\mathcal N}_k\tilde{\mathcal N}_l}\sum_{k'l'} \beta^*_{kk'}\beta_{ll'}\<\psi^{(0)}_{k'}|\tilde M^\dag_k\tilde M_l|\psi^{(0)}_{l'}\> =\frac{1}{\tilde{\mathcal N}^2_k}\<\psi^{(0)}_{k}|\tilde M^\dag_k\tilde M_k|\psi^{(0)}_{k}\> \delta_{kl}+O(\lambda^d) = \delta_{kl}.
\ee
Therefore
\be
\tilde N_k = [\<\psi^{(0)}_{k}|\tilde M^\dag_k\tilde M_k|\psi^{(0)}_{k}\>]^{-1/2}+O(\lambda^{d}),
\ee
Using the fact that $\tilde M_0 = \tilde M_1+O(\lambda^d)
$, we conclude that $\mathcal N = \tilde{\mathcal N_0} = \tilde{\mathcal N_1}+O(\lambda^d)$, which leads us to
\be
|\psi_k\> = \frac{1}{\mathcal N}\sum_{k'}\beta_{kk'}\sum_{m=0}^\infty (\lambda \tilde G_0 V)^m |\psi_{k'}^{(0)}\>+O(\lambda^{d}).
\ee
This expression concludes our proof.
\end{proof}\\

\begin{proof} \textbf{of Lemma~\ref{KLC_lemma}.} 
We expand both codewords in BW series using Eq.~\eqref{app:BW_state1} to get
\be\label{appqw:sdvi}
    \bra{\Omega_k} P_{\alpha} P_{\beta} \ket{\Omega_{k'}} = \frac{1}{\mathcal{N}^2} \sum_{l,l' \in \{0,1\}} u^*_{kl} u_{k'l'} \sum_{m,m'=0}^{\infty} \lambda^{m+m'} \bra{\Omega_{l}^{(0)}} (V\tilde G_0)^m P_{\alpha} P_{\beta} (\tilde G_0V)^{m'}\ket{\Omega_{l'}^{(0)}}
\ee
Next, we use the decomposition of the perturbation operator $V = \sum_\nu v_\nu K_\nu$ with $k$-local Pauli operators $K_\nu$. Then consider $|\psi^{(0)}_k\>$ to be an eigenstate of the Hamiltonian $H_0$. By Claim~\ref{prop:excite}, $K_\nu |\psi^{(0)}_k\>$ is also an eigenstate of the Hamiltonian $H_0$, hence
\be
\tilde G_0 K_\nu |\psi^{(0)}_k\> = \frac{1}{E_0-E_{\nu k}} K_\nu |\psi^{(0)}_k\>, \qquad E_{\nu k} :=  \<\psi^{(0)}_k|K_\nu H_0 K_\nu |\psi^{(0)}_k\>
\ee
Since $|\Omega_k^{(0)}\> \equiv |\psi_k^{(0)}\> $ for $k=0,1$, there exist real coefficients $\kappa_{\nu_1\dots\nu_m}$ such that
\be
(V\tilde G_0)^m|\Omega_k^{(0)}\> = \sum_{\nu_1\dots\nu_m} \kappa_{\nu_1\dots\nu_m}K_{\nu_1}\dots K_{\nu_m}|\Omega_k^{(0)}\>
\ee
Using this relation and the Knill-Laflamme condition in Eq.~\eqref{appeqn:KL condition}, for all $m+m'< q:=\lceil (d-2p)/k\rceil$ the matrix element in Eq.~\eqref{appqw:sdvi} has the form
\be\label{app:fract920-}
m+m'< q:=\lceil (d-2p)/k\rceil: \qquad \bra{\Omega_{l}^{(0)}} (V\tilde G_0)^m P_{\alpha} P_{\beta} (\tilde G_0V)^{m'}\ket{\Omega_{l'}^{(0)}} = 
\epsilon^{mm'}_{\alpha\beta}\delta_{kk'}.
\ee
We have used the following notations
\be
\epsilon^{mm'}_{\alpha\beta} := \sum_{\mu_1,\dots \mu_m}\sum_{\nu_1,\dots,\nu_{m'}}\kappa^*_{\mu_1,\dots \mu_m}\kappa_{\nu_1,\dots \nu_m}\prod_{i=1}^m v^*_{\mu_i}\prod_{i=1}^{m'} v_{\nu_i}
\ee
Inserting Eq.~\eqref{app:fract920-} into Eq~\eqref{appqw:sdvi}, we get
\be
\begin{split}
    \bra{\Omega_k} P_{\alpha} P_{\beta} \ket{\Omega_{k'}} = \epsilon_{\alpha \beta} \delta_{kk'} + \lambda^{q} h_{\alpha k, \beta k'},
\end{split}
\ee
where
\be
\begin{split}
&\epsilon_{\alpha\beta} = \frac{1}{\mathcal{N}^2}\sum_{m=0}^{q-1}\sum_{m'=0}^{q-1-m} \lambda^{m+m'}\epsilon^{mm'}_{\alpha\beta},\\
&h_{\alpha k,\beta k'} = \frac{1}{\mathcal{N}^2}\sum_{m,m'=0}^{\infty} \lambda^{m+m'-q}H(m+m'-q) \sum_{l,l' \in \{0,1\}} u^*_{kl} u_{k'l'}   \bra{\Omega_{l}^{(0)}} (V\tilde G_0)^m P_{\alpha} P_{\beta} (\tilde G_0V)^{m'}\ket{\Omega_{l'}^{(0)}},
\end{split}
\ee
where $H(\cdot)$ is the Heaviside step function. Note that since $P_{\alpha}^2 = I$, we have $\epsilon_{\alpha \alpha}=1$ and $h_{\alpha k, \alpha k'} = 0$. Also since $\bra{\Omega_{k}} P_{\alpha} P_{\beta} \ket{\Omega_{k'}} = \bra{\Omega_{k'}} P_{\beta} P_{\alpha} \ket{\Omega_{k}}^*$, we have $h_{\alpha k, \beta k'} = h^*_{\beta k', \alpha k}$. 
\end{proof}
\\

\begin{proof} \textbf{of Lemma~\ref{lem2}.} 
According to its definition, the operator $O_{\rm err}$ in Eq.~\eqref{eq:logical_error} is a Pauli operator and thus is a product of single-qubit Pauli operators. Given the freedom in choosing $p$-local operators $P_\alpha$ and $P_\beta$, we can set $P_\alpha P_\beta$ to be a product of $2p$ one-qubit Pauli operators that are already in $O_{\rm err}$. Then we choose a set of \textit{non-overlapping} $k$-local Pauli operators $\{K_i\}_{i=1}^{\lceil (d-2p)/k\rceil}$ to be complementary such that the uncorrectable error operator can be written as
\be
O_{\rm err} = K_1\dots K_{\lceil (d-2p)/k\rceil} P_\alpha P_\beta.
\ee
We also choose operators $K_i$ that have no overlap with $P_\alpha P_\beta$. Now we choose the perturbation $V$ in the form
\be
V = \sum_i K_i.
\ee
For this perturbation, we can rewrite the main object of this Lemma as
\be\label{eqs:overlap}
\begin{split}
(T_{\alpha\beta})_{kl} &:= \sum_{m=0}^{d-2p}\<\Omega_{k}^{(0)}|(V\tilde G_0)^m P_\alpha P_\beta (\tilde G_0 V)^{d-m-2p}|\Omega_l^{(0)}\>\\
&=\sum_{m=0}^{d-2p}\sum_{\vec{a}}\<\Omega_{k}^{(0)}|K_{a_1}\tilde G_0\dots K_{a_m}\tilde G_0 P_\alpha P_\beta \tilde G_0 K_{a_{m+1}}\tilde G_0\dots \tilde G_0K_{a_{d}} |\Omega_l^{(0)}\>,
\end{split}
\ee
 Now, let us define a sequence of energy values
\be
\mathcal E_{\vec{a},m} = \{E_{(a_1),m}E_{(a_1,a_2),m}\dots E_{(a_1,\dots a_d),m}\}
\ee
where
\be
E_{(a_1,\dots, a_l),m} = 
\begin{cases}
\<\Omega_0^{(0)}|K_{a_1}\dots K_{a_l}H_0 K_{a_l}\dots K_{a_1}|\Omega_0^{(0)}\>, \quad \text{if} \quad l<m,\\
\<\Omega_0^{(0)}|K_{a_1}\dots K_{a_l}P_\beta P_\alpha  H_0 P_\alpha P_\beta K_{a_l}\dots K_{a_1}|\Omega_0^{(0)}\>, \quad \text{if} \quad l=m,\\
\<\Omega_0^{(0)}|K_{a_1}\dots K_{a_m}P_\beta P_\alpha K_{a_m}\dots K_{a_l}  H_0 K_{a_l}\dots K_{a_m} P_\alpha P_\beta K_{a_l}\dots K_{a_1}|\Omega_0^{(0)}\>, \quad \text{if} \quad l>m.\\
\end{cases}
\ee
Each $E_{(a_1,\dots, a_l),m}$ is an energy of an eigenstate of the original stabilizer Hamiltonian, therefore $E_{(a_1,\dots, a_l),m}>E_0$. Using the expression for $\tilde G_0$ in Eq.~\eqref{app:BW_state2} we can rewrite Eq.~\eqref{eqs:overlap} as
\be
\begin{split}
(T_{\alpha\beta})_{kl}  = \<\Omega_{k}^{(0)}|O_{\rm err} |\Omega_l^{(0)}\>\sum_{m=0}^{d-2p} \sum_{\vec{a}\in {\rm Perm}(d+1)}\prod_{E\in \mathcal E_{\vec{a},m}}\frac{1}{E_0-E} = c'\<\Omega_{k}^{(0)}|O_{\rm err} |\Omega_l^{(0)}\>,
\end{split}
\ee
where ${\rm Perm}(n)$ is defined as a set of all vectors obtained by permutation of elements in $(1,\dots, n)$ and $c'$ is a non-vanishing coefficient. $|c'|>0$ because it is a sum of terms all having the same sign, as $E_0<E$ for any $E\in \mathcal E_{\vec{a},m}$. Using Eq.~\eqref{eqs:overlap}, we then obtain
\be
\|\tilde T_{\alpha\beta}\|_1 = |c'|\|\delta O\|_1>0.
\ee
This expression concludes our proof.
\end{proof}

\section{Proof of Theorem 1}
\label{app:proof_thm1}

\begingroup
\def\thetheorem{\ref{mythm}}
\addtocounter{thm}{-3}
\begin{thm}
\label{supp_thm1}
\textup{(formal)} Let $H_0$ be a $[[n,1,d]]$ stabilizer Hamiltonian with $r$-local parity check operators $S_a$, $Q_L$ a logical Pauli operator, and $V = \sum_{i=1}^N V_i$, where $V_i$ are $k$-local operators, $k < d$. Then, for any $V$ satisfying $\|[V,S_a]\|>0$ for $N'$ operators $S_a$, the generalization error in Eq.~\eqref{eqn:L2_logical} for noiseless input 
($p=0$) is $\varepsilon_Q(Q_L) = \Theta(\lambda^{4}N'^2/r^4)$. 
\end{thm}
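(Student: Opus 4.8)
The plan is to substitute the Brillouin--Wigner expansion of the perturbed codewords from Lemma~\ref{lem1} into the generalization error Eq.~\eqref{eqn:L2_logical} and extract the lowest surviving power of $\lambda$. Fixing the codeword unitary $u=\beta^\dagger$ so that $\ket{\Omega_i}$ is the adiabatic continuation of the unperturbed codeword, Lemma~\ref{lem1} gives $\ket{\Omega_i}=\mathcal N^{-1}\sum_{m\ge 0}(\lambda\tilde G_0 V)^m\ket{\Omega_i^{(0)}}+O(\lambda^d)$. Since $f_Q(\theta,\phi)=\sum_{ij}c_i^*c_j\,(Q_L^{(0)})_{ij}$ with $(Q_L^{(0)})_{ij}=\bra{\Omega_i^{(0)}}Q_L\ket{\Omega_j^{(0)}}$ is exactly the $\lambda=0$ value of $\bra{\Psi}Q_L\ket{\Psi}$, the integrand of $\varepsilon_Q(Q_L)$ is the real function $g(\theta,\phi):=\bra{\Psi}Q_L\ket{\Psi}-f_Q$, a power series in $\lambda$ with vanishing constant term; the whole proof reduces to locating its leading order and its coefficient.

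The decisive structural step is that the order-$\lambda$ term of $g$ vanishes. Because $Q_L$ commutes with every stabilizer, $Q_L\ket{\Omega_j^{(0)}}$ stays in the codespace, whereas $\tilde G_0$ in Eq.~\eqref{app:BW_state2} is supported only on excited eigenstates and hence annihilates any codespace vector acting from either side. Every first-order matrix element places a single $\tilde G_0 V$ against $Q_L$ acting on a codeword (the $(m,m')=(1,0)$ and $(0,1)$ contributions), so each vanishes, as does the $O(\lambda)$ piece of the normalization $\mathcal N$. Thus $g=O(\lambda^2)$, which is precisely what yields the advertised $\lambda^4$ after squaring, \emph{independently of the code distance} $d$.

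At second order only the two-sided term $\lambda^2\bra{\Omega_i^{(0)}}V\tilde G_0\,Q_L\,\tilde G_0 V\ket{\Omega_j^{(0)}}$ together with the normalization shift $-\lambda^2 B\,(Q_L^{(0)})_{ij}$, $B=\bra{\Omega_0^{(0)}}V\tilde G_0^2 V\ket{\Omega_0^{(0)}}$, survive. To evaluate the propagators I would write $V=\sum_\nu v_\nu K_\nu$ in Pauli strings and invoke Claim~\ref{prop:excite}: $K_\nu\ket{\Omega_j^{(0)}}$ is an $H_0$-eigenstate of energy $E_0+2m_\nu$, where $m_\nu$ is the number of checks anticommuting with $K_\nu$, so $\tilde G_0 K_\nu\ket{\Omega_j^{(0)}}=-(2m_\nu)^{-1}K_\nu\ket{\Omega_j^{(0)}}$ (terms with $m_\nu=0$ are killed). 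Using $K_\nu Q_L K_\nu=\pm Q_L$ and the Knill--Laflamme condition Eq.~\eqref{appeqn:KL condition}, the diagonal ($\nu=\nu'$) contributions collapse to multiples of $(Q_L^{(0)})_{ij}$; when combined with $B$ the pieces commuting with $Q_L$ cancel and only the check-violating terms \emph{anticommuting} with $Q_L$ remain, giving
\be
g(\theta,\phi)=-\lambda^2\,f_Q(\theta,\phi)\,\gamma+O(\lambda^3),\qquad \gamma=\sum_{\nu:\,\{K_\nu,Q_L\}=0,\ m_\nu>0}\frac{v_\nu^2}{2m_\nu^2}.
\ee

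It then remains to size $\gamma$ and integrate. Each energy denominator $m_\nu$ counts checks that anticommute with a local term; for weight-$r$ stabilizers this is $\Theta(r)$, while the hypothesis $\|[V,S_a]\|>0$ on $N'$ checks fixes the number of surviving anticommutation channels in the sum to scale as $N'$, so $\gamma=\Theta(N'/r^2)$. Substituting into Eq.~\eqref{eqn:L2_logical} gives $\varepsilon_Q(Q_L)=\lambda^4\gamma^2\!\int\! d\theta d\phi\,\mu\,f_Q^2+O(\lambda^5)=\Theta(\lambda^4 N'^2/r^4)$, using $\int\mu f_Q^2>0$. The main obstacle is this last, two-sided accounting: the upper bound $O(\lambda^4)$ is immediate once first order vanishes, but the matching lower bound $\Omega(\lambda^4 N'^2/r^4)$ requires showing that for generic $V$ the order-$\lambda^2$ coefficient does not cancel—one must control the off-diagonal $\nu\ne\nu'$ and stabilizer-completion terms (where $K_\nu K_{\nu'}$ is a nontrivial check) and verify they cannot conspire against the diagonal part—and then pin the exact powers of $N'$ and $r$ from the anticommutation and energy-denominator counting.
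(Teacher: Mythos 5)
Your overall route is essentially the paper's: expand the perturbed codewords with the Brillouin--Wigner series of Lemma~\ref{lem1}, note that the first-order term vanishes because $\tilde G_0$ annihilates the codespace, and square the surviving $O(\lambda^2)$ deviation to obtain $\lambda^4$. Your diagonal bookkeeping (the $K_\nu$ commuting with $Q_L$ cancel against the normalization shift, the anticommuting ones survive) also reproduces the paper's structure, in which the second-order coefficient is the matrix $M_{kl}=\langle\Omega_k^{(0)}|V\tilde G_0\,(I-Q_L)\,\tilde G_0 V|\Omega_k^{(0)}\rangle\,\delta_{kl}$: the $I$ is precisely the normalization contribution and $I-Q_L$ kills the $Q_L$-even channels.

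However, the $\Omega(\lambda^{4}N'^2/r^4)$ half of the $\Theta$ statement is left as an acknowledged hole in your write-up: you compute only the $\nu=\nu'$ Pauli-diagonal contributions and concede that the off-diagonal and stabilizer-completion terms must be shown ``not to conspire against the diagonal part.'' This is a genuine gap, and it is exactly where the paper's organization pays off. Instead of expanding in Pauli components, resolve $I-Q_L=2\Pi_-$, with $\Pi_-$ the projector onto the $Q_L=-1$ eigenspace, and insert the simultaneous eigenbasis $\{|\psi_{i^\pm}^{(0)}\rangle\}$ of $H_0$ and $Q_L$ (which exists since $[H_0,Q_L]=0$). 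Then
\be
M_k = 2\sum_{i>0}\frac{\bigl|\langle\Omega_k^{(0)}|V|\psi_{i^-}^{(0)}\rangle\bigr|^2}{(E_0-E_i)^2}\;\geq\;0,
\ee
a sum of non-negative terms: every $\nu\neq\nu'$ cross term you worry about is already absorbed inside the squared amplitudes $|\langle\Omega_k^{(0)}|V|\psi_{i^-}^{(0)}\rangle|^2$, so no cancellation is possible --- the positivity is structural and requires no genericity assumption at this step. The hypothesis $\|[V,S_a]\|>0$ for $N'$ checks then guarantees $N'$ nonvanishing amplitudes, and the $r$-locality of the parity checks makes each denominator $(E_i-E_0)^2=\Theta(r^2)$, giving $M_k=\Theta(N'/r^2)$ and hence $\varepsilon_Q(Q_L)=\Theta(\lambda^4 N'^2/r^4)$ after the (strictly positive) angular integral. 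Replacing your Pauli-diagonal computation by this spectral decomposition closes your argument; everything else in your proposal matches the paper's proof.
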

\endgroup

\begin{proof}
Let us rewrite the generalization error measure from Eq.~\eqref{eqn:L2_logical} as
\be
\varepsilon_Q (Q_L) = \int d\theta d\phi \mu(\theta,\phi) \Bigl(\<\Psi_{\theta,\phi}|Q_L|\Psi_{\theta,\phi}\>-\<\Psi^{(0)}_{\theta,\phi}|Q_L|\Psi^{(0)}_{\theta,\phi}\>\Bigl)^2,
\ee
where $Q_L\in\{X_L,Y_L,Z_L\}$ is a logical Pauli operator, $|\Psi_{\theta,\phi}^{(0)}\>$ are logical states of unperturbed stabilizer code and $|\Psi_{\theta,\phi}\>$ are logical states of perturbed code,
\be \label{eq:coeff_thm1}
|\Psi_{\theta,\phi}\> = \sum_{k\in\{0,1\}} \psi_k(\theta,\phi)|\Omega_k\>, \qquad |\Psi^{(0)}_{\theta,\phi}\> = \sum_{k\in\{0,1\}} \psi_k(\theta,\phi)|\Omega^{(0)}_k\>,
\ee
where $\psi_0(\theta,\phi) = \cos \theta$ and $\psi_1(\theta,\phi) = e^{i\phi}\sin\theta$ are two-dimensional wavefunction components. We choose the codewords to be the simultaneous lowest energy eigenstates of the Hamiltonian $H_0$ and eigenstates of the logical operator,
\be
Q_L|\Omega^{(0)}_k\> = (-1)^k|\Omega_k^{(0)}\rangle.
\ee
In turn, the perturbed codewords are a superposition of two lowest eigenstates of the perturbed Hamiltonian $H$,
\be \label{app:alpha_codeword}
|\Omega_k\> = \sum_{k'\in\{0,1\}} \alpha_{kk'}|\psi_{k'}\>,
\ee
where $\alpha_{kk'}$ are components of a two-dimensional unitary rotation. To prove the theorem, we must find the lower bound under the assumption that the choice of $\alpha$ can be arbitrary.

Using these definitions, we rewrite the generalization error as
\be
\varepsilon_Q(Q_L) = \int d\theta d\phi\mu(\theta,\phi)\Biggl(\sum_{k,l\in\{0,1\}}\psi^*_k(\theta,\phi)\psi_l(\theta,\phi)\left[\<\Omega_k|Q_L|\Omega_l\>-\<\Omega^{(0)}_k|Q_L|\Omega^{(0)}_l\>\right]\Biggl)^2.
\ee
 We also use Lemma~\ref{lem1} to rewrite the eigenstates of the Hamiltonian as a series
\be\label{eqs:codew-to-eigv}
|\psi_k\> = \frac{1}{\mathcal N}\sum_{k'\in\{0,1\}} \beta_{kk'} \sum_{m=0}^\infty (\lambda \tilde G_0 V)^m|\Omega_{k'}^{(0)}\>+O(\lambda^{d}),
\ee
where we have chosen $|\psi_k^{(0)}\> = |\Omega_k^{(0)}\>$ for $k\in\{0,1\}$. Combining the expressions in Eqs.~\eqref{app:alpha_codeword} and \eqref{eqs:codew-to-eigv}, we get the resulting expression for the codewords of the perturbed Hamiltonian by the stabilizer codewords as
\be \label{app:pert_code_series}
|\Omega_k\> = \frac{1}{\mathcal N}\sum_{k'} u_{kk'}\sum_{m=0}^\infty (\lambda \tilde G_{0} V)^m|\Omega_{k'}^{(0)}\>+O(\lambda^{d}),
\ee
where we introduce two-dimensional matrix $u = \alpha\beta$, $uu^\dag = I$. With help of this expression, we rewrite
\be
\<\Omega_k|Q_L|\Omega_l\> = \frac{1}{\mathcal N^2}\sum_{k'l'}u^*_{kk'}u_{ll'}\sum_{m,m'=0}^\infty \lambda^{m+m'}\<\Omega_{k'}^{(0)}|(V\tilde G_0)^mQ_L(\tilde G_0V)^{m'}|\Omega_{l'}^{(0)}\>+O(\lambda^{d}).
\ee
Next, considering only lowest orders in perturbation theory, we get the expression
\be\label{eqs:matrix_element_normed}
\<\Omega_k|Q_L|\Omega_l\> = \frac{1}{\mathcal N^2}\sum_{k'l'}u^*_{kk'}u_{ll'}\Bigl((-1)^{k'}\delta_{k'l'}+ \lambda^2 \<\Omega_{k'}^{(0)}|V\tilde G_0 Q_L \tilde G_0V|\Omega_{l'}^{(0)}\>\Bigl)+O(\lambda^{3})
\ee
where we used that $\tilde G_0|\Omega^{(0)}_k\> = 0$. Using perturbation theory and the Knill-Laflamme condition, we can also rewrite the normalization condition as
\be
\<\Omega_k|\Omega_k\> = \frac{1}{\mathcal N^2}\Bigl(1+\lambda^2\<\Omega_0^{(0)}|V\tilde G^2_0V|\Omega_{0}^{(0)}\>+O(\lambda^3)\Bigl) = 1.
\ee
Thus, we conclude that
\be\label{eqs:norm_expansion}
\frac1{\mathcal N^2} = 1-\lambda^2\<\Omega_0^{(0)}|V\tilde G^2_0V|\Omega_{0}^{(0)}\>+O(\lambda^{3}).
\ee
Combining Eqs.~\eqref{eqs:matrix_element_normed} and \eqref{eqs:norm_expansion}, we get the expression
\be
\<\Omega_k|Q_L|\Omega_l\> = \sum_{k'l'}u^*_{kk'}u_{ll'}\Bigl((-1)^{k'}\delta_{k'l'}-\lambda^2 M_{k'l'}\Bigl)+O(\lambda^{3}),
\ee
where we introduced $2\times2$ matrix
\be
M_{kl} = \<\Omega_{k}^{(0)}|V\tilde G_0 (I-Q_L) \tilde G_0V|\Omega_{k}^{(0)}\>\delta_{kl}.
\ee
Since eigenstates in $H_0$ are doubly degenerate at each energy level, it is convenient to split the indices $k$ in $\ket{\psi_k^{(0)}}$ as $\ket{\psi_{i ^\pm}^{(0)}}$ to denote simultaneous eigenstates of $Q_L$ and $H_0$,
\be
Q_L \ket{\psi_{i ^\pm}^{(0)}} = \pm \ket{\psi_{i ^\pm}^{(0)}}, \quad H_0\ket{\psi_{i ^\pm}^{(0)}}=E_{i}^{(0)} \ket{\psi_{i ^\pm}^{(0)}}
\ee
Note that $\ket{\psi_{0+}^{(0)}} \equiv |\Omega_0^{(0)}\>$ and $\ket{\psi_{0-}^{(0)}} \equiv |\Omega_1^{(0)}\>$ by definition. Since $[Q_L, H_0] = 0$, we can present the matrix $M$ as
\be
M  =
\begin{pmatrix}
M_0 & 0\\
0 & M_1
\end{pmatrix}
\ee
where its diagonal elements satisfy
\be
\begin{split}
&M_k = 2 \sum_{i>0}\frac{1}{(E_0-E_{i})^2}|\<\Omega^{(0)}_k|V|\psi_{i^-}^{(0)}\>|^2.
\end{split}
\ee
Now, we can rewrite the generalization error as 
\be
\varepsilon_Q(Q_L) = \int d\theta d\phi \mu(\theta,\phi) \Bigl(\<\theta\phi|u (Z-\lambda^2 M)u^\dag|\theta\phi\>-\<\theta\phi|Z|\theta\phi\>\Bigl)^2+O(\lambda^3),
\ee
where $Z$ is $2\times 2$ Pauli-Z matrix and $|\theta\phi\> = \cos \theta|0\> +e^{i\phi}\sin\theta|1\>$. Assuming that $V = \sum_{i=1}^N V_i$ consists of  local terms that does not commute with at least $N'$ stabilizers $S_a$,  $\<\Omega^{(0)}_k|V|\psi_{i^-}^{(0)}\>$ is non-vanishing for at least $N'$ eigenstates. 
Furthermore, assuming that the majority of local terms do not commute with $\propto r$ stabilizers, the energy difference in the denominator is $E_{i} - E_0 \propto r$. Therefore
\be
\varepsilon_Q(Q_L) = O(\lambda^4 \max(M_0,M_1))=O(\lambda^4 N'^2/r^4).
\ee
This expression concludes our proof.
\end{proof}

\begin{rem}\label{rm1} Condition 
$[H_0,V]=0$ break the notion of generality, both provide vanishing generalization error.
\end{rem}

Indeed, the first condition automatically satisfies $\<\Omega_{k}^{(0)}|V|\psi_{i^-}^{(0)}\> =0$. Using the second condition, one can also show that 
\be
\forall i\neq0: \qquad \<\Omega_{k}^{(0)}|V|\psi_{i^-}^{(0)}\> = \frac{1}{E_{0}^{(0)}-E_{i}^{(0)}}\<\Omega_{k}^{(0)}|[H_0,V]|\psi_{i^-}^{(0)}\> = 0.
\ee
 It is also straightforward to show that these assumptions lead to vanishing higher orders.

\begin{rem}
 For a typical local term $V$, one should expect $N'\sim N$.
\end{rem}

\begin{rem}
 The scaling exponent 4 in Theorem~\ref{supp_thm1} is due to the fact that we use mean square loss to measure generalization error, the exponent would be $2x$ if we used $L_x$ loss function.
\end{rem}

\section{Proof of Theorem 2}
\label{app:proof_thm2}

First, let us introduce the weighted squared expectation for a $2\times 2$ operator $O$ as
\be
\<O\>_{\mu(\theta,\phi)} : = \min_{\theta_0,\phi_0}\int d\theta d\phi\, \mu(\theta_0+\theta,\phi_0+\phi)|\<\psi_{\theta,\phi}|O|\psi_{\theta,\phi}\>|^2,
\ee
where we define $|\psi_{\theta,\phi}\> = \cos \theta |0\>+ e^{i\phi}\sin\theta|1\>$.
For later convenience, we also introduce the following notation for the $\Lambda^{\text{th}}$ order contributions (in the BW expansion) of decoding noisy input states using standard QEC,
\be
\qquad W^{(\Lambda)}_{kl}:=\mathbb E_\alpha \sum_{m=0}^{\Lambda}\Tr(Z_L \mathcal E_{\rm QEC}(|\psi^m_{l\alpha}\>\<\psi^{\Lambda-m}_{k\alpha}|),
\ee
where $|\psi_{k\alpha}^m\> = (V\tilde G_0)^{m}P_\alpha |\Omega_k^{(0)}\>$.
Then we formulate the formal version of the theorem as

\begin{thm}
\label{supp_thm2}
\textup{(Formal)} Let $H_0$ be the stabilizer Hamiltonian for the $[[n,1,d]]$ code, $V = \sum_{i=1}^N V_i$, where $V_i$ are $k$-local operators, $k < d$, and $Q_{\rm QEC}$ is the operator of logical measurement after correction in Eq.~(). Then, for any $V$, input distribution $\mu(\theta,\phi)$, and distribution of $p$-qubit Pauli errors $P_\alpha$, the generalization error in Eq.~\eqref{eqn:L2_logical} is
\be
\varepsilon_Q(Q_{\rm QEC}) = O(\lambda^{2\lceil \xi/k\rceil}),
\ee
where $\xi = d$ if $p=0$ and $\xi = d-2p+1$ if $p>0$. Moreover, for $p\geq 1$ if
\be \label{eq:additional_assumption}
\<W^{(d-2p+1)}\>_{\mu(\theta,\phi)}>0,
\ee
 the generalization error is
 \be 
 \varepsilon_Q(Q_{\rm QEC}) = \Theta(\lambda^{2\lceil(d+1-2p)/k\rceil}).
 \ee
\end{thm}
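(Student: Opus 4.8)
The plan is to collapse the generalization error onto a single deviation amplitude, organize that amplitude by order in the Brillouin--Wigner (BW) expansion, and then show that every contribution below a sharp threshold order is corrected \emph{exactly} by the recovery map, so that the leading surviving order alone fixes the scaling.

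First I would use the adjoint identity to write $\<\Psi|P_\alpha Q_{\rm QEC}P_\alpha|\Psi\> = \Tr\!\big(Q_L\,\mathcal{E}_{\rm QEC}(P_\alpha|\Psi\>\<\Psi|P_\alpha)\big)$, and observe that on the \emph{unperturbed} codewords a correctable $p$-qubit Pauli ($2p<d$) is undone by $\mathcal{E}_{\rm QEC}$, so the ideal value $f_Q$ is reproduced exactly; the whole error is then the $\mu$-average of the square of the deviation $\Delta_\alpha(\theta,\phi):=\Tr(Q_L\mathcal{E}_{\rm QEC}(P_\alpha|\Psi\>\<\Psi|P_\alpha))-f_Q$ between the perturbed and unperturbed decoded expectations. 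Expanding each perturbed codeword with Lemma~\ref{lem1}, $|\Omega_k\>=\mathcal N^{-1}\sum_{k'}u_{kk'}\sum_m(\lambda\tilde G_0V)^m|\Omega_{k'}^{(0)}\>+O(\lambda^{d})$, and invoking Claim~\ref{prop:excite} to turn each factor $\tilde G_0V$ into a fixed combination of products of $k$-local Pauli strings acting on the unperturbed codewords, reduces $\Delta_\alpha$ to a double sum over bra/ket BW orders $m,m'$ of terms $\Tr(Z_L\mathcal{E}_{\rm QEC}(|\psi_{l\alpha}^{m'}\>\<\psi_{k\alpha}^{m}|))$; collecting the terms of fixed total order $\Lambda=m+m'$ reproduces exactly the quantities $W^{(\Lambda)}_{kl}$ defined above, weighted by the input amplitudes $\psi_k(\theta,\phi)$.

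The heart of the upper bound is to prove that $W^{(\Lambda)}$ cannot carry a logical fault for $\Lambda k<\xi$. Writing $\mathcal{E}_{\rm QEC}(\cdot)=\sum_s R_s\Pi_s(\cdot)\Pi_s R_s^\dagger$, a ket term $P_\alpha K$ (with $K$ a product of at most $m'$ of the $k$-local Paulis) and a bra term $P_\alpha K'$ survive the syndrome projectors only if they carry the \emph{same} syndrome; by $\mathbb Z_2$-linearity this forces the relative string $K'K$, of weight at most $\Lambda k$, into the normalizer. I would then separate the two ways a deviation can arise: (i)~$K'K$ is itself a \emph{logical} operator, which requires weight $\ge d$ and hence $\Lambda k\ge d$; or (ii)~$K'K$ is a stabilizer (so the two errors lie in the same logical class) but the common minimum-weight recovery $R_s$ sends them to the wrong logical, which is possible only if \emph{both} $P_\alpha K$ and $P_\alpha K'$ have weight exceeding $\lfloor(d-1)/2\rfloor$; since the noise contributes weight at most $p$ on each side, mechanism~(ii) requires $mk,m'k\ge(d+1)/2-p$ and hence $\Lambda k\ge d+1-2p$. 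Taking the smaller threshold gives $\xi=\min(d,\,d+1-2p)$, which is $d$ for $p=0$ and $d+1-2p$ for $p\ge1$, exactly as claimed; below it every surviving term is mapped by $R_s$ to the identical logical and collapses, via the Knill--Laflamme identity Eq.~\eqref{appeqn:KL condition}, to the clean value $\<\Omega^{(0)}|Q_L|\Omega^{(0)}\>$, so $W^{(\Lambda)}$ drops out of $\Delta_\alpha$. Hence $\Delta_\alpha=O(\lambda^{\lceil\xi/k\rceil})$ uniformly in $\alpha$, and, after checking that the $O(\lambda^{d})$ truncation tail of Lemma~\ref{lem1} is subleading because $\lceil\xi/k\rceil\le d$, squaring and integrating gives $\varepsilon_Q(Q_{\rm QEC})=O(\lambda^{2\lceil\xi/k\rceil})$.

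For the tight $\Theta$ statement with $p\ge1$ I would turn the same expansion into a matching lower bound. Using convexity, $\varepsilon_Q(Q_{\rm QEC})=\int d\theta d\phi\,\mu\,\mathbb E_\alpha[\Delta_\alpha^2]\ge\int d\theta d\phi\,\mu\,(\mathbb E_\alpha\Delta_\alpha)^2$, so it suffices to bound the \emph{averaged} amplitude from below; its leading surviving term is of order $\lambda^{\lceil(d+1-2p)/k\rceil}$ and is built precisely from the threshold contribution $W^{(d-2p+1)}$ (written for $k=1$). The hypothesis $\<W^{(d-2p+1)}\>_{\mu(\theta,\phi)}>0$ states that this amplitude, after the $\mathbb E_\alpha$ average and the minimization over Bloch-sphere offsets, has strictly positive $\mu$-weighted square; since the upper-bound step already killed all lower orders, the error equals $\lambda^{2\lceil(d+1-2p)/k\rceil}\<W^{(d-2p+1)}\>_{\mu}(1+o(1))$, yielding $\varepsilon_Q(Q_{\rm QEC})=\Omega(\lambda^{2\lceil(d+1-2p)/k\rceil})$ and hence $\Theta$ together with the upper bound. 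Lemma~\ref{lem2} makes this hypothesis natural: it constructs a $k$-local $V$ for which the codewords acquire a nonzero uncorrectable overlap at order $d-2p$, supporting that the threshold fault is generically present. The main obstacle I anticipate is the operator-level exact-correction step of the upper bound: one must control \emph{mixed} terms with $m\neq m'$, where the per-side weights are only upper bounds, and verify that the common recovery $R_s$ acts identically on bra and ket whenever $K'K$ is sub-distance, so that no uncorrected logical survives---rather than settling for a loose term-by-term estimate.
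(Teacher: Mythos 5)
Your proposal follows essentially the same route as the paper's proof: the adjoint/effective-operator reduction $Q_{\rm QEC}=\mathcal E_{\rm QEC}^\dagger(Q_L)$, the BW double expansion via Lemma~\ref{lem1} collected into the $W^{(\Lambda)}$ terms, a weight-threshold case analysis (your dichotomy of logical relative string vs.\ double miscorrection is exactly the paper's Scenarios 2 and 3, yielding the same bounds $\mathcal S(K_\mu)+\mathcal S(K_\nu)\ge d$ and $\ge d+1-2p$ from $2\mathcal S(C_\beta)\le d-1$), and a Jensen-type lower bound whose leading term is controlled by the hypothesis $\<W^{(d-2p+1)}\>_{\mu(\theta,\phi)}>0$. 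The argument is correct and matches the paper's proof in all essential steps.
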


\begin{rem}
For any given $\mu(\theta,\phi)$, there is no reason to believe that Eq.~\eqref{eq:additional_assumption} is violated for more than a small family of finely tuned perturbations $V$ and noise distributions. Therefore, in the informal version of the theorem, we call $V$ that satisfies this condition to be ``general''.
\end{rem}

\begin{proof}
Consider $S_\alpha$ to be commuting Pauli parity checks for the stabilizer code satisfying $[S_\alpha, S_{\alpha'}] = 0$ and $S_\alpha|\Omega^{(0)}_k\> = |\Omega^{(0)
}_k\>$. For any such code, there exists a set of distinct Pauli corrections ${C_\beta} = C^\dag_\beta$, satisfying $S_\alpha C_\beta = s_{\alpha\beta} C_\beta S_\alpha$, where $s_{\alpha\beta}=\pm1$ are error syndromes. The choice of each operator $C_\beta$ is not unique; we consider the set where each $C_\beta$ has the smallest possible weight. In this representation, these operators must satisfy $2\mathcal S(C_\beta)\leq d-1$, where $\mathcal S(O)$ represents the weight of the operator $O$, i.e. the number of qubits it acts as a non-identity. Using these notations, we can rewrite the error correction as a quantum channel
\be\label{eqs:err_corr_mapx}
\mathcal C(\rho) := \sum_\beta F_\beta \rho F_\beta^\dag, \qquad F_\beta = C_\beta \prod_\alpha \frac 12\left(1+s_{\alpha\beta}S_\alpha\right) =  P C_\beta,
\ee
where $F_\beta$ are Kraus operators for error correction, $P$ is the projector to the codespace. It is convenient to express the expectation values of the logical operators after correction as
\be
\Tr (Q_L\mathcal C(\rho)) = \Tr (Q^{(0)}_L\mathcal C(\rho)) = \Tr (\mathcal C^\dag(Q^{(0)}_L)\rho) = \Tr(\tilde Q_L\rho),
\ee
where $\mathcal C^\dag$ is the adjacent map to the error correcting map $\mathcal C$, and $Q^{(0)}_L = |\Omega^{(0)}_0\>\<\Omega^{(0)}_0|-|\Omega^{(0)}_1\>\<\Omega^{(0)}_1|$ is the restriction of the logical operator $Q_L$ to the codespace, $|\Omega^{(0)}_k\>$ are codewords that, without loss of generality, have been chosen as eigenstates of the logical operator $Q_L$, i.e. $Q_L|\Omega^{(0)}_k\> = (-1)^k|\Omega^{(0)}_k\>$. The operator $\tilde Q_L$ represents an effective logical operator after correction,
\be \label{app:error_corrected_logical}
\tilde Q_L := \mathcal C^\dag(Q^{(0)}_L) \equiv \sum_\mu F_\mu^\dag Q^{(0)}_L F_\mu = \sum_\beta \sum_{r\in\{0,1\}}  (-1)^kC_\beta|\Omega_r^{(0)}\>\<\Omega_r^{(0)}|C_\beta.
\ee

We aim to derive an upper bound on the generalization error
\be \label{eq:gen_error_thm2}
\varepsilon_Q(Q_L) = \mathbb{E}_\alpha\int d\theta d\phi \Bigl(\<\Psi_{\theta,\phi}|P_\alpha\tilde Q_LP_\alpha|\Psi_{\theta,\phi}\>-\<\Psi^{(0)}_{\theta,\phi}|Q_L|\Psi^{(0)}_{\theta,\phi}\>\Bigl)^2
\ee
where $P_\alpha$ are Pauli operators of weight $p$ according to the statement of the theorem.

As a first step, we use the relationship in Eq.~\eqref{app:pert_code_series} between the perturbed and unperturbed codewords to express the following matrix element as
\be \label{eq:order_d}
\<\Omega_k|P_\alpha\tilde Q_L P_\alpha|\Omega_l\> = \frac{1}{\mathcal N^2}\sum_{k'l'}u^*_{kk'}u_{ll'}\sum_{m,m'=0}^\infty \lambda^{m+m'}\<\Omega_{k'}^{(0)}|(V\tilde G_0)^m P_\alpha \tilde Q_L P_\alpha (\tilde G_0V)^{m'}|\Omega_{l'}^{(0)}\>+O(\lambda^{d}).
\ee
Next we note that for a $k$-local correction, we can express $V = \sum_\nu v_\nu K_\nu$, where $K_\nu$ are $k$-local Pauli operators. Then, it is possible to write the following decomposition for the perturbation series,
\be\label{eqs:psi_k}
 (\tilde G_0 V)^m |\Omega_k^{(0)}\> = \sum_{\mu} \kappa_\mu \lambda^{s_\mu}K_\mu |\Omega_k^{(0)}\>, \qquad s_\mu := \mathcal S(K_\mu) \leq m k,
\ee
where $\kappa_\mu$ are certain coefficients that are not divergent in the limit $\lambda\to0$. To prove Eq.~\eqref{eqs:psi_k}, we express 
\be
(\tilde G_0 V)^m|\Omega_k^{(0)}\> = \sum_{\vec a}\frac{v_{a_j}\dots v_{a_1}}{(E_k-E^{(0)}_{f_j})\dots (E_k-E^{(0)}_{f_1})}\lambda^{\mathcal{S}(K_{a_j}...K_{a_1})} K_{a_j}\dots K_{a_1}|\Omega_k^{(0)}\>,
\ee
where $f_i = f_i(\vec a)$ are labels of the energy levels that depend on the sequence $\vec a$.

Using Eq.~\eqref{eqs:psi_k} leads us to the expression
\be\label{eqs:87sdyvs}
\<\Omega_k|P_\alpha\tilde Q_L P_\alpha|\Omega_l\> = \sum_{k'l'}u^*_{kk'}u_{ll'}\sum_{\mu\nu}\kappa^*_\mu\kappa_\nu \lambda^{s_\mu+s_\nu} \sum_\beta\sum_r(-1)^r \<\Omega_{k'}^{(0)}|K_\mu P_\alpha C_\beta|\Omega^{(0)}_r\>\<\Omega^{(0)}_r| C_\beta P_\alpha K_\nu|\Omega_{l'}^{(0)}\>+O(\lambda^{d}).
\ee
For the convenience of computing this expression, let us introduce the following definition. For any two Pauli operators $A$ and $B$, we say that the action of $A$ is isomorphic to action of $B$, or $A\sim B$, if $A|\Omega^{(0)}_{k}\> = B|\Omega^{(0)}_{k}\>$ for $k\in\{0,1\}$. 
Then, for any code stabilized by Pauli operators, we can rewrite
\be
\<\Omega_a^{(0)}|K_\mu P_\alpha C_\beta|\Omega_b^{(0)}\> = e^{i\phi_{\mu\alpha\beta}}\begin{cases}
\delta_{ab}, \quad &\text{if} \quad K_\mu P_\alpha C_\beta  \sim e^{i\phi_{\mu\alpha\beta}}I\\
\delta_{a\overline{b}}, \quad &\text{if} \quad K_\mu P_\alpha C_\beta \sim e^{i\phi_{\mu\alpha\beta}} X_L\\
 (-1)^ai\delta_{a\overline{b}}, \quad &\text{if} \quad K_\mu P_\alpha C_\beta \sim
 e^{i\phi_{\mu\alpha\beta}} Y_L\\
(-1)^a\delta_{ab}, \quad &\text{if} \quad K_\mu P_\alpha C_\beta \sim e^{i\phi_{\mu\alpha\beta}} Z_L\\
0, \qquad &\text{otherwise}
\end{cases}
\ee
where $\overline{a} := 1-a$ and $\phi_{\mu\alpha\beta}$ is a phase.

Keeping these expressions in mind, let us focus on the product of matrix elements in Eq.~\eqref{eqs:87sdyvs}, i.e.
\be
 \<\Omega_{k'}^{(0)}|K_\mu P_\alpha C_\beta|\Omega_r^{(0)}\>\<\Omega_r^{(0)}|C_\beta P_\alpha K_\nu|\Omega_{l'}^{(0)}\>
\ee
and consider three possible scenarios where it is different from zero.\\

\begin{itemize}
\item Scenario 1. Both operators of the matrix elements are isomorphic to an identity transformation, i.e. $K_\mu P_\alpha C_\beta\propto e^{i\phi_{\mu\alpha\beta}}I$ and $K_\nu P_\alpha C_\beta  \propto e^{i\phi_{\nu\alpha\beta}}I$. This case also means that $K_\mu K_\nu \sim e^{i\phi_{\mu\nu}} I$, where $\phi_{\mu\nu} = \phi_{\mu\alpha\beta}-\phi_{\nu\alpha\beta}$. The value of the sum is given by the value of this unit-valued product of non-vanishing matrix elements,
\be
\sum_\beta  \<\Omega_{k'}^{(0)}|K_\mu P_\alpha C_\beta|\Omega_r^{(0)}\>\<\Omega_r^{(0)}|C_\beta P_\alpha K_\nu|\Omega_{l'}^{(0)}\> = e^{i\phi_{\mu\nu}}\delta_{k'r}\delta_{rl'}.
\ee
\item Scenario 2. One of the operators is isomorphic to an identity transformation, and the other is isomorphic to a logical operator. For example, consider $K_\mu P_\alpha C_\beta\sim e^{i\phi_{\mu\alpha\beta}} Q'_L$ and $K_\nu P_\alpha C_\beta\sim e^{i\phi_{\nu\alpha\beta}} I$, where $Q'_L\in\{X_L,Y_L,Z_L\}$. This case also means that $K_\mu K_\nu \sim e^{i\phi_{\mu\nu}} Q'_L$. Th resulting product of the matrix elements, according to the Knill-Laflamme condition, is nonzero if and only if 
\be
\mathcal S(K_\mu)+\mathcal S(K_\nu) \geq d.
\ee
Similar conclusion can be made if $K_\mu P_\alpha C_\beta\sim e^{i\phi_{\mu\alpha\beta}} I$ and $K_\nu P_\alpha C_\beta\sim e^{i\phi_{\nu\alpha\beta}} Q'_L$.\\ 

\item Scenario 3. Both operators are isomorphic to logical operators. In this case $K_\mu P_\alpha C_\beta\sim e^{i\phi_{\mu\alpha\beta}}Q'_L$ and $K_\nu P_\alpha C_\beta\sim e^{i\phi_{\nu\alpha\beta}}Q''_L$, where $Q'_L,Q''_L\in\{X_L,Y_L,Z_L\}$. In this case
\be
\begin{split}
&\mathcal S(K_\mu)+\mathcal S(P_\alpha)+\mathcal S(C_\beta) \geq d,\\
&\mathcal S(K_\nu)+\mathcal S(P_\alpha)+\mathcal S(C_\beta) \geq d.\\
\end{split}
\ee
Summing these two expression we get
\be
\mathcal S(K_\mu)+\mathcal S(K_\nu)\geq 2d-2\mathcal S(C_\beta)-2\mathcal S(P_\alpha)\geq d+1-2p,
\ee
where we have taken into account that any correction operator has weigh never reaches the half of the half-weight of the logical operator, i.e. $2\mathcal S(C_\beta)\leq d-1$.
\end{itemize}

\vspace{0.5cm}

Based on the aforementioned results, we conclude that the elements of the matrix should take the form
\be\label{eqs:OE_expression}
\begin{split}
\<\Omega_k|P_\alpha\tilde Q_L P_\alpha|\Omega_l\> & = \frac{1}{\mathcal N^2}\sum_{k'l'}u^*_{kk'}u_{ll'}\sum_r(-1)^r\delta_{k'r}\delta_{l'r}\sum_{\mu\nu}\lambda^{s_\mu+s_\nu}e^{i\phi_{\mu\nu}}\kappa^*_\mu\kappa_\nu  H(d-2p-s_\mu-s_\nu)\\
&+\lambda^{d-2p+1}\frac{1}{\mathcal N^2} \sum_{k'l'}u^*_{kk'}u_{ll'}\sum_{m=1}^{d-2p+1}\<\Omega^{(0)}_k|(VG_0)^mP_\alpha\tilde Z_L P_\alpha (VG_0)^{d-2p+1-m}|\Omega^{(0)}_l\>\\
&+O(\lambda^{\min(\lceil d/k\rceil,\lceil (d-2p+2)/k\rceil})),
\end{split}
\ee
where $H(x)$ is Heaviside step function defined as $H(x) = 1$ if $x\geq 0$ and $H(x) = 0$ otherwise. Recalling that $\tilde Z_L = \mathcal E^\dag_{QEC} (Z_L)$, it is not hard to see that the second term of this expression includes the matrix $W_{kl}$ defined in Eq.~\eqref{eq:additional_assumption},
\be
\sum_{m=1}^{d-2p+1}\<\Omega^{(0)}_k|(VG_0)^mP_\alpha\tilde Z_L P_\alpha (VG_0)^{d-2p+1-m}|\Omega^{(0)}_l\> = \sum_{m=1}^{d-2p+1}\Tr \left(Z_L \mathcal E_{\rm QEC}(|\psi^{m}_l\>\<\psi^{d-2p+1-m}_k|)\right) := W_{kl}
\ee
Also, from the wavefunction normalization condition, we have
\be \label{eq:norm_thm2}
\<\Omega_k|\Omega_k\> = 1 = \frac{1}{\mathcal N^2}\sum_{\mu\nu}\lambda^{s_\mu+s_\nu}e^{i\phi_{\mu\nu}}\kappa^*_\mu\kappa_\nu  H(d-2p-s_\mu-s_\nu)+O(\lambda^{\lceil d/k\rceil}).
\ee
 As the result, we get
\be
\<\Omega_k|P_\alpha \tilde Q_L P_\alpha |\Omega_l\> = \left(u^\dag (Z+\lambda^{\lceil (d-2p+1)/k\rceil}\frac{1}{\mathcal N^2}W)u\right)_{kl} +O(\lambda^{\min(\lceil d/k\rceil,\lceil (d-2p+1)/k\rceil+1)}).
\ee
This means that, generally, we can choose $\alpha$ and, therefore unitary $u = \alpha\beta$, such that
\be
\sum_\alpha \<\Omega_k|P_\alpha \tilde Q_L P_\alpha |\Omega_l\> = \<\Omega^{(0)}_k| Q_L  |\Omega^{(0)}_l\>+O(\lambda^{\min(\lceil d/k\rceil,\lceil (d-2p+1)/k\rceil}) \equiv \<\Omega^{(0)}_k| Q_L  |\Omega^{(0)}_l\>+O(\lambda^{\lceil \xi/k\rceil}),
\ee
where $\xi = d$ if $p=0$ and $\xi   = d+1-2p$ if $p\geq 1$.

Now for $p\geq 1$, the generalization error Eq.~\eqref{eq:gen_error_thm2} is
\be
\begin{split}
\varepsilon_Q(\tilde Q_L) &:= \int d\theta d\phi\mu(\theta,\phi) \Biggl(\sum_{k,l\in\{0,1\}}\psi^*_k(\theta,\phi)\psi_l(\theta,\phi) \mathbb E_\alpha \left[\<\Omega_k|P_\alpha \tilde Q_L P_\alpha |\Omega_l\>-\<\Omega^{(0)}_k|Q_L|\Omega^{(0)}_l\>\right]\Biggl)^2\\
&= \frac{\lambda^{\lceil 2(d-2p+1)/k\rceil}}{\mathcal{N}^4} \int d\theta d\phi\, \mu(\theta,\phi)|\<\psi_{\theta,\phi}|u^\dag W^{(d-2p+1)} u|\psi_{\theta,\phi}\>|^2 +O(\lambda^{R})\\
&\geq \frac{\lambda^{\lceil 2(d-2p+1)/k\rceil}}{\mathcal{N}^4} \<W^{(d-2p+1)}\>_{\mu(\theta,\phi)}+O(\lambda^{R}),
\end{split}
\ee
where $R = \min(\lceil d/k\rceil,\lceil (d-2p+1)/k\rceil+1)$. Here, we use the fact that $u|\psi_{\theta,\phi}\> = |\psi_{\theta-\theta_0,\phi-\phi_0}\>$ for some $\theta_0\in[-\pi,\pi]$ and $\phi_0\in[-\pi,\pi]$. Using Eq.~\eqref{eq:additional_assumption}, we have
\be
\varepsilon_Q(\tilde Q_L) = \Theta(\lambda^{\lceil 2(d-2p+1)/k\rceil}).
\ee
This expression concludes our proof.

\end{proof}

\section{Proof of Theorem 3}
\label{app:proof_thm3}

We start from the formal version  of the Theorem stated as
\begin{thm}\label{formal_thm_qnn}
\textup{(Formal)} 
Define the quantum neural network observable \(Q_{\text{QNN}} = U^\dag_Q Z_0 U_Q\), where \(U_Q\) is a unitary transformation and \(Z_0\) is the Pauli Z operator on the target qubit. Then for every Pauli observable \(Q\in \{X, Y, Z\}\) there exists a corresponding unitary \(U_Q\) such that the error \(\varepsilon_Q(Q_{\text{QNN}})) = O(\lambda^{4\lceil (d-2p)/k\rceil})\).
\end{thm}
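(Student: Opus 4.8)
The plan is to reduce the theorem to a single matrix-element estimate and then realize the required decoder by matching Gram matrices. Since $\varepsilon_Q$ in Eq.~\eqref{eqn:L2_logical} is an error-average of a squared deviation, and since for the $Q$-adapted codewords one has $f_Q(\theta,\phi)=\<\Psi^{(0)}_{\theta,\phi}|Q_L|\Psi^{(0)}_{\theta,\phi}\>=\sum_{kl}\psi_k^*\psi_l\,(Q_L^{(0)})_{kl}$ with $(Q_L^{(0)})_{kl}=(-1)^k\delta_{kl}$, it suffices to exhibit a unitary $U_Q$ for which
\begin{equation}
\<\Omega_k|P_\alpha U_Q^\dag Z_0 U_Q P_\alpha|\Omega_l\> = (Q_L^{(0)})_{kl} + O(\lambda^{2q}), \qquad q=\lceil (d-2p)/k\rceil,
\end{equation}
uniformly in the $p$-local error $P_\alpha$ and in $k,l\in\{0,1\}$. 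Because $|\psi_k|\le 1$, squaring the resulting deviation and averaging over errors and inputs then gives $\varepsilon_Q(Q_{\rm QNN})=O(\lambda^{4q})$, which is the claim. The whole improvement over Theorem~\ref{thm2} lives in the exponent $2q$ of this matrix-element bound: a fixed QEC map cannot cancel the first-order-in-$\lambda^q$ deviation, whereas an adapted $U_Q$ can.

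The structural input is the generalized Knill--Laflamme condition, Lemma~\ref{KLC_lemma}: the noisy codewords $|v_{\alpha k}\>:=P_\alpha|\Omega_k\>$ have overlaps $\<v_{\alpha k}|v_{\beta l}\>=\epsilon_{\alpha\beta}\delta_{kl}+\lambda^{q}h_{\alpha k,\beta l}$ with $\epsilon_{\alpha\alpha}=1$, $h=h^\dag$, and, crucially, $h_{\alpha k,\alpha l}=0$, so that for each fixed error the pair $\{|v_{\alpha 0}\>,|v_{\alpha 1}\>\}$ is \emph{exactly} orthonormal. I would then use the elementary fact that a unitary $U_Q$ with $U_Q|v_{\alpha k}\>=|w_{\alpha k}\>$ exists if and only if the two families have identical Gram matrices. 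The task thus becomes: engineer target states $|w_{\alpha k}\>$ whose Gram matrix equals that of the $|v_{\alpha k}\>$ and on which $Z_0$ reads off the logical index of every same-$\alpha$ pair up to $O(\lambda^{2q})$. Note that only same-$\alpha$ matrix elements enter $\varepsilon_Q$, so the cross-$\alpha$ readout is unconstrained.

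I would build the targets perturbatively in $\lambda^q$. At zeroth order take $|w^{(0)}_{\alpha k}\>=|k\>_0\otimes|\chi_\alpha\>$, where $|k\>_0$ is the target qubit and $\{|\chi_\alpha\>\}$ are register states realizing $\<\chi_\alpha|\chi_\beta\>=\epsilon_{\alpha\beta}$ (possible since $\epsilon\succeq0$); these give the clean readout $\<w^{(0)}_{\alpha k}|Z_0|w^{(0)}_{\alpha l}\>=(-1)^k\delta_{kl}$. At order $\lambda^q$ I would add corrections $|w^{(1)}_{\alpha k}\>$ chosen to reproduce the $h$-term in the Gram matrix. Writing the target-flipping parts of these corrections as $\lambda^q|\phi_{\bar k}\>_0\otimes|\zeta_{\alpha k}\>$, a short computation shows that the same-$\alpha$ Gram constraint ($\propto\langle\chi_\alpha|\zeta_{\alpha 1}\rangle+\langle\zeta_{\alpha 0}|\chi_\alpha\rangle$) and the clean-readout constraint ($\propto\langle\chi_\alpha|\zeta_{\alpha 1}\rangle-\langle\zeta_{\alpha 0}|\chi_\alpha\rangle$) combine, thanks to $h_{\alpha k,\alpha l}=0$, into the single requirement that the flip-components be register-orthogonal to $|\chi_\alpha\>$; this is compatible with the cross-$\alpha$ equations, which fix only overlaps with $|\chi_\beta\>$, $\beta\neq\alpha$, and with fresh ancillary directions. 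Matching the Gram matrix exactly to all orders is then arranged by an $O(\lambda^{2q})$ adjustment of the targets, which disturbs the readout only at $O(\lambda^{2q})$. The $X$ and $Y$ decoders are handled identically after adapting the codeword basis so that $Q_L$ is diagonal, which changes only $U_Q$.

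The main obstacle I anticipate is the third step: showing that the first-order correction $|w^{(1)}_{\alpha k}\>$ can satisfy the full cross-$\alpha$ linear system for the off-diagonal couplings $h_{\alpha k,\beta l}$ simultaneously with the same-$\alpha$ clean-readout conditions, and that the exact-Gram cleanup can be performed without spoiling the $O(\lambda^{2q})$ readout. The vanishing of the same-$\alpha$ block of $h$ is precisely what removes the tension between ``carrying the $\lambda^q$ overlap'' and ``not disturbing $Z_0$,'' and the exponentially large ambient Hilbert space provides enough room for the required ancillary embeddings; making the solvability and the cleanup rigorous is the technical core. Once established, $\varepsilon_Q(Q_{\rm QNN})=O(\lambda^{4\lceil(d-2p)/k\rceil})$ follows at once, and it is saturated by the lower bound of Proposition~\ref{prop1}.
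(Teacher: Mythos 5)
You reach the theorem by a route that is dual to the paper's, and the skeleton matches: both arguments reduce the claim to the single matrix-element bound $\langle\Omega_k|P_\alpha Q_{\rm QNN}P_\alpha|\Omega_l\rangle=(-1)^k\delta_{kl}+O(\lambda^{2q})$, and both extract the whole improvement from Lemma~\ref{KLC_lemma}, specifically from hermiticity of $h$ together with the vanishing same-$\alpha$ block $h_{\alpha k,\alpha k'}=0$ (without the latter, the diagonal $k=l$ same-$\alpha$ Gram constraint on the non-flip part of your correction, ${\rm Re}\langle\chi_\alpha|\eta_{\alpha k}\rangle=\tfrac12 h_{\alpha k,\alpha k}$, would clash with the clean-readout requirement ${\rm Re}\langle\chi_\alpha|\eta_{\alpha k}\rangle=0$). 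The difference is where the unitary is specified: the paper works on the input side, writing $Q_{\rm QNN}=\sum_{\alpha,k}(-1)^k|u_{\alpha k}\rangle\langle u_{\alpha k}|$ and building near-orthonormal eigenvectors $|v_{\alpha k}\rangle=e_\alpha^{-1/2}\bigl(B_\alpha|\Omega_k\rangle+\lambda^q|w_{\alpha k}\rangle\bigr)$ from the $\epsilon$-diagonalized errors $B_\alpha$, then repairing orthonormality by Gram--Schmidt at cost $O(\lambda^{2q})$; you work on the output side, prescribing images $U_QP_\alpha|\Omega_k\rangle\approx|k\rangle_0\otimes|\chi_\alpha\rangle+\lambda^q(\cdots)$ and invoking the Gram-matrix criterion for existence of $U_Q$. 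These are equivalent formulations, and your same-$\alpha$ analysis is the exact counterpart of the paper's two conditions in Eq.~\eqref{supp: conditions}. The ``technical core'' you leave open is genuinely fillable, and the paper's closed-form solution, Eq.~\eqref{supp:stabilizer_solution}, is precisely the filler translated to the input side: the cross-$\alpha$ Gram equations are solved by the symmetric splitting $\langle\chi_\alpha|\eta_{\beta k}\rangle=\tfrac12 h_{\alpha k,\beta k}$ (and the analogous assignment for the flip components $\zeta$), consistent because $h_{\alpha k,\beta k'}=h^*_{\beta k',\alpha k}$; realizing register vectors with prescribed overlaps against the non-orthogonal family $\{|\chi_\alpha\rangle\}$ then requires inverting $\epsilon$, which is exactly the origin of the factor $(\epsilon^{-1})_{\gamma\beta}$ in the paper's coefficients $c_{\alpha k,\beta k'}=-\tfrac12\sum_\gamma h_{\alpha k,\gamma k'}(\epsilon^{-1})_{\gamma\beta}$; and your exact-Gram cleanup is the mirror image of the paper's Gram--Schmidt step, legitimate because the two Gram matrices differ by $O(\lambda^{2q})$ and are bounded away from singularity. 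The one hypothesis you should state explicitly (the paper does, in passing) is that $\epsilon$ is nonsingular for small $\lambda$, i.e.\ $e_\alpha>0$: both your construction of the $|\chi_\alpha\rangle$ and the solvability of your linear system, like the paper's $1/\sqrt{e_\alpha}$ and $\epsilon^{-1}$, degrade without it.
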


\begin{proof}
The proof begins by noting that the Pauli-$Z$ operator acting on the output qubit, denoted by $Z_0$, has eigenvalues of $\pm 1$. Consequently, the transformation it undergoes can always be expressed as
\be \label{app:basis}
Q_{QNN}:=U_Q^\dag Z_0 U_Q = \sum_{\alpha}\sum_{k\in\{0,1\}}(-1)^k\ket{u_{\alpha k}}\bra{u_{\alpha k}},
\ee
where $|u_{\alpha k}\>$ are orthonormal basis states. There is a one-to-one correspondence between the basis set $\{|u_{\alpha k}\>\}$ and the unitary $U_Q$. Therefore, proving the existence of the basis set is equivalent to proving the existence of $U_Q$.

To show the existence of the basis set, we first show that there exists another complete set of nearly-orthonormal states $\{|v_{\alpha k}\>\}$ such that
\be\label{eqs:another_basis_set}
\begin{split}
&\bra{v_{\alpha k}} v_{\beta k'}\> = \delta_{\alpha\beta}\delta_{kk'}+O(\lambda^{2q}).\\
&\sum_{\alpha l} (-1)^l\bra{\Omega_k} P_\beta \ket{v_{\alpha l}}\bra{v_{\alpha l}} P_\beta \ket{\Omega_{k'}}  = (-1)^k \delta_{kk'}+O(\lambda^{2q})
\end{split}
\ee
where  $q : = \lceil (d-2p)/k\rceil$. Given this set, we can always assign the eigenstates as 
\be
\begin{split}
|u_{\alpha k}\> = |v_{\alpha k}\> + O(\lambda^{2q}),
\end{split}
\ee
where the last  $O(\lambda^{2q})$ term can be decided by the Gram-Schmidt orthonormalization procedure applied to the set of vectors $\{|v_{\alpha k}\>\}$. Finding such states ensures us that
\be\begin{split}
\<\Omega_k|P_\alpha U_Q^\dag Z_0 U_Q P_\beta|\Omega_{k'}\> = \sum_{\alpha l} (-1)^l\bra{\Omega_k} P_\beta \ket{v_{\alpha l}}\bra{v_{\alpha l}} P_\beta \ket{\Omega_{k'}}  +O(\lambda^{2q}) = (-1)^k \delta_{kk'}+O(\lambda^{2q}).
\end{split}
\ee
Then, the generalization error is
\begin{align} \label{app:gen_err_final}
\varepsilon_Q(Q_{\textup{QNN}}) &= \mathbb{E}_{\beta} \int d \theta d \phi \mu(\theta,\phi) \Bigg[\sum_{k l} \psi_k \psi_{l}^* \bigg(\bra{\Omega_k}P_\beta U_Q^\dag Z_0 U_Q P_\beta \ket{\Omega_{l}}-\<\Omega^{(0)}_k| Q_L |\Omega^{(0)}_{l}\>\bigg)\Bigg]^2 = O(\lambda^{4q}),
\end{align}
which would brings us the proof of the theorem.

The rest of the proof is centered around constructing the basis in Eq.~\eqref{eqs:another_basis_set}. We start form Lemma~\ref{KLC_lemma}. We rewrite the result in Eq.~\eqref{eqs:gen_KLC} by ``diagonalizing" the error basis as
\begin{equation}
\langle \Omega_k | B^\dag_\alpha B_\beta | \Omega_{k'} \rangle = e_\alpha\delta_{\alpha\beta} \delta_{kk'} +\lambda^{q} \tilde h_{\alpha k,\beta k'},
\end{equation}
where $B_\alpha = \sum_{\beta}\omega^*_{\alpha\beta}P_\beta$ are new error operators, $\omega_{\alpha\beta}$ is a unitary matrix defining the eigenbasis of $\epsilon_{\alpha\beta}$, $e_\alpha$ are eigenvalues of the matrix $\epsilon_{\alpha\beta}$,
\be
\sum_{\alpha'\beta'} \omega_{\alpha\alpha'} \epsilon_{\alpha'\beta'} \omega^*_{\beta\beta'} = e_\alpha\delta_{\alpha\beta}, \qquad \tilde h_{\alpha k,\beta k'} := \sum_{\alpha'\beta'}\omega_{\alpha\alpha'}h_{\alpha' k,\beta' k'}\omega^*_{\beta\beta'}.
\ee
Assume $D$ is the number of independent errors $\{P_\alpha\}$. Assuming that $e_\alpha>0$ for small enough $\lambda$, our goal is to find a set of (unnormalized) vectors \(|w_{\alpha a}\rangle\) such that states
\be \label{supp:qnn_basis_ansatz}
\begin{split}
&\ket{v_{\alpha k}} = \frac{1}{\sqrt{e_\alpha}}\Bigl(B_{\alpha} \ket{\Omega_k} + \lambda^q\ket{w_{\alpha k}}\Bigl), \qquad \alpha\leq D\\
& \ket{v_{\alpha k}} = |\psi_{\alpha k}\>,\qquad  \alpha >D
\end{split}
\ee
satisfy Eq.~\eqref{eqs:another_basis_set}.
Here $|\psi_{\alpha k}\>$ are states that are orthogonal to the error space spanned on vectors $P_\alpha |\Omega_k\>$. 
To do so, we first expand the l.h.s. of the first condition using the ansatz in Eq.~\eqref{supp:qnn_basis_ansatz}, we rewrite
\be\label{eqs_supp:gsf811}
\begin{split}
\sum_{\alpha, l} (-1)^l\langle\Omega_k| P_\beta |\nu_{\alpha l}\rangle\langle\nu_{\alpha l}| P_\beta |\Omega_{k'}\rangle = \sum_{\alpha, l} \frac{(-1)^l}{e_\alpha}\Bigl(&\langle\Omega_k|P_\beta B_\alpha |\Omega_l\rangle\langle\Omega_l|B_\alpha^\dagger P_\beta |\Omega_{k'}\rangle+\lambda^q\langle\Omega_k|P_\beta B_\alpha |\Omega_l\rangle\langle w_{\alpha l}| P_\beta |\Omega_{k'}\rangle\\
&+\lambda^q\langle\Omega_k|P_\beta |w_{\alpha l}\rangle\langle\Omega_l|B_\alpha^\dagger P_\beta |\Omega_{k'}\rangle+O(\lambda^{2q})\Bigr).
\end{split}
\ee
Next, we use the inverse transformation  that connects Pauli $P_\alpha$ error to the the errors $B_\alpha$ as
\begin{equation}
P_\beta = \sum_{\beta'}\omega_{\beta'\beta}B_{\beta'}.
\end{equation}
The first term in  Eq.~\eqref{eqs_supp:gsf811} becomes
\begin{equation}
\begin{split}
\sum_{\alpha, l} \frac{(-1)^l}{e_\alpha}\langle\Omega_k|P_\beta B_\alpha |\Omega_l\rangle\langle\Omega_l|B_\alpha^\dagger P_\beta |\Omega_{k'}\rangle & = \sum_{\alpha, l} \frac{(-1)^l}{e_\alpha}\sum_{\beta', \beta''}\omega_{\beta''\beta}\omega^*_{\beta'\beta}\langle\Omega_k|B_{\beta'}^\dagger B_\alpha |\Omega_l\rangle\langle\Omega_l|B_\alpha^\dagger B_{\beta''} |\Omega_{k'}\rangle\\
&=\sum_{\beta', \beta''}\omega_{\beta''\beta}\omega^*_{\beta'\beta}\sum_{\alpha, l} \frac{(-1)^l}{e_\alpha}\Bigl(e_\alpha^2\delta_{\alpha\beta'}\delta_{\alpha\beta''}\delta_{kl}\delta_{k'l}+\lambda^q e_\alpha\delta_{\alpha\beta'}\delta_{kl}\tilde h_{\alpha l,\beta'' k'}\\
&\qquad\qquad\qquad\qquad\qquad\qquad+\lambda^q e_\alpha\delta_{\alpha\beta''}\delta_{k'l}\tilde h_{\beta' k,\alpha l}+O(\lambda^{2q})\Bigr)\\
&=(-1)^k\delta_{kk'}\sum_{\beta'}\omega^*_{\beta'\beta}e_{\beta'}\omega_{\beta'\beta}+\lambda^q (-1)^k\sum_{\beta', \beta''}\omega^*_{\beta'\beta}\tilde h_{\beta' k,\beta'' k'}\omega_{\beta''\beta}\\
&\quad +\lambda^q (-1)^{k'}\sum_{\beta', \beta''}\omega^*_{\beta'\beta}\delta_{k'l}\tilde h_{\beta' k,\beta'' k'}\omega_{\beta''\beta}+O(\lambda^{2q})\\
& = (-1)^k\delta_{kk'} \epsilon_{\beta\beta}+\lambda^q\Bigl[(-1)^k+(-1)^{k'}\Bigl]h_{\beta k,\beta k'}+O(\lambda^{2q})  \\
&= (-1)^k\delta_{kk'}+O(\lambda^{2q}),
\end{split}
\end{equation}
where we used the fact that \(h_{\beta k,\beta k'} = 0\) and $\epsilon_{\beta\beta}=1$. The second term in  Eq.~\eqref{eqs_supp:gsf811} becomes
\begin{equation}
\begin{split}
\lambda^q\sum_{\alpha, l}\frac{(-1)^l}{e_\alpha}\langle\Omega_k|P_\beta B_\alpha |\Omega_l\rangle\langle w_{\alpha l}| P_\beta |\Omega_{k'}\rangle & = \lambda^q\sum_{\beta'}\omega^*_{\beta'\beta}\sum_{\alpha, l}\frac{(-1)^l}{e_\alpha}\langle\Omega_k|B^\dag_{\beta'} B_\alpha |\Omega_l\rangle\langle w_{\alpha l}| P_\beta |\Omega_{k'}\rangle \\
& = \lambda^q\sum_{\beta'}\omega^*_{\beta'\beta}\sum_{\alpha, l}(-1)^l \delta_{kl}\delta_{\alpha\beta'}\langle w_{\alpha l}| P_\beta |\Omega_{k'}\rangle+O(\lambda^{2q}) \\
&=\lambda^q(-1)^k\sum_{\beta'}\omega^*_{\beta'\beta} \langle w_{\beta' k}| P_\beta |\Omega_{k'}\rangle +O(\lambda^{2q}).
\end{split}
\end{equation}
Finally, the third term in  Eq.~\eqref{eqs_supp:gsf811} becomes
\begin{equation}
\begin{split}
\lambda^q\sum_{\alpha, l}\frac{(-1)^l}{e_\alpha}\langle\Omega_k|P_\beta |w_{\alpha l}\rangle\langle\Omega_l|B_\alpha^\dagger P_\beta |\Omega_{k'}\rangle & = \lambda^q\sum_{\beta'}\omega_{\beta'\beta}\sum_{\alpha, l}\frac{(-1)^l}{e_\alpha}\langle\Omega_k|P_\beta |w_{\alpha l}\rangle\langle\Omega_l|B_\alpha^\dagger B_\beta |\Omega_{k'}\rangle \\
& = \lambda^q\sum_{\beta'}\omega_{\beta'\beta}\sum_{\alpha, l}(-1)^l \delta_{k'l}\delta_{\alpha\beta'}\langle\Omega_k|P_\beta |w_{\alpha l}\rangle+O(\lambda^{2q}) \\
&=\lambda^q(-1)^{k'}\sum_{\beta'}\omega_{\beta'\beta} \langle\Omega_k|P_\beta |w_{\beta' k'}\rangle +O(\lambda^{2q}).
\end{split}
\end{equation}
Together, these expressions give us
\be\label{eqsupp:cond_one_trans}
\begin{split}
\sum_{\alpha l} (-1)^l\bra{\Omega_k} P_\beta \ket{v_{\alpha l}}\bra{v_{\alpha l}} P_\beta \ket{\Omega_{k'}} & = (-1)^k \delta_{kk'}+\lambda^q\sum_{\beta'}\Bigl((-1)^k\omega_{\beta'\beta}\<\Omega_k|P_\beta|w_{\beta' k'}\>+(-1)^{k'}\omega^*_{\beta'\beta}\<w_{\beta' k}|P_\beta|\Omega_{k'}\>\Bigl)+O(\lambda^{2q})\\
 &= (-1)^k \delta_{kk'}+\lambda^q\Bigl((-1)^k\<\Omega_k|P_\beta|\tilde{w}_{\beta k'}\>+(-1)^{k'}\<\tilde{w}_{\beta k}|P_\beta|\Omega_{k'}\>\Bigl)+O(\lambda^{2q}),
\end{split}
\ee
where we introduced the states
\be\label{eq_supp:basis_change}
|\tilde{w}_{\beta k}\> := \sum_{\beta'}\omega_{\beta'\beta}|w_{\beta' k}\>.
\ee
Similarly, we express the second condition as
\be
\begin{split}
\bra{v_{\alpha k}} v_{\beta k'}\> &= \delta_{kk'}\delta_{\alpha\beta}+\frac{\lambda^q}{\sqrt{e_\alpha e_\beta}}\Bigl(\tilde h_{\alpha k,\beta k'}+\<w_{\alpha k}|B_\beta|\Omega_{k'}\>+\<\Omega_{k}|B^\dag_\alpha|w_{\beta k'}\>\Bigl)+O(\lambda^{2q})\\
&= \delta_{kk'}\delta_{\alpha\beta}+\frac{\lambda^q}{\sqrt{e_\alpha e_\beta}}\sum_{\alpha'\beta'}\omega_{\alpha\alpha'}\omega^*_{\beta\beta'}\Bigl( h_{\alpha' k,\beta' k'}+\<\tilde w_{\alpha' k}|P_{\beta'}|\Omega_{k'}\>+\<\Omega_{k}|P_{\alpha'}|\tilde w_{\beta' k'}\>\Bigl)+O(\lambda^{2q}).
\end{split}
 \ee
Then, to satisfy the conditions in Eq.~\eqref{eqs:another_basis_set}, it is sufficient to have
\be
\begin{split}\label{supp: conditions}
&(-1)^k\<\Omega_k|P_\beta|\tilde{w}_{\beta k'}\>+(-1)^{k'}\<\tilde{w}_{\beta k}|P_\beta|\Omega_{k'}\> = O(\lambda^q),\\
&h_{\alpha k,\beta k'}+\<\tilde w_{\alpha k}|P_\beta|\Omega_{k'}\>+\<\Omega_{k}|P_\alpha|\tilde w_{\beta k'}\> = O(\lambda^q).
\end{split}
\ee
It is straighforward to verify that the following solution satisfies both conditions in Eq.~\eqref{supp: conditions},
\be\label{supp:stabilizer_solution}
|\tilde w_{\alpha k}\> = \sum_{\beta k'}c^*_{\alpha k,\beta k'}P_\beta|\Omega_{k'}\>, \qquad c_{\alpha k,\beta k'}   =  -\frac 12\sum_{\gamma}  h_{\alpha k,\gamma k'}(\epsilon^{-1})_{\gamma \beta}.
\ee
where $(\epsilon^{-1})_{\alpha\beta}$ are the matrix elements of the inverse matrix to $\epsilon_{\alpha\beta}$.
Indeed, if we insert this solution in the first line, we get
\begin{align}
    &(-1)^k\<\Omega_k|P_\beta|\tilde{w}_{\beta k'}\>+(-1)^{k'}\<\tilde{w}_{\beta k}|P_\beta|\Omega_{k'}\> \\
    &\quad = (-1)^k \sum_{\beta' k''}c^*_{\beta k', \beta' k''} \bra{\Omega_{k}} P_{\beta}P_{\beta'}\ket{\Omega_{k''}} + (-1)^{k'} \sum_{\beta' k''} c_{\beta k, \beta' k''}  \bra{\Omega_{k''}} P_{\beta'}P_{\beta}\ket{\Omega_{k'}} \\
    &\quad =(-1)^k \sum_{\beta'}(c_{\beta k', \beta' k} \epsilon_{\beta'\beta})^* + (-1)^{k'}\sum_{\beta'}c_{\beta k, \beta' k'}\epsilon_{\beta' \beta} + O(\lambda^{q})\\
    &\quad =(-1)^k h^*_{\beta k',\beta k} + (-1)^{k'}h_{\beta k, \beta k'} + O(\lambda^{q}) = O(\lambda^q)\\
\end{align}
Similarly, for the next line, we have
\begin{align}
    & h_{\alpha k,\beta k'}+\<\tilde w_{\alpha k}|P_\beta|\Omega_{k'}\>+\<\Omega_{k}|P_\alpha|\tilde w_{\beta k'}\> \\
    &\quad = h_{\alpha k,\beta k'} + \sum_{\beta' k''} \bigg[ c_{\alpha k, \beta' k''} \bra{\Omega_{k''}} P_{\beta'}P_{\beta}\ket{\Omega_{k'}} + c^*_{\beta k', \beta' k''} \bra{\Omega_{k}} P_{\alpha}P_{\beta'}\ket{\Omega_{k''}}  \bigg] \\
    &\quad = h_{\alpha k,\beta k'} + \sum_{\beta'}\bigg[ c_{\alpha k, \beta' k'} \epsilon_{\beta' \beta} + (c_{\beta k', \beta' k} \epsilon_{\beta'\alpha })^*\bigg] + O(\lambda^q)\\
    &\quad = h_{\alpha k,\beta k'} -\frac 12h_{\alpha k,\beta k'} -\frac 12h^*_{
    \beta k',\alpha k}+ O(\lambda^q) = O(\lambda^q)
\end{align}
Then, using Eq.~\eqref{eq_supp:basis_change}, we obtain
\be\label{supp:stabilizer_solution_original}
|w_{\alpha k}\> = -\frac 12 \sum_{\beta \gamma k'}\omega_{\beta \alpha}c^*_{\beta k,\gamma k'}P_\gamma|\Omega_{k'}\>, \qquad c_{\alpha k,\beta k'}   =  -\frac 12\sum_{\gamma}  h_{\alpha k,\gamma k'}(\epsilon^{-1})_{\gamma \beta}.
\ee
Thus, we get the basis for the optimal unitary QNN by substituting Eq.~\eqref{supp:stabilizer_solution} into Eq.~\eqref{supp:qnn_basis_ansatz}. By proving the existence of Eq.~\eqref{eqs:another_basis_set}, we prove Eq.~\eqref{app:gen_err_final} and thus, complete our proof.
\end{proof}

\section{Proof of Proposition~\ref{prop1}}
\label{app:optimal_theorem3}

In the previous scetion, we have proven the existence of the unitary transformation that provides $ \varepsilon_Q(Q_{\rm QNN}) = O(\lambda^{4\lceil (d-2p)/k\rceil})$. In this section, we show that this result is not possible to improve it under generic problem setting.
Let us define the matrix
\be
(T_{\alpha\beta})_{kl} := \sum_{m=1}^{d-2p}\<\Omega^{(0)}_k|(V\tilde G_0)^m P_\alpha P_\beta (V\tilde G_0)^{d-2p-m}|\Omega^{(0)}_l\>.
\ee

\begingroup
\def\thetheorem{\ref{myprop}}
\addtocounter{prop}{-1}
\begin{prop} \label{thm_optimality} Under conditions of Theorem~\ref{formal_thm_qnn}, consider any distribution $\mu(\theta,\phi)$ such that $f(\theta,\phi) = \pm 1$, $\mu(\theta,\phi) = \mu(\theta,\phi+\pi)$, and
\be\label{eqs:theorem3_assumption}
\forall \theta_0,\phi_0: \qquad \mathbb E_\alpha\mathbb E_\beta \int d\theta d\phi\,  \mu(\theta+\theta_0,\phi+\phi_0) |\<\psi_k(\theta,\phi)| T_{\alpha\beta} |\psi_l(\theta,\phi+\pi)\>|^4 > 0.
\ee
Then, for all possible $U_Q$ the generalization error satisfies
 \be
 \varepsilon_Q(Q_{\rm QNN}) = \Omega(\lambda^{4\lceil (d-2p)/k\rceil}).
 \ee
\end{prop}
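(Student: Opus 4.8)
The plan is to fix an arbitrary QNN unitary $U_Q$ and bound its error from below by exploiting that $Q_{\rm QNN} = U_Q^\dag Z_0 U_Q$ is a Hermitian involution: since $Z_0$ has eigenvalues $\pm1$, one has $Q_{\rm QNN}^2 = I$ and $Q_{\rm QNN} = \Pi_+ - \Pi_-$ for complementary orthogonal projectors $\Pi_\pm$. First I would rewrite the error in Eq.~\eqref{eqn:L2_logical} in terms of the $2\times2$ matrix $\tilde Q^\beta_{kl} := \<\Omega_k|P_\beta Q_{\rm QNN} P_\beta|\Omega_l\>$ and the ideal target $\<\Omega^{(0)}_k|Q_L|\Omega^{(0)}_l\> = (-1)^k\delta_{kl}$, reducing the statement to a lower bound on the $\mu$-integrated square deviation of $\<\psi_{\theta,\phi}|\tilde Q^\beta|\psi_{\theta,\phi}\>$ from $f_Q(\theta,\phi) = \pm1$, averaged over the noise $P_\beta$.

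The heart of the argument is that no involution can perfectly decode. Using Lemma~\ref{KLC_lemma} expanded to its leading correction at order $\lambda^q$ with $q := \lceil(d-2p)/k\rceil$, I would show that a $+1$-label noisy state $P_\alpha|\psi_{\theta,\phi}\>$ and the conjugate $-1$-label state $P_\beta|\psi_{\theta,\phi+\pi}\>$ have an irreducible overlap $\delta$ whose leading term is $\lambda^q\<\psi_k(\theta,\phi)|T_{\alpha\beta}|\psi_l(\theta,\phi+\pi)\>$; here the pairing $\phi\mapsto\phi+\pi$ is exactly the one exchanging the labels $f_Q = \pm1$, and $\mu(\theta,\phi)=\mu(\theta,\phi+\pi)$ makes it symmetric, while the choice $f_Q=\pm1$ forces the diagonal $\epsilon_{\alpha\beta}$-part of the overlap to drop out so that only the sector-mixing $T_{\alpha\beta}$ survives. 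I would then invoke the elementary fact that two unit vectors $a,b$ with $\<a|b\> = \delta$ cannot satisfy both $\<a|Q_{\rm QNN}|a\> = +1$ and $\<b|Q_{\rm QNN}|b\> = -1$: decomposing $b = \delta a + \sqrt{1-|\delta|^2}\,a_\perp$ shows the best involution still leaves a residual deviation $\Omega(|\delta|^2)$. With $|\delta| = \Omega(\lambda^q)$ this forces a pointwise expectation error of order $\lambda^{2q}$, and crucially this holds for every choice of $U_Q$.

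Finally I would square this pointwise $\Omega(\lambda^{2q})$ deviation, as the metric is mean square, and integrate against $\mu$. This yields $\varepsilon_Q(Q_{\rm QNN}) \gtrsim \lambda^{4q}\,\mathbb E_\alpha\mathbb E_\beta\int d\theta d\phi\,\mu(\theta,\phi)\,|\<\psi_k(\theta,\phi)|T_{\alpha\beta}|\psi_l(\theta,\phi+\pi)\>|^4$, and the fourth power here is precisely why assumption Eq.~\eqref{eqs:theorem3_assumption} is phrased with $|\cdots|^4$: the residual is quadratic in the overlap $\delta$, which is itself linear in the $T_{\alpha\beta}$ matrix element. That assumption guarantees the overlap integral is strictly positive, i.e. the opposite-label overlap neither vanishes on a positive-measure set nor cancels after integration, while Lemma~\ref{lem2} supplies the genericity showing such $V$, $P_\alpha$, $P_\beta$ exist. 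Hence $\varepsilon_Q(Q_{\rm QNN}) = \Omega(\lambda^{4q})$, matching the upper bound of Theorem~\ref{formal_thm_qnn}.

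The main obstacle is the variational character of the second step: one must rule out that some finely tuned $U_Q$ cancels the $\lambda^q$ obstruction across all noise sectors $\beta$ and all angle pairs $(\theta,\phi)$ simultaneously. The resolution I would push is that the involution constraint $Q_{\rm QNN} = \Pi_+ - \Pi_-$ is a single global decomposition, whereas the overlaps $T_{\alpha\beta}$ couple distinct sectors and distinct conjugate angle pairs; no one pair $(\Pi_+,\Pi_-)$ can orthogonalize all opposite-label pairs at once, so the per-pair residual $\Omega(\lambda^{2q})$ survives the optimization and the integration. Making the transition from the clean two-vector estimate to the integrated, sector-averaged bound rigorous—tracking that the leading $\lambda^q$ terms do not conspire to cancel—is the delicate part, and it is exactly what Eq.~\eqref{eqs:theorem3_assumption} is designed to certify.
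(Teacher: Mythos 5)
Your proposal is correct and follows essentially the same route as the paper's proof: pair each state with its opposite-label partner under $\phi\mapsto\phi+\pi$ using the symmetry of $\mu$, show via the Brillouin--Wigner expansion and Knill--Laflamme condition that the paired noisy states retain an irreducible overlap $\delta = \lambda^{q}\langle\cdot|T_{\alpha\beta}|\cdot\rangle + O(\lambda^{q+1})$ (with the $\epsilon_{\alpha\beta}$ part cancelling by orthogonality), and then use the fact that the unit-norm involution $Q_{\rm QNN}$ cannot separate non-orthogonal states to force a residual $\Omega(|\delta|^2)$ per pair, which squares to the $\Omega(\lambda^{4q})$ bound under assumption Eq.~\eqref{eqs:theorem3_assumption}. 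The only cosmetic difference is that the paper packages your elementary two-vector estimate as a trace-distance bound, $|\delta F|\le 2D \le 2-|\langle\Psi|P_\alpha P_\beta|\Psi_\perp\rangle|^2$, which is mathematically the same statement.
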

\endgroup

\begin{lem}\label{lem:binary_state} Any state $|\Psi\>\in \mathcal C := \left\{|\Psi^{(0)}\>,|\Psi^{(0)}\> = \cos\theta |\Omega^{(0)}_0\>+e^{i\phi}\sin \theta |\Omega^{(0)}_1\>, \forall \theta,\phi\in[0,2\pi] \right\}$ and its orthogonal state in the codespace $|\Psi^{(0)}_\perp\>$ have opposite-sign expectation values in respect to $Q_L \in \{X_L, Y_L, Z_L\}$, i.e.
\be
\<\Psi^{(0)}|Q_L|\Psi^{(0)}\> = -\<\Psi^{(0)}_\perp|Q_L|\Psi^{(0)}_\perp\>.
\ee
\end{lem}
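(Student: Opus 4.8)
The plan is to exploit the fact that the codespace $\mathcal C$ encodes a single logical qubit, so that $Q_L$ restricted to it behaves as a single-qubit Pauli and is therefore traceless. First I would observe that $Q_L$, being a logical Pauli operator, commutes with all stabilizers $S_a$ and hence preserves the two-dimensional codespace $\mathcal C = \text{span}\{|\Omega_0^{(0)}\rangle, |\Omega_1^{(0)}\rangle\}$. Let $P_{\mathcal C}$ denote the projector onto $\mathcal C$. Since $Q_L^2 = I$ and $Q_L$ is a nontrivial logical operator, its restriction $P_{\mathcal C} Q_L P_{\mathcal C}$ has eigenvalues $+1$ and $-1$; equivalently, with the convention $Q_L|\Omega_k^{(0)}\rangle = (-1)^k|\Omega_k^{(0)}\rangle$ used earlier, this restriction equals $\text{diag}(1,-1)$ in the codeword basis. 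In either description it is traceless within $\mathcal C$, i.e. $\Tr(P_{\mathcal C} Q_L P_{\mathcal C}) = \langle\Omega_0^{(0)}|Q_L|\Omega_0^{(0)}\rangle + \langle\Omega_1^{(0)}|Q_L|\Omega_1^{(0)}\rangle = 0$.

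Next, since $|\Psi^{(0)}\rangle$ and $|\Psi_\perp^{(0)}\rangle$ are orthonormal and both lie in the two-dimensional codespace, they form an orthonormal basis of $\mathcal C$. Hence the trace of $Q_L$ on $\mathcal C$ can be evaluated in this basis, giving
\be
\langle\Psi^{(0)}|Q_L|\Psi^{(0)}\rangle + \langle\Psi^{(0)}_\perp|Q_L|\Psi^{(0)}_\perp\rangle = \Tr(P_{\mathcal C}Q_LP_{\mathcal C}) = 0,
\ee
which rearranges immediately to the claimed identity $\langle\Psi^{(0)}|Q_L|\Psi^{(0)}\rangle = -\langle\Psi^{(0)}_\perp|Q_L|\Psi^{(0)}_\perp\rangle$.

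An equivalent, more geometric route is to identify $\mathcal C$ with a Bloch sphere: writing $P_{\mathcal C}Q_LP_{\mathcal C} = \vec n \cdot \vec\sigma$ for a real unit vector $\vec n$, the expectation in a state with Bloch vector $\vec r$ is $\vec n \cdot \vec r$, and the state orthogonal to it has antipodal Bloch vector $-\vec r$, yielding the opposite sign. I do not anticipate a genuine obstacle here: the only facts required are that $Q_L$ preserves the codespace and restricts to a traceless involution, both immediate from $Q_L$ being a nontrivial logical Pauli, together with the elementary observation that expectation values summed over a complete orthonormal basis reproduce the trace.
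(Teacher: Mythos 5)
Your proof is correct, but it takes a genuinely different route from the paper's. The paper proceeds by brute force: it writes the orthogonal state explicitly as $|\Psi^{(0)}_\perp\> = \sin\theta\,|\Omega^{(0)}_0\> - e^{i\phi}\cos\theta\,|\Omega^{(0)}_1\>$ and then computes all three expectation values in the codeword basis, finding $\<\Psi^{(0)}_\perp|X_L|\Psi^{(0)}_\perp\> = -\sin 2\theta\cos\phi$, $\<\Psi^{(0)}_\perp|Y_L|\Psi^{(0)}_\perp\> = -\sin 2\theta\sin\phi$, and $\<\Psi^{(0)}_\perp|Z_L|\Psi^{(0)}_\perp\> = -\cos 2\theta$, i.e.\ exactly the negatives of the Bloch-sphere label functions. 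You instead avoid writing $|\Psi^{(0)}_\perp\>$ at all: you note that $Q_L$ preserves the codespace and restricts there to a traceless involution, so summing expectation values over \emph{any} orthonormal basis of $\mathcal C$ reproduces the vanishing trace, which is the claimed sign flip. Your argument is more basis-free and more general (it works for any traceless operator on the two-dimensional codespace and sidesteps phase conventions for the orthogonal state), while the paper's computation has the side benefit of producing the explicit expressions $f_Q^\perp = -f_Q$ that are reused immediately in the proof of Proposition~\ref{thm_optimality}. One small caution: your parenthetical that the restriction equals $\mathrm{diag}(1,-1)$ "with the convention $Q_L|\Omega^{(0)}_k\> = (-1)^k|\Omega^{(0)}_k\>$" can only hold for one of the three logical Paulis at a time, since a single codeword basis cannot diagonalize $X_L$, $Y_L$, and $Z_L$ simultaneously; but this does not create a gap, because your general argument (a nontrivial logical Pauli is a unitary involution on $\mathcal C$ that cannot act as $\pm I$ there, by anticommutation with its conjugate logical partner, hence has eigenvalues $+1$ and $-1$ and is traceless) covers all three cases uniformly.
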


\begin{proof} Assuming that
\be
|\Psi^{(0)}\> = \cos\theta |\Omega_0\>+e^{i\phi}\sin \theta |\Omega_1\>, 
\ee
its orthogonal state (up to a phase factor) must take the form
\be
|\Psi^{(0)}_\perp\> = \sin \theta |\Omega_0\>-e^{i\phi}\cos\theta |\Omega_1\>.
\ee

It is straightforward to verify that
\be
\begin{split}
&\<\Psi^{(0)}_\perp|X_L|\Psi^{(0)}_\perp\> =-\sin2\theta\cos\phi  = -\<\Psi^{(0)}|X_L|\Psi^{(0)}\>,\\
&\<\Psi^{(0)}_\perp|Y_L|\Psi^{(0)}_\perp\> =-\sin2\theta\sin\phi  = -\<\Psi^{(0)}|Y_L|\Psi^{(0)}\>,\\
&\<\Psi^{(0)}_\perp|Z_L|\Psi^{(0)}_\perp\> =-\cos 2\theta = -\<\Psi^{(0)}|Z_L|\Psi^{(0)}\>.
\end{split}
\ee
Thus, we arrive at the statement of the Lemma.
\end{proof}\\

\begin{proof} \textbf{of Proposition~\ref{thm_optimality}}. 
The generalization error of decoding with QNN can be rewritten as
\be\label{eq:loss_function_rewrite}
\varepsilon_Q(Q_{\text{QNN}}) =\mathbb E_\alpha  \int d \theta d \phi \mu(\theta,\phi) \Bigl(\Tr( Z_0 U_Q P_\alpha|\Psi\>\<\Psi|P_\alpha U_Q^\dag)-\<\Psi^{(0)}|Q_L|\Psi^{(0)}\>\Bigl)^2.
\ee
For simplicity, let us introduce the following notations for label values
\begin{align} \label{app:condition}
    &f_Q(\theta,\phi): = \bra{\Psi^{(0)}} Q_L \ket{\Psi^{(0)}}, 
    &f_Q^\perp(\theta,\phi) := \bra{\Psi^{(0)}_\perp} Q_L \ket{\Psi^{(0)}_\perp} = -f_Q, 
\end{align}
where we use the results of Lemma~\ref{lem:binary_state}. Note that \(f_Q(\theta,\phi)\) can be \(\pm1\) according to the condition of the Proposition.

Recall that the input states $\ket{\Psi}$ and $\ket{\Psi_{\perp}}$ are constructed from the codewords, with coefficients $\psi_k:= \psi_k(\theta,\phi)$ and $\psi^\perp_{k}:= \psi^\perp_{k}(\theta,\phi)$,
\be
|\Psi\> = \sum_{k} \psi_k|\Omega_k\>, \qquad |\Psi_\perp\> = \sum_{k} \psi^\perp_k|\Omega_k\>,
\ee
where the coefficients satisfy orthogonality condition $\sum_{k} \psi_k \psi^\perp_{k} = 0$.
Then we introduce
\begin{align} 
    &F_\alpha(\theta,\phi): = \Tr( Z_0 U_Q P_\alpha|\Psi\rangle\langle\Psi|P_\alpha U_Q^\dag), 
    &F_\alpha^\perp(\theta,\phi) := \Tr( Z_0 U_Q P_\alpha|\Psi_\perp\rangle\langle\Psi_\perp|P_\alpha U_Q^\dag). 
\end{align} 
Using these notations and note that $f_Q (\theta,\phi)^2 = 1$, we rewrite
\be
\begin{split}\label{eq:loss_function_rewrite_1}
\varepsilon_Q(Q_{\text{QNN}}) & =\mathbb{E}_\alpha  \int d\theta d\phi \mu(\theta,\phi) \left(1-f_Q(\theta,\phi)F_\alpha(\theta,\phi)\right)^2 \\
&=\frac{1}{2}\mathbb{E}_\alpha\mathbb{E}_\beta \left( \int d\theta d\phi \mu(\theta,\phi) \left(1-f_Q(\theta,\phi)F_\alpha(\theta,\phi)\right)^2
+  \int d\theta d\phi \mu(\theta,\phi+\pi) \left(1-f_Q(\theta,\phi)F_\beta(\theta,\phi)\right)^2\right)\\
&=\frac{1}{2}\mathbb{E}_\alpha\mathbb{E}_\beta \int d\theta d\phi \mu(\theta,\phi)\left[ \left(1-f_Q(\theta,\phi)F_\alpha(\theta,\phi)\right)^2
+  \left(1-f^\perp_Q(\theta,\phi)F^\perp_\beta(\theta,\phi)\right)^2\right]\\
&\geq\frac{1}{4}\mathbb{E}_\alpha\mathbb{E}_\beta \int d\theta d\phi \mu(\theta,\phi)\left(2-f_Q(\theta,\phi)F_\alpha(\theta,\phi)-f^\perp_Q(\theta,\phi)F^\perp_\beta(\theta,\phi)\right)^2\\
&=\mathbb{E}_\alpha\mathbb{E}_\beta \int d\theta d\phi \mu(\theta,\phi)\left(1-\frac{1}{2}f_Q(\theta,\phi)\left[F_\alpha(\theta,\phi)-F^\perp_\beta(\theta,\phi)\right]\right)^2,
\end{split}
\ee
where we used the symmetry property of the input state distribution \(\mu(\theta,\phi) =\mu(\theta,\phi+\pi)\) and the inequality \(a^2+b^2\geq \frac{1}{2} (a+b)^2\). Next, let us focus on the quantity
\begin{align}
    \delta F := F_\alpha(\theta,\phi)-F^\perp_\beta(\theta,\phi) \equiv \Tr U_Q^\dag Z_0 U_Q \left(P_\alpha|\Psi\rangle\langle\Psi|P_\alpha- P_\beta|\Psi_\perp\rangle \langle\Psi_\perp|P_\beta \right). 
\end{align}
Since the operator \(U_Q^\dag Z_0 U_Q\) has eigenvalues \(\pm 1\), its operator norm is \(\|U_Q^\dag Z_0 U_Q\| = 1\). Then, by definition of the trace distance \(D(\rho,\sigma)\) between states \(\rho\) and \(\sigma\), we have
\begin{equation}
|\delta F| \leq 2D\left(P_\alpha|\Psi\rangle\langle\Psi|P_\alpha, P_\beta|\Psi_\perp\rangle\langle\Psi_\perp|P_\beta \right).
\end{equation}

Then, the generalized error is bounded as
\begin{align}
    \varepsilon_Q(Q_{\textup{QNN}}) \geq \mathbb E_\alpha\mathbb E_\beta\int d\theta d\phi 
    \mu(\theta,\phi) \Bigl(1-D\bigl(P_\alpha|\Psi\>\<\Psi|P_\alpha, P_\beta|\Psi_\perp\>\<\Psi_\perp|P_\beta \bigl)\Bigl)^2.
\end{align}
For any two pure states $|\Psi_1\>$ and $|\Psi_2\>$, the trace distance satisfies
\be 
D\bigl(|\Psi_1\>\<\Psi_1|,|\Psi_2\>\<\Psi_2|\bigl) \leq \sqrt{1-|\<\Psi_1|\Psi_2\>|^2}\leq 1-\frac 12|\<\Psi_1|\Psi_2\>|^2.
\ee
Using this inequality, we have
\be \label{app:psi_in_basis}
\varepsilon_Q (Q_{\textup{QNN}}) \geq \frac{1}{4} \mathbb E_\alpha\mathbb E_\beta \int d\theta d\phi \mu(\theta,\phi) |\<\Psi|P_\alpha P_\beta|\Psi_\perp\>|^4.
\ee
 Using this representation, we can rewrite
\be \label{app:overlap}
\<\Psi|P_\alpha P_\beta|\Psi_\perp\> = \sum_{kl} \psi^*_k \psi^\perp_{l} \bra{\Omega_k} P_\alpha P_\beta\ket{\Omega_{l}}.
\ee
Now we can expand the codewords in terms of their BW series (Eq.~\eqref{app:pert_code_series}),
\be
\label{app:kn_error_states}
\bra{\Omega_k} P_\alpha P_\beta \ket{\Omega_{l}} = \sum_{k'l'}\frac{u^*_{kk'}u_{ll'}}{\mathcal{N}^2} \sum_{m,m'=0}^\infty\bra{\Omega^{(0)}_{k'}}(\lambda V \tilde{G}_0)^m P_\alpha P_\beta (\lambda\tilde{G}_0 V)^{m'}\ket{\Omega^{(0)}_{l'}} + O(\lambda^{d}).
\ee

In the summation over $i,i'$ in Eq.~\eqref{app:kn_error_states}, for terms with $m+m'< \lceil (d-2p)/k\rceil$, we can apply the Knill-Laflamme condition (Eq.~\eqref{appeqn:KL condition}),
\be
\label{app:kn_sum}
\sum_{\substack{m+m'< \\ \lceil (d-2p)/k\rceil}} \bra{\Omega^{(0)}_{k'}}(\lambda V \tilde{G}_0)^m P_\alpha P_\beta(\lambda\tilde{G}_0 V)^{m'}\ket{\Omega^{(0)}_{l'}} = \sum_{\substack{m+m'< \\ \lceil (d-2p)/k\rceil}}\epsilon^{mm'}_{\alpha\beta}\delta_{k'l'}:= \delta_{k'l'} \tilde{\epsilon}_{\alpha\beta},
\ee
where $\epsilon^{mm'}_{\alpha\beta}$ can be different for different $m,m'$. Recall that $u$ is a unitary matrix, we have
\begin{align}
    \bra{\Omega_k} P_\alpha P_\beta \ket{\Omega_{l}} &= \frac{1}{\mathcal{N}^2}  \left( \tilde{\epsilon}_{\alpha\beta}\delta_{kl} + \lambda^{\lceil (d-2p)/k\rceil} \left(u^\dag T_{\alpha \beta} u \right)_{kl}\right) + O(\lambda^{\lceil (d-2p+1)/k\rceil}).
\end{align}
Thus, for any unitary $u$ we have
\be
\begin{split}
\mathbb E_\alpha\mathbb E_\beta \int d\theta d\phi \mu(\theta,\phi)|\<\Psi|P_\alpha P_\beta|\Psi_\perp\>|^4 &= \mathbb E_\alpha\mathbb E_\beta \int d\theta d\phi\,  \mu(\theta,\phi) |\<\psi_k(\theta,\phi)|u^\dag \tilde T_{\alpha\beta} u |\psi_l(\theta,\phi+\pi)\>|^4 \\
&= \mathbb E_\alpha\mathbb E_\beta \int d\theta d\phi\,  \mu(\theta+\theta_0,\phi+\phi_0) |\<\psi_k(\theta,\phi)| \tilde T_{\alpha\beta} |\psi_l(\theta,\phi+\pi)\>|^4>0.
\end{split}
\ee
The last inequality follows from the assumption in Eq.~\eqref{eqs:theorem3_assumption}.
Therefore, we have 
\begin{align}
     \mathbb E_\alpha\mathbb E_\beta \int d\theta d\phi \mu(\theta,\phi)|\<\Psi|P_\alpha P_\beta|\Psi_\perp\>|^4 = \Theta(\lambda^{4\lceil (d-2p)/k\rceil}).
\end{align}
Here $\Omega(\cdot)$ is a big-Omega notation that show that asymptotically at $\lambda\to 0$ the function is both upper bounded and lower bounded by the funcion in Theta. Insetring this result in Eq.~\eqref{app:psi_in_basis}, we get
\be \label{app:thm3_lower_bound}
\varepsilon_Q (Q_{\textup{QNN}}) = \Omega(\lambda^{4\lceil (d-2p)/k\rceil}).
\ee
Finally, using Theorem~\ref{formal_thm_qnn}, we arrive at the statement of the Proposition. 
\end{proof}

\begin{rem}
The assumption in Eq.~\eqref{eqs:theorem3_assumption} serves as a generality condition for the chosen input states and perturbation $V$. As we had shown in Lemma~\ref{lem2}, the matrix $T_{\alpha\beta}$ is not identity, at least for some values of $V$ and noise operators. Therefore, the l.h.s. of Eq.~\eqref{eqs:theorem3_assumption} vanishes in this case only for a fine-tuned choice of input states. More generally, there is no particular reason to believe that this expression vanishes for generic parameters of the problem.
\end{rem}

\begin{rem}
    For noiseless input $(p=0)$, the QNN considered in Theorem~\ref{formal_thm_qnn} can decode input states with arbitrary accuracy, provided that there is no constraint on its architecture.
\end{rem}

\section{Details of numerical simulations}
\label{app:QNN}

In this section, we will delve into the details of the numerical results that are presented in the main text. First, we introduce the stabilizers of the unperturbed codes used in the numerical simulations, as well as the possible errors considered in these codes. Then we outline the numerical procedures for selecting perturbations and obtaining codewords from the perturbed Hamiltonian, as described by Eq.~\eqref{eqn:hamiltonian}. For Fig.~\ref{fig:QEC_panel}, we provide insight into the construction of the ``error-corrected'' logical operators, as expressed in Eq.~\eqref{eqn:logical_tilde}, and elaborate on how we obtain the expectation values. Moving on to Fig.~\ref{fig:QNN_panel}, we outline the steps for training the QNN as specified in Eq.~\eqref{eqn:log-network}-\eqref{eqn:Xbar}, and provide details on how to estimate the generalization error in QNNs. We conclude this section by discussing the effect of QNN depth on the results and by considering possible alternative architectures.

\subsection{Unperturbed codes}
In this subsection, we list the stabilizers $S_a$ for the unperturbed stabilizer codes used in the main text, as well possibe Pauli errors that can occur in these codes. 

\subsubsection{[[5,1,3]] code}
The stabilizers of this code are \cite{laflamme1996perfect}
\begin{align*}
    S_1 &= XZZXI, \\
    S_2 &= IXZZX, \\
    S_3 &= XIXZZ, \\
    S_4 &= ZXIXZ.
\end{align*}
There are $3\times 5 =15$ single-qubit errors, which correspond to a single-Pauli error, $X, Y$ or $Z$ on one of the 5 qubits.

\subsubsection{[[7,1,3]] (Steane) code}
The stabilizers of this code are \cite{steane1996error}
\begin{align*}
    S_1 &= IIIXXXX, \\
    S_2 &= IXXIIXX, \\
    S_3 &= XIXIXIX, \\
    S_4 &= IIIZZZZ, \\
    S_5 &= IZZIIZZ, \\
    S_6 &= ZIZIZIZ.
\end{align*}
There are $3 \times 7 = 21$ single-qubit errors, which correspond to a single-Pauli error, $X, Y$ or $Z$ on one of the 7 qubits.

\subsubsection{[[9,1,3]] (Shor) code}
The stabilizers of this code are \cite{shor1995scheme}
\begin{align*}
    S_1 &= ZZIIIIIII, \\
    S_2 &= IZZIIIIII, \\
    S_3 &= IIIZZIIII, \\
    S_4 &= IIIIZZIII, \\
    S_5 &= IIIIIIZZI, \\
    S_6 &= IIIIIIIZZ, \\
    S_7 &= XXXXXXIII, \\
    S_8 &= IIIXXXXXX. 
\end{align*}
There are $3 \times 9 = 27$ single-qubit errors, which correspond to a single-Pauli error, $X, Y$ or $Z$ on one of the 9 qubits.

\subsubsection{[[11,1,5]] code}
The stabilizers are \cite{Grassl:codetables}
\begin{align*}
    S_1 &= XZIZIXIZZII, \\
    S_2 &= IYIZZYIIZZI, \\
    S_3 &= IZXIZXIIIZZ, \\
    S_4 &= IZZYIYZIIZI, \\
    S_5 &= IIZZXXZZIZZ, \\
    S_6 &= IIZZIIYZZIY, \\
    S_7 &= IZIZZZZYIZY, \\
    S_8 &= IIIZIZIZXZX, \\
    S_9 &= IZIZIIZIZXX, \\
    S_{10} &= ZZZZZZIIIII.
\end{align*}
There are $11 \times 3 = 33$ single-qubit errors, which correspond to a single-Pauli error, $X, Y$ or $Z$ on one of the 11 qubits, and $3^2 \times \binom{11}{2} = 495$ two-qubit errors, which correspond to two-Pauli errors on two of the 11 qubits. 

\subsection{QEC simulations}
Given codewords constructed as linear combination of the two lowest fully-perturbed eigenstates $\ket{\psi_k}$,
\be    
\label{app:omega_psi}
\ket{\Omega_k} = \sum_{k'\in \{0,1\}} \alpha_{kk'} \ket{\psi_k'},
\ee
where $\alpha_{kk'}$ is a unitary matrix that we can choose to optimize standard QEC performance. We use the following parameterization:
\be
\label{app:alpha_matrix}
\alpha  =
\begin{pmatrix}
e^{it_0}\cos{t_1} & e^{i(t_0+t_2)}\sin{t_1}\\
e^{i(t_0+t_3)} \sin(t_1) & -e^{i(t_0+t_2+t_3)}\cos(t_1)
\end{pmatrix}.
\ee
Ignoring the global phase, we set $t_0 = 0$. We then choose $t_1$ and $t_2$ by maximizing $\bra{\Omega_0}Z_L\ket{\Omega_0}$, and $t_3$ by maximizing $\bra{\Omega_+} X_L \ket{\Omega_+}$, where $\ket{\Omega_\pm} := (\ket{\Omega_0} \pm \ket{\Omega_1})/{\sqrt{2}}$. \\

Once $\alpha$ is fixed, we build $P$ samples
\be 
\label{app:psi_sample}
\left\{\ket{\Psi_\mu} = \cos(\theta_\mu)\ket{\Omega_0} + \sin(\theta_\mu)e^{i\phi_\mu}\ket{\Omega_1}\right\}_{\mu=1}^P,
\ee
where $\theta_\mu \in [0,\pi]$ and $\phi_\mu\in[0,2\pi]$ are chosen i.i.d. from corresponding uniform distributions. This provides an estimate for the integral with the constant measure $\mu(\theta,\phi) = 1/2\pi^2$.

\subsubsection{Details for Fig.~\ref{fig:QEC_panel}}

In all cases, we estimate the decoding loss $\varepsilon_Q$ on $S=10^3$ samples of the input state $\ket{\Psi_\mu}$ (Eq.~\eqref{app:psi_sample}). The results are then averaged over $10^5$ different realizations of the single-qubit perturbation, $V = \sum_{i=1}^n V_i$, where random correction $V_i$ is acting on qubit $i$, sampled from $2\times2$ GUE, and normalized such that $||V_i||=1$. 

In Fig.~\ref{fig:QEC_panel}(a), after specifying $f_Q(\theta,\phi)$ for each $\ket{\Psi}$ as defined in Eq.~\eqref{eqn:psi} (e.g., $f_X(\theta,\phi) = \sin 2\theta \cos\phi$, $f_Y(\theta,\phi)= \sin 2\theta \sin \phi$, $f_Z(\theta,\phi)=\cos 2\theta)$, we get the exact expectation of the ideal code logical operator by evaluating $\bra{\Psi}Q_L\ket{\Psi}$, where $Q_L \in \{X_L, Y_L, Z_L\}$.

In Fig.~\ref{fig:QEC_panel}(b), we numerically obtain the expectation value of effective error-corrected logical operator $\tilde{Q}_L$, where 
\be
\tilde{Q}_L := \sum_\mu F_\mu^{\dagger} Q_L F_\mu, \qquad F_\mu = C_\mu \prod_\alpha \frac 12\left(1+s_{\alpha\mu}S_\alpha\right) =  P C_\mu,
\ee
where $s_{\alpha\mu}=\pm1$ represents the $\mu$-th error syndrome obtained from measuring projection onto the $a$-th stabilizer subspace,
$C_{\mu}$ represents the single-qubit (5, 7, 9, 11-qubit codes) or two-qubit (11-qubit code) error-correction corresponding to syndrome $s_{\alpha\mu}$, and $P=\sum_{k=0,1}\ket{\Omega^{(0)}_k}\bra{\Omega^{(0)}_k}$ is the projector to the codespace. 

In Fig.~\ref{fig:QEC_panel}(c), for each sample state $\ket{\Psi_\mu}$ (Eq.~\eqref{app:psi_sample}), we first randomly apply an error (single-qubit Pauli error for 5, 7, or 9-qubit code, single or two-qubit Pauli error for 11-qubit code) to it, then obtain the expectation value of $\tilde{Q}_L$.

\subsection{QNN-QEC simulations}

We train the QNN with $N_{T} \sim 10^3$ training sample states as defined in Eq.~\eqref{app:psi_sample}. In some numerical simulations (see below) we use exact states, in the other we apply to some of the states single- or two-qubit Pauli errors. After training, we test our QNN on $N_{V} = 10\times N_T$ samples withheld from the training but independently drawn from the same distribution. The factor of 10 is chosen to ensure that validation set provides statistically significant results. If the error is present in the training set, the states in the validation set are chosen to contain errors as well. For each validation sample state $\ket{\Psi_{\mu \alpha}} := P_\alpha \ket{\Psi_\mu}$, we obtain the expectation value of measuring $Z$ operator at the output qubit, by computing 
\be
\langle Q_{\text{QNN}} \rangle_{\mu\alpha} := \bra{\Psi_{\mu \alpha}} U^\dag_Q Z_0 U_Q \ket{\Psi_{\mu\alpha}}.
\ee
Then we use the result to estimate the generalization error by performing average over the $N_V$ validation sample states,
\be
\varepsilon_Q (Q_{\text{QNN}}) = \frac{1}{MN_V} \sum_{\alpha=1}^{M} \sum_{\mu=1}^{N_V/M} \left( \langle Q_{\text{QNN}} \rangle_{\mu\alpha} - \bra{\Psi^{(0)}_\mu} Q_L \ket{\Psi^{(0)}_\mu} \right)^2.
\ee

\subsubsection{Details for Fig.~\ref{fig:QNN_panel}}
In this section, we present details of the QNN simulation in the main text Fig.\ref{fig:QNN_panel}. Specializing the general architecture in Fig.3(a) to 5-qubit code, we use the 5-qubit QNN architecture in Fig.~\ref{appfig:QNN_5qubit}. In Fig.\ref{fig:QNN_panel}(b)-(d), we have $N_T=1600$ training examples. After training, as described in the previous section, we validate on $N_V$ different validation samples, independently drawn from the same distribution. 

Each two-qubit unitary $U_k$ (Eq.~\eqref{eqn:log-network}) contains $4^2-1=15$ independent parameters. We fix the overall non-measurable angle by choosing $\phi_{00}^{k}=1$ and allow the other 15 angles $\phi_{\alpha \beta}^k$ to be free parameters to be optimized. For the QNN shown in Fig.~\ref{appfig:QNN_5qubit}, there are a total of $(4C+1)\times 15$ parameters. For the result presented in Fig.~\ref{fig:QNN_panel}, we chose $d_C=4$, which corresponds to $255$ parameters. We optimize these parameters with BFGS method using the SciPy \texttt{optimize} library
(maximum number of iterations is $10^5$, tolerance for termination is $10^{-15}$). Note that it is more common to optimize QNN with Stochastic Gradient Descent (SGD), where the gradients are estimated by methods such as finite difference. We find that, however, optimizing with SGD tends to yield larger decoding error, incompatible with the accuracy required by the comparison with perturbation theory. 

In Fig.\ref{fig:QNN_panel}(b)-(c), all the $N_T$ training samples and $N_V$ validation samples are unaffected by errors. In Fig.\ref{fig:QNN_panel}(d), all the samples are noisy: we randomly apply one of the $M=16$ possible single-qubit Pauli errors (including the identity) to each of the sample states in both training and validation set. 

\subsubsection{The effect of QNN depth}
In this section, we discuss the rationale for choosing the architecture presented in Fig.\ref{fig:QNN_panel}(a), as well as the effect of different depths $d_C$ for the QNN. The motivation for calling $C_Q$ the `error-correction circuit' is as follows. We find that for a shallow network, e.g., $d_C=1$, even for ideal stabilizer code ($\lambda=0$), the QNN fails to learn noisy samples (Fig.~\ref{appfig:QNN_5qubit}(b)). However, as the depth $d_C$ increases, the QNN generalization erorr decreases and $\langle X_{\text{QNN}} \rangle$ gradually approaches $f_X$ (Fig.~\ref{appfig:QNN_5qubit}(c)-(d)), and there appears to be a critical depth where the scatter plot between $\langle X_{\text{QNN}} \rangle$ and $f_X$ suddenly falls on the diagonal (QNN expectation value matches the expected measurement value $f_X$). We observe that at $d_C=4$ the scatter plot falls on the diagonal ($\langle X_{\text{QNN}} \rangle \approx f_X$) (Fig.~\ref{appfig:QNN_5qubit}(e)). Therefore, $C_Q$ has the spirit of performing error-correction on the noisy samples, but we emphasis that $C_Q$ is not really performing the error-correction equivalent to the standarad QEC. For $d_C=4$, the typical optimization time is around one hour on a intel i7 @ 1.90GHz 4 Cores CPU (without parallelization). 

\begin{figure*}[t!]
    \centering
        \includegraphics[width=1.\textwidth]{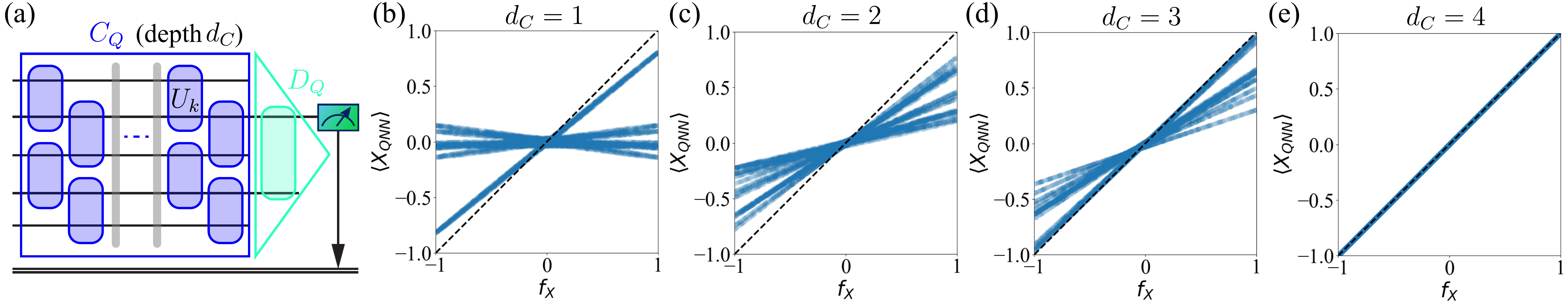}
        \caption{\textbf{5-qubit QNN.} (a) The `error-correction circuit' $C_Q$ has depth $d_C$. The `decoding circuit' $D_Q$ consitst of a single 2-qubit unitary. (b)-(e) The decoding performance improves as depth $d_C$ increases.}
\label{appfig:QNN_5qubit}
\end{figure*}
Next, we compare the generalization error across different circuit depths $d_C$ for the architecture in Fig.~\ref{appfig:QNN_5qubit}. We also present a potentially different architecture with less number of parameters than the one in Fig.\ref{fig:QNN_panel}(a). We consider the same task as in Fig.\ref{fig:QNN_panel}(c) (noiseless inputs). In Fig.~\ref{appfig:QNN_archi}, we consider different depths $d_C=2,3,4$, and find the generalization performance improves as depth increases. Next, we consider a Translationally Invariant (Trans. Inv.) architecture where all the 2-qubit unitaries $U_k$ within the same layer of $C_Q$ (same layer of circuit, see the grey blockcade in Fig.~\ref{appfig:QNN_5qubit}, which includes two layers of qubits) share the same parameters $\phi_{\alpha \beta}^k$. Note that this architecture contains less parameters, e.g., $d_C=4$ (Trans. Inv.      ) has 75 parameters, less than the 255 parameters in $d_C=4$. We find that the generalization performance of this architecture is worse than that of the $d_C=4$ architecture we used in Fig.~\ref{appfig:QNN_5qubit} (grey curve in Fig.~\ref{appfig:QNN_archi}). It is worth noticing that all the different depths and architectures considered in Fig.~\ref{appfig:QNN_5qubit} has roughly the same constant scaling with $\lambda$, suggesting a robust scaling behavior regardless of the specific architecture used. 
\begin{figure*}[t!]
    \centering
        \includegraphics[width=0.4\textwidth]{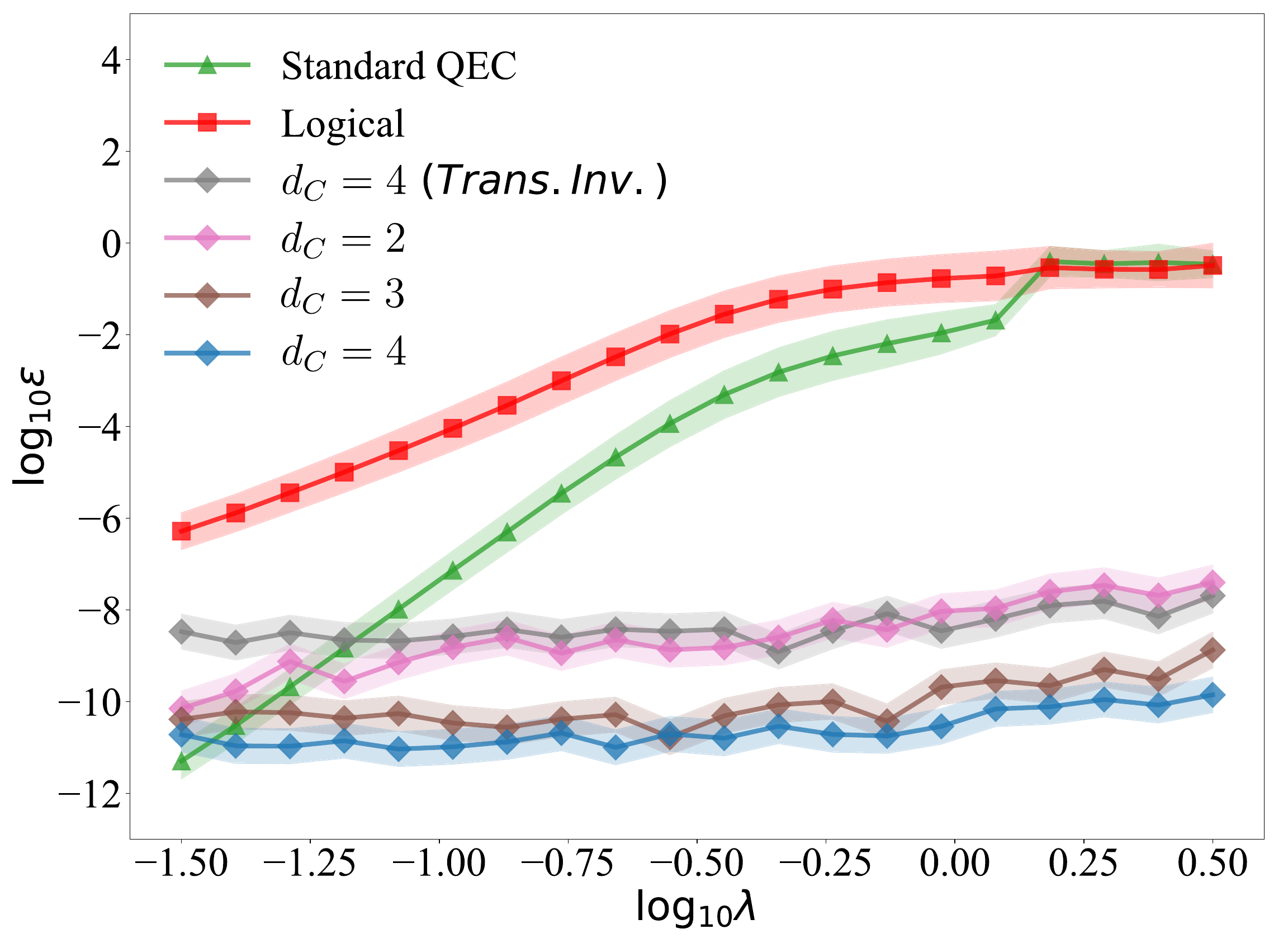}
        \caption{\textbf{Different QNN architectures.} We consider the same task as in Fig.\ref{fig:QNN_panel}(c). $d_C=2,3,4$ corresponds to independent unitaries $U_k$ in Fig.~\ref{appfig:QNN_5qubit}, $d_C=4$ (Trans. Inv.) correspond to same $U_k$ within the same layer. For reference, we reproduce the decoding performance of the standard decoding $\varepsilon_X(Q_{QEC})$ (green curve) and $\varepsilon_X(Q_L)$ (blue curve) as in Fig.\ref{fig:QNN_panel}(c).}
\label{appfig:QNN_archi}
\end{figure*}

\end{document}